\newif\iffull
\title{Parameterized $k$-Clustering: The distance matters!}
\author{
Fedor V. Fomin\thanks{
Department of Informatics, University of Bergen, Norway.} \addtocounter{footnote}{-1}
\and
Petr A. Golovach\footnotemark{} \addtocounter{footnote}{-1}
\and 
Kirill Simonov\footnotemark{}
}
\author{Fedor V. Fomin}{Department of Informatics, University of Bergen, Norway}{Fedor.Fomin@uib.no}{}{}
\author{Petr A. Golovach}{Department of Informatics, University of Bergen, Norway}{Petr.Golovach@uib.no}{}{}
\author{Kirill Simonov}{Department of Informatics, University of Bergen, Norway}{Kirill.Simonov@uib.no}{}{}
\authorrunning{F.\,V. Fomin, P.\,A. Golovach, K. Simonov}
\keywords{clustering, parameterized complexity, k-means, k-median}
\newcommand{\specialcell}[2][c]{%
  \begin{tabular}[#1]{@{}c@{}}#2\end{tabular}}
\DeclareMathOperator{\operatorClassNP}{{\sf NP}}
\newcommand{\classNP}{\ensuremath{\operatorClassNP}}
\DeclareMathOperator{\operatorClassFPT}{{\sf FPT}\xspace}
\newcommand{\classFPT}{\ensuremath{\operatorClassFPT}\xspace}
\DeclareMathOperator{\operatorClassW}{{\sf W}}
\newcommand{\classW}[1]{\ensuremath{\operatorClassW[#1]}}
\newcommand{\dist}{\operatorname{dist}}
\newcommand{\distp}{\operatorname{dist}_p}
\newcommand{\distone}{\operatorname{dist}_1}
\newcommand{\disttwo}{\operatorname{dist}_2}
\newcommand{\distzero}{\operatorname{dist}_0}
\newcommand{\distinfty}{\operatorname{dist}_\infty}
\newcommand{\Oh}{\mathcal{O}}
\theoremstyle{plain}
\declaretheorem[name=Theorem]{theorem}
\newtheorem{lemma}[theorem]{Lemma}
\theoremstyle{definition}
\newtheorem{definition}[theorem]{Definition}
\theoremstyle{remark}
\newtheorem{claim}{Claim}[section]
\newtheorem{proposition}{Proposition}
\newtheorem{observation}{Observation}
\newcommand{\yes}{{yes}}
\newcommand{\no}{{no}}
\newcommand{\yesinstance}{\yes-instance\xspace}
\newcommand{\noinstance}{\no-instance\xspace}
\newcommand{\pname}{\textsc}
\newcommand{\ProblemFormat}[1]{\pname{#1}}
\newcommand{\ProblemIndex}[1]{\index{problem!\ProblemFormat{#1}}}
\newcommand{\ProblemName}[1]{\ProblemFormat{#1}\ProblemIndex{#1}{}\xspace}
\newcommand{\probClust}{\ProblemName{$k$-Clustering}}
\newcommand{\probClustSelect}{\ProblemName{Cluster Selection}}
\newcommand{\probClique}{\ProblemName{Clique}}
\newcommand{\probMultiClique}{\ProblemName{Multicolored Clique}}
\newcommand{\probOCT}{\ProblemName{Odd Cycle Transversal}}
\newcommand{\probHIOCT}{\ProblemName{Half-Integral Odd Cycle Transversal}}
\newcommand{\probSAT}{\ProblemName{3-SAT}}
\newcommand{\probColoring}{\ProblemName{$k$-Coloring}}
\newlength{\RoundedBoxWidth}
\newsavebox{\GrayRoundedBox}
\newenvironment{GrayBox}[1]%
   {\setlength{\RoundedBoxWidth}{.93\textwidth}
    \def\boxheading{#1}
    \begin{lrbox}{\GrayRoundedBox}
       \begin{minipage}{\RoundedBoxWidth}}%
   {   \end{minipage}
    \end{lrbox}
    \begin{center}
    \begin{tikzpicture}%
       \node(Text)[draw=black!20,fill=white,rounded corners,%
             inner sep=2ex,text width=\RoundedBoxWidth]%
             {\usebox{\GrayRoundedBox}};
        \coordinate(x) at (current bounding box.north west);
        \node [draw=white,rectangle,inner sep=3pt,anchor=north west,fill=white] 
        at ($(x)+(6pt,.75em)$) {\boxheading};
    \end{tikzpicture}
    \end{center}}     
\newenvironment{defproblemx}[2][]{\noindent\ignorespaces%
                                \FrameSep=6pt%
                                \parindent=0pt%
                \vspace*{-1.5em}
                \ifthenelse{\isempty{#1}}{%
                  \begin{GrayBox}{#2}%                
                }{%
                  \begin{GrayBox}{#2 parameterized by~{#1}}%  
                }
                \newcommand\Input{Input:}%                        
                \begin{tabular*}{\textwidth}{@{\hspace{.1em}} >{\itshape} p{1.8cm} p{0.8\textwidth} @{}}%        
            }{
                \end{tabular*}%
                \end{GrayBox}%
                \ignorespacesafterend
            }
\newcommand{\defproblema}[3]{% FJR Version
  \begin{defproblemx}{#1}
    Input:  & #2 \\
    Task: & #3
  \end{defproblemx}
}%
\let\oldnl\nl% Store \nl in \oldnl
\newcommand{\nonl}{\renewcommand{\nl}{\let\nl\oldnl}}% Remove line number for one line
\begin{document}

\date{\today}

\maketitle

%!TEX root = Integer_clustering.tex

\begin{abstract}
We consider the \probClust problem, which is for 
 a given   multiset of $n$ vectors  $X\subset \mathbb{Z}^d$ and a nonnegative number $D$,  to decide whether  $X$ can be partitioned  into $k$ clusters 
 $C_1, \dots, C_k$ such that  the cost 
 \[
 \sum_{i=1}^k \min_{c_i\in \mathbb{R}^d}\sum_{x \in C_i} \|x-c_i\|_p^p \leq D, 
 \] 
where $\|\cdot\|_p$ is the Minkowski ($L_p$) norm of order $p$. 
For $p=1$, \probClust  is the well-known    \textsc{$k$-Median}. 
   For $p=2$, the case of the Euclidean   distance, \probClust  is   \textsc{$k$-Means}.
We show that the  parameterized complexity of \probClust strongly depends on the distance order $p$. In particular, we prove that for every $p\in (0,1]$, \probClust is solvable in time
 $2^{\Oh(D \log{D})} (nd)^{\Oh(1)}$, and hence is fixed-parameter tractable when parameterized by $D$.
 On the other hand, we prove that for distances of orders  $p=0$ and $p=\infty$, no such algorithm exists, unless $\classFPT=\classW1$.

\end{abstract}
\section{Introduction}
 Recall that for $p>0$, the \emph{Minkowski} or \emph{$L_p$-norm} of a vector $x=(x[1], \dots, x[d])\in\mathbb{R}^d$ is defined as 
\[\|x\|_p=\big(\sum_{i=1}^d |x[i]|^p\big)^{1/p}.
\]
Respectively, we define the \emph{($L_p$-norm) distance} between two vectors $x=(x[1], \dots, x[d])$ and $y=(y[1], \dots, y[d])$ as
\[\distp(x,y)=\|x-y\|_p^p=\sum_{i=1}^d|x[i]-y[i]|^p.\] 
 We also consider $\distp$ for $p=0$ and $p=\infty$. For $p=0$, $\distp$ is $L_0$ (or the Hamming) distance, that is the number of different coordinates in $x$ and $y$:  
\[
\distzero(x,y)= |\{i \in \{1,\dots, d\} ~|~ x[i] \ne y[i]\}|.
\]
For $p=\infty$, $\distp$ is $L_\infty$-distance, which is defined as 
\[
\distinfty(x,y)=\max_{i\in \{1, \dots, d\}} |x[i] - y[i]|.
\]

The \probClust problem is defined as follows. For a given   (multi) dataset of $n$ vectors (points)  $X\subset \mathbb{Z}^d$, the task is to find a partition of $X$ into $k$ clusters 
 $C_1, \dots, C_k$ minimizing  the cost 
 \[
 \sum_{i=1}^k \min_{c_i\in \mathbb{R}^d}\sum_{x \in C_i} \distp(x,c_i). 
 \]

In particular, for $p=1$, $\distp$ is the $L_1$-distance and the corresponding clustering problem is known as \textsc{$k$-Median}. (Often in the literature, \textsc{$k$-Median} is also used for  clustering  minimizing the sums of the Euclidean distances.)
 For $p=2$, $\distp$ is the $L_2$ (Euclidean) distance, and then the clustering problem becomes  \textsc{$k$-Means}. 

Let us note that optimal clusterings for the same set of vectors can be drastically different for various values of $p$, as shown in Figure \ref{fig:p1clusterings}. The main conceptual contribution of this paper is that the complexity of \probClust also strongly depends on the choice of $p$. 

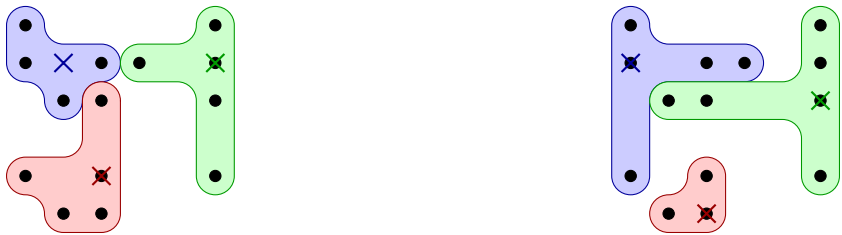
\begin{figure}[ht]
    \centering
    \begin{subfigure}{.5\textwidth}
        \centering
        \begin{tikzpicture}[auto, scale=0.5, node distance=2cm,every loop/.style={},
            base/.style={draw, fill},
            color1/.style={base, circle, inner sep=1.5pt},
            color2/.style={base, rectangle, inner sep=2.1pt, color=blue},
            color3/.style={base, diamond, inner sep=1.35pt, color=green},
            color4/.style={base, regular polygon, regular polygon sides=3, inner sep=0.9pt, color=brown},%, yshift=-1.5pt},
            color5/.style={base, star, star points=10, inner sep=1.1pt, color=magenta},
            cross/.style={base, cross out, inner sep=3pt, thick}
        ]
            \filldraw[fill=blue!20!white, draw=blue!60!black]
            (1, 6.5) arc (90:0:0.5) arc (180:270:0.5)-- (3, 5.5) arc (90:-90:0.5) arc (90:180:0.5) arc(0:-180:0.5) arc(0:90:0.5) arc(-90:-180:0.5) -- (0.5, 6) arc(-180:-270:0.5) --cycle;
            \filldraw[fill=green!20!white, draw=green!60!black]
            (3.5, 5) arc(-180:-90:0.5) --(5, 4.5) arc(90:0:0.5)--(5.5, 2) arc (-180:0:0.5) -- (6.5, 6) arc(0:180:0.5) arc(0:-90:0.5) -- (4, 5.5) arc(90:180:0.5)-- cycle;
            \filldraw[fill=red!20!white, draw=red!60!black]
            (3.5, 4) -- (3.5, 1) arc(0:-90:0.5) -- (2, 0.5) arc(-90:-180:0.5) arc(0:90:0.5) arc(-90:-270:0.5) -- (2, 2.5) arc(-90:0:0.5) -- (2.5, 4) arc(-180:-270:0.5) arc(90:0:0.5) --cycle;
            \node[color1] at (1, 2) (1) {};
            \node[color1] at (6, 4) (2) {};
            \node[color1] at (3, 2) (3) {};
            \node[color1] at (6, 6) (4) {};
            \node[color1] at (4, 5) (5) {};
            \node[color1] at (3, 1) (6) {};
            \node[color1] at (2, 1) (7) {};
            \node[color1] at (1, 5) (8) {};
            \node[color1] at (6, 2) (9) {};
            \node[color1] at (1, 6) (10) {};
            \node[color1] at (3, 4) (11) {};
            \node[color1] at (2, 4) (12) {};
            \node[color1] at (6, 5) (13) {};
            \node[color1] at (3, 5) (14) {};
            \node[cross, color=blue!60!black] at (2, 5) (c1) {};
            \node[cross, color=red!60!black] at (3, 2) (c2) {};
            \node[cross, color=green!60!black] at (6, 5) (c3) {};
        \end{tikzpicture}
    \end{subfigure}\hfill
    \begin{subfigure}{.5\textwidth}
        \centering
        \begin{tikzpicture}[auto, scale=0.5, node distance=2cm,every loop/.style={},
            base/.style={draw, fill},
            color1/.style={base, circle, inner sep=1.5pt},
            color2/.style={base, rectangle, inner sep=2.1pt, color=blue},
            color3/.style={base, diamond, inner sep=1.35pt, color=green},
            color4/.style={base, regular polygon, regular polygon sides=3, inner sep=0.9pt, color=brown},%, yshift=-1.5pt},
            color5/.style={base, star, star points=10, inner sep=1.1pt, color=magenta},
            cross/.style={base, cross out, inner sep=3pt, thick}
        ]
            \filldraw[fill=blue!20!white, draw=blue!60!black]
            (1, 6.5) arc (90:0:0.5) arc (180:270:0.5)-- (4, 5.5) arc (90:-90:0.5) -- (2, 4.5) arc (90:180:0.5) -- (1.5, 2) arc (0:-180:0.5) -- (0.5, 6) arc(-180:-270:0.5) --cycle;
            \filldraw[fill=green!20!white, draw=green!60!black]
            (1.5, 4) arc(-180:-90:0.5) --(5, 3.5) arc(90:0:0.5)--(5.5, 2) arc (-180:0:0.5) -- (6.5, 6) arc(0:180:0.5) -- (5.5, 5)arc(0:-90:0.5) -- (2, 4.5) arc(90:180:0.5)-- cycle;
            \filldraw[fill=red!20!white, draw=red!60!black]
            (3.5, 2) -- (3.5, 1) arc(0:-90:0.5) -- (2, 0.5) arc(-90:-270:0.5) arc(-90:0:0.5) arc(-180:-270:0.5) arc(90:0:0.5) --cycle;
            \node[color1] at (1, 2) (1) {};
            \node[color1] at (6, 4) (2) {};
            \node[color1] at (3, 2) (3) {};
            \node[color1] at (6, 6) (4) {};
            \node[color1] at (4, 5) (5) {};
            \node[color1] at (3, 1) (6) {};
            \node[color1] at (2, 1) (7) {};
            \node[color1] at (1, 5) (8) {};
            \node[color1] at (6, 2) (9) {};
            \node[color1] at (1, 6) (10) {};
            \node[color1] at (3, 4) (11) {};
            \node[color1] at (2, 4) (12) {};
            \node[color1] at (6, 5) (13) {};
            \node[color1] at (3, 5) (14) {};
            \node[cross, color=blue!60!black] at (1, 5) (c1) {};
            \node[cross, color=red!60!black] at (3, 1) (c2) {};
            \node[cross, color=green!60!black] at (6, 4) (c3) {};
        \end{tikzpicture}
    \end{subfigure}
    \caption{Optimal clusterings of the same set of vectors with different distances: $\distone$ in the left subfigure, $\dist_{1/4}$ in the right subfigure. Shapes denote clusters, crosses denote cluster centroids.}
    \label{fig:p1clusterings}
\end{figure}
%We also consider $\distp$ for $p=0$ and $p=\infty$. For $p=0$, $\distp$ is $L_0$ (or Hamming) distance, that is the number of different coordinates in $x$ and $y$  
%\[
%\distzero(x,y)= |\{i \in \{1,\dots, d\} ~|~ x[i] \ne y[i]\}|.
%\]
%For $p=\infty$, $\distp$ is $L_\infty$-distance, which is defined as 
%\[
%\distinfty(x,y)=\max_{i\in \{1, \dots, d\}} |x[i] - y[i]|.
%\]

\probClust, and especially  \textsc{$k$-Median} and \textsc{$k$-Means}, are among the most prevalent problems occurring   in virtually every subarea of data science. We refer to the survey of  Jain \cite{jain2010data} for an extensive overview. 
While in practice the most common approaches to clustering are based on different variations of  Lloyd's heuristic \cite{Lloyd82}, the problem is interesting from the theoretical perspective as well. In particular, there is a vast amount of literature  on approximation algorithms for \probClust   whose behavior can be analyzed rigorously, see e.g.    \cite{AckermannBS10,DBLP:journals/jacm/AgarwalHV04,DBLP:conf/stoc/BadoiuHI02,BoutsidisZMD15,cohen2015dimensionality,DBLP:conf/stoc/FeldmanL11,DBLP:conf/soda/FeldmanSS13,har2004coresets,KumarSS10,delaVegaKKR03,KolliopoulosR07,Cohen-Addad18,SohlerW18}. 

When it comes to  exact solutions,  the complexity  of  %Despite of  the significant progress  in the study of practical and theoretical aspects  
\probClust  is less understood. The  \probClust  problem  is naturally ``multivariate'': in addition to the input size $n$, there are also parameters like space dimension $d$,   number of clusters $k$ or the  cost of clustering $D$.
The problem is known to be \classNP-complete
  for $k = 2$ \cite{AloiseDHP09,Feige14b} and for $d=2$  
 \cite{MegiddoS84,MahajanNV09}. By the  classical work of Inaba et al.~\cite{inaba1994applications},  in the case when both $d$ and $k$ are constants, % are fixed by 
% the classical work of Inaba et al.~\cite{inaba1994applications}
 \probClust is solvable in polynomial  time $\Oh(n^{dk +1})$. 
 Under ETH, the lower bound of $n^{\Omega(k)}$, even when $d = 4$, was shown by Cohen-Addad et al. in~\cite{Cohen-AddadMRR18}  for the
 settings where the set of potential candidate centers is explicitly given as input. However the lower bound of Cohen-Addad et al.  does not generalize to the settings of this paper when any point in Euclidean space can serve as a center.
   For the special case, when the input consists of binary vectors and  the  distance is Hamming, the problem is solvable in time  $2^{\Oh(D \log D)} (nd)^{\Oh(1)}$  \cite{FominGP18}.
%On the other hand, the problem remains \classNP-complete   for $k = 2$ \cite{AloiseDHP09,Feige14b} and for $d=2$  
 %\cite{MegiddoS84,MahajanNV09}. %Thus the problem is solvable in polynomial time 

\medskip\noindent\textbf{Our results and approaches.} In this paper we investigate  the dependence of the complexity of \probClust from the cost of clustering $D$.  It appears, that adding this new ``dimension'' makes the complexity landscape of  \probClust  intricate and interesting. More precisely, we consider the following problem. 

\medskip
\defproblema{\probClust with distance $\dist$}%
{A multiset $X$ of $n$ vectors in $\mathbb{Z}^d$, a positive integer $k$,  and a nonnegative number $D$.}%
{Decide whether there is a partition of $X$ into $k$ clusters $\{C_i\}_{i=1}^k$ and $k$ vectors $\{c_i\}_{i=1}^k$, called  \emph{centroids},  in $\mathbb{R}^d$ such that
    \[\sum_{i=1}^k \sum_{x \in C_i} \dist(x, c_i) \le D.\]
}

 Let us remark that vector set $X$ (like the column set of a matrix) can contain many equal  vectors. Also we consider the situation when vectors from $X$ are integer vectors, while  centroid vectors are not necessarily from $X$.  Moreover, coordinates of centroids can be reals. 

 \medskip
 Our main algorithmic result is the following theorem. 
 \begin{restatable}{theorem}{lmainalgorithmic} \label{thm:lmain_algorithmic}   \probClust with distance $\distp$ 
   is solvable in time
   $2^{\Oh(D \log D)} (nd)^{\Oh(1)}$  for every $p\in (0,1]$.
\end{restatable}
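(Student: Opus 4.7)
\textbf{Structural normalization.} The plan is to reduce the $L_p$ problem on integer points to an enumeration of clusterings whose Hamming cost is small. Fix a cluster $C\subseteq \mathbb{Z}^d$. For each coordinate $j$ the univariate function
\[
f_{C,j}(t)=\sum_{x\in C}|x[j]-t|^p
\]
is a sum of concave functions on every interval between consecutive values of $\{x[j]:x\in C\}$, because $t\mapsto |t|^p$ is concave on $[0,\infty)$ whenever $p\in(0,1]$. Hence its minimum is attained at one of these input values, so we may restrict to integer centroids whose every coordinate is a value already present in the cluster. In particular every summand $|x[j]-c_i[j]|^p$ is either $0$ or, since the difference is a nonzero integer, at least $1$.

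\textbf{Hamming domination.} The previous observation gives $\distzero(x,c_i)\le \distp(x,c_i)$ pointwise, so any clustering of $L_p$-cost at most $D$ has Hamming cost at most $D$ as well. Consequently the optimal $L_p$-clustering, provided its cost is at most $D$, has at most $D$ exceptional pairs $(x,j)$ with $x[j]\ne c_i[j]$, distributed over at most $D$ clusters of positive cost. The remaining clusters are ``pure'' and consist of identical copies of a single input vector.

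\textbf{Branching and evaluation.} We adapt the branching framework of~\cite{FominGP18} for Hamming $k$-Clustering. After collapsing identical input vectors (so the pure clusters of zero cost are handled up front), we repeatedly identify an exceptional pair $(x,j)$ and branch over which cluster owns $x$ and which integer value the centroid carries at coordinate $j$; by Step~1 this value ranges over at most $O(D)$ candidates from the input. Each branch commits at least one unit of Hamming cost, so the search tree has depth $O(D)$ and branching factor $O(D)$, giving $D^{O(D)}=2^{O(D\log D)}$ leaves. At every leaf a concrete clustering is fixed, and we compute its \emph{true} $L_p$-cost in $(nd)^{O(1)}$ time, returning the leaf of minimum cost. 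By Hamming domination the optimal $L_p$-clustering appears among the leaves, establishing correctness.

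\textbf{Main obstacle.} The subtle point is not the asymptotics of the branching, which mirrors the Hamming case, but the fact that the branching criterion (the Hamming budget) differs from the objective ($L_p$). The argument survives only because each exception costs at least one unit under both metrics, so the budget-driven depth bound still applies. A second difficulty is that for $p<1$ the coordinate-wise optimal centroid need not be the plurality value---a single distant point can pull the optimum---which blocks any frequency-based shortcut and is the reason the enumeration must keep every candidate value permitted by Step~1. Together, these two observations are what makes the distance order $p\in(0,1]$ amenable to the technique and simultaneously explain why the argument cannot be pushed to $p=0$ or $p=\infty$, matching the hardness side of our results.
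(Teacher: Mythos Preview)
Your structural observations in Steps~1 and~2 are correct and match the paper: the coordinate-wise concavity argument forcing integral centroids is the paper's Claim~\ref{claim:presentvalues}, and the Hamming domination $\distzero(x,c)\le\distp(x,c)$ is exactly what gives the $\alpha$-property with $\alpha=1$. The gap is entirely in Step~3.

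The branching you describe is not well-defined. To ``identify an exceptional pair $(x,j)$'' you must already know the centroid of the cluster containing $x$; to branch ``over which cluster owns $x$'' you must already have a set of clusters to branch over. Neither is available. Your claim that the centroid value at coordinate $j$ ranges over $O(D)$ candidates is true only once the cluster membership is fixed (at most $D+1$ distinct values can appear in any single coordinate of a cluster of cost $\le D$), but fixing the membership is precisely what the branching is meant to accomplish, so the argument is circular. Without a concrete rule that, from the raw input, produces a branching point whose every child commits one unit of budget, you have no depth bound.

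More substantively, the paper explicitly remarks that the technique of~\cite{FominGP18} depends on the binary alphabet and does not transfer; with unbounded integer coordinates there is no majority-style local rule to anchor a bounded search tree. The paper's route is genuinely different from what you sketch: it first uses color coding (Theorem~\ref{thm:genfpt}) to reduce \probClust to the auxiliary \probClustSelect problem (form a single composite cluster by choosing one vector from each of $t\le 2D$ bags), and then solves \probClustSelect by fixing a reference vector $x_1$, encoding the set of coordinates where the unknown centroid deviates from $x_1$ as a hypergraph, proving via Claim~\ref{claim:lphalfcovering} that this hypergraph has fractional edge-cover number at most~$2$, and enumerating all its occurrences with Marx's lemma. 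Both the color-coding reduction and the fractional-cover bound are doing essential work that your sketch does not supply; in particular Claim~\ref{claim:lphalfcovering} (if at least half the weight agrees on a value, the centroid takes that value) is the property of $p\le 1$ that actually drives the algorithm, and it does not appear in your argument.
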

 
 %\todo[inline]{Theorem numbering}

Thus     \probClust  when parameterized by $D$  is fixed-parameter tractable (\classFPT) for Minkowski distance $\distp$ of order $0<p\leq 1$.  Superficially, the general idea of the proof of Theorem~\ref{thm:lmain_algorithmic} is similar to the idea behind the algorithm for \textsc{Binary $r$-Means} for $L_0$ from 
\cite{FominGP18}.  However there are several   differences;  the main   is that the proof in  \cite{FominGP18} is crucially based on the fact that the clustering is performed on binary vectors. Thus the reductions from  \cite{FominGP18} cannot be applied in our case. 
Moreover, as we will see in Theorem~\ref{thm:l0dDhard1}, the existence of an \classFPT algorithm for  \probClust in $L_0$ is highly unlikely. 

In the first step of our algorithm we use color coding to reduce solution of the problem to the    \probClustSelect problem,  which we find interesting on its own. In   \probClustSelect we have $t$ groups of weighted vectors and the task is to select exactly one vector from each group such that the weighted cost of the composite cluster is at most $D$. More formally, 

\medskip

\defproblema{\probClustSelect with distance $\dist$}%
{A set of $m$ vectors $X$ given together with a partition $X=X_1\cup \cdots \cup X_t$ into  $t$ disjoint sets, a weight function $w : X \to \mathbb{Z}_+$, and a nonnegative number $D$.}%
{Decide whether it is possible to select exactly  one vector $x_i$ from each set $X_i$ such that the total cost of the composite cluster formed by  $x_1$, \dots, $x_t$ is at most $D$:
\[\min_{c \in \mathbb{R}^d} \sum_{i=1}^t w(x_i) \cdot \dist(x_i, c) \le D.\]
%    and return the solution $C = \{J_1, \dots, J_t\}$ if it exists.
}

 Informally (see Theorem~\ref{thm:genfpt} for the precise statement), our reduction shows that if the distance norm satisfies some specific properties (which $\distp$ satisfies for all $p$) and if   \probClustSelect is \classFPT parameterized by $D$, then so is \probClust.  
 Therefore,  in order to prove Theorem~\ref{thm:lmain_algorithmic}, all we need is to show that 
\probClustSelect is \classFPT parameterized by $D$ when $p\in (0,1]$. This is the most difficult part of the proof. Here we invoke  the theorem of Marx \cite{Marx08} on the number of subhypergraphs   in hypergraphs of  bounded fractional edge cover.  

Interestingly,   Theorem~\ref{thm:lmain_algorithmic} does not hold for distance $\distzero$. More precisely, for clustering in $L_0$ we prove the following theorem.
\begin{restatable}{theorem}{lzerohard} \label{thm:l0dDhard1}
  %  In the $L_0$ distance,
   With distance $\distzero$,  
     \probClust parameterized by $d+D$ and \probClustSelect parameterized by $d+t+D$  are \classW1-hard.% Moreover,  assuming ETH, there is no $n^{o(d + D^{1/3})}$ algorithm for \probClust and no $n^{o(d + t^{1/2} + D^{1/3})}$ algorithm for \probClustSelect.
\end{restatable}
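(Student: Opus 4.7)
My plan is to prove both \classW{1}-hardness results by FPT reductions from \probMultiClique, which is \classW{1}-hard parameterized by the clique size $k$.

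For \probClustSelect parameterized by $d + t + D$, given a \probMultiClique instance with color classes $V_1, \ldots, V_k$ and edge set $E$, I set $t = \binom{k}{2}$ groups (one per unordered color pair $\{i, j\}$), work in dimension $d = k(k - 1)$ with coordinates indexed by ordered pairs $(c, \ell)$ of distinct colors, and take target $D = k(k - 1)\binom{k - 1}{2}$. For each edge $e = \{u, v\}$ with $u \in V_i$, $v \in V_j$, and $i < j$, I place a unit-weight vector $x_e$ in group $X_{\{i, j\}}$ whose entry at $(i, \ell)$ is $\operatorname{id}(u)$ for every $\ell \ne i$, whose entry at $(j, \ell)$ is $\operatorname{id}(v)$ for every $\ell \ne j$, and whose entry at $(c, \ell)$ is a globally unique ``junk'' integer for every $c \notin \{i, j\}$. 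The correctness argument is a mode-counting exercise: for any selection of one vector per group, coordinate $(c, \ell)$ receives exactly $k - 1$ id-values (from the selected edges incident to $V_c$) together with $\binom{k-1}{2}$ pairwise distinct junk values. A $k$-clique makes all $k - 1$ id-values at each $(c, \ell)$ equal to the id of the clique's $V_c$-vertex, so the mode frequency is $k - 1$ and the cost at that coordinate is $\binom{k-1}{2}$, summing over the $k(k-1)$ coordinates to exactly $D$. Any non-clique selection forces some color $c^*$ to have disagreeing $V_{c^*}$-endpoints, capping the mode frequency at $k - 2$ at each of the $k - 1$ coordinates $(c^*, \cdot)$ and pushing the total strictly above $D$. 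Since $d$, $t$, and $D$ are all polynomial in $k$, this is an FPT reduction.

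For \probClust parameterized by $d + D$, I would reuse the edge-indexed vectors and add gadgetry to bridge from ``select one per group'' to ``partition all vectors''. The plan is to append $|E|$ fresh ``identity'' coordinates, one per edge vector, where each vector carries a large distinct value in its own identity coordinate and zero elsewhere; this makes every non-singleton cluster of size $r$ pay exactly $r$ on the identity block, so fixing the number of clusters to $k' = |E| - \binom{k}{2} + 1$ forces the optimum to consist of one ``main'' cluster of size $\binom{k}{2}$ alongside $|E| - \binom{k}{2}$ singletons. The most delicate step will be to guarantee that this main cluster must in fact be a one-per-group selection: the obstacle is that an unbalanced distribution (say two edges from one group and none from another) might exploit high-concurrence vertices of $G$ to boost mode frequencies artificially and match the clique cost. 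To rule this out I plan to add group-indicator coordinates that impose an extra per-coordinate penalty whenever a group is over- or under-represented in the main cluster, and to pad $G$ with auxiliary low-degree vertices so that lopsided configurations cannot meet the target. With these gadgets in place, the optimal main-cluster cost on the original coordinates equals the \probClustSelect cost from the first reduction, so the \probClust instance is a \yesinstance iff $G$ contains a $k$-clique, and the parameters $d$ and $D$ remain polynomial in $k$.
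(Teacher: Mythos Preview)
Your \probClustSelect reduction is correct. It differs from the paper's only cosmetically: the paper uses dimension $d=k$ (one coordinate per color), whereas you use $d=k(k-1)$ with the second index $\ell$ playing no role---for every fixed $c$ the coordinates $(c,\ell)$ carry identical id-value information, so you have simply made $k-1$ copies of the paper's construction. This is wasteful but harmless, since $d$, $t$, and $D$ are still polynomial in $k$.

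Your \probClust plan, however, has a fatal flaw: you propose to append $|E|$ fresh identity coordinates, one per edge vector. This makes the dimension $d$ depend on $|E(G)|$, i.e.\ on the input size rather than on $k$ alone. The resulting map is therefore \emph{not} an FPT reduction for parameter $d+D$, and the hardness conclusion does not follow. The subsequent ``group-indicator coordinates'' and ``pad $G$ with auxiliary vertices'' only make this worse (more coordinates, or at best do nothing to fix the blow-up), and in any case are too vague to constitute a proof.

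The paper avoids this trap by staying in dimension $d=k$ throughout. Instead of separate identity/indicator coordinates, it places \emph{unique padding values} in the $k-2$ ``unused'' coordinates of each edge vector $x_{i,j,e}$. This simultaneously plays the role of your identity block (making non-singleton clusters expensive) \emph{and} removes the need for a one-per-group gadget. The price is a genuinely delicate combinatorial lemma: one must show that for any cluster $C$ with $2\le|C|\le\binom{k}{2}$, the quantity $\frac{\beta(C)-2+\gamma(C)}{|C|-1}$ (where $\beta$ counts vertex-valued coordinates and $\gamma$ counts vertex-value mismatches with the centroid) is at least $\frac{2}{k+1}$, with equality only for a $k$-clique. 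This lemma is the real work of the proof, and it is exactly what your gadget-based approach was trying to sidestep---but sidestepping it via extra coordinates is incompatible with $d$ being a parameter.
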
   
In particular, this means that  up to a widely-believed assumption in  complexity that
 $\classFPT\neq \classW1$, Theorem~\ref{thm:l0dDhard1}   rules out    algorithms solving \probClust 
in time $f(d,D)\cdot n^{\Oh(1)}$ and   algorithms solving 
   \probClustSelect   in  $L_0$  in time $g(t,d,D)\cdot n^{\Oh(1)}$ for any functions $f(d,D)$ and $ g(t,d,D)$. 
   Similar hardness result holds   for $L_\infty$. 
   \begin{restatable}{theorem}{linftyhard} \label{thm:linftydDhard1}
  %  In the $L_0$ distance,
  With distance $\distinfty$,  
     \probClust parameterized by $ D$ and \probClustSelect parameterized by $ t+D$  are \classW1-hard.% Assuming ETH, there is no $n^{o(D)}$ algorithm for \probClust and no $n^{o(t + D)}$ algorithm for \probClustSelect.

\end{restatable}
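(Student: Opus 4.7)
I would prove both hardness results via a single Fr\'echet-style embedding of the input graph into $\mathbb{Z}^{|V|}$. Let $K$ denote the clique-size parameter of the source W[1]-hard problem ($K$-\textsc{Clique} for \probClust, \textsc{Multicolored Clique} for \probClustSelect). Given $G=(V,E)$, define $x_v\in\mathbb{Z}^{|V|}$ by $x_v[v]=0$, $x_v[u]=2(K-1)$ if $u\neq v$ and $\{u,v\}\in E$, and $x_v[u]=4(K-1)$ if $u\neq v$ and $\{u,v\}\notin E$. Since the underlying vertex distance taking values in $\{0,1,2\}$ (self, edge, non-edge) satisfies the triangle inequality, this is a valid Fr\'echet embedding, and a direct check yields $\distinfty(x_u,x_v)=2(K-1)$ for edges and $4(K-1)$ for non-edges (the maximum being attained at coordinate $u$ or $v$).

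Two simple cost estimates then do the work. Summing the pairwise triangle inequalities $\distinfty(x_i,c)+\distinfty(x_j,c)\ge\distinfty(x_i,x_j)$ over all $\binom{s}{2}$ pairs in a cluster $C$ of size $s$ with $p$ non-edges yields $\mathrm{cost}(C)\ge (K-1)s+\tfrac{2(K-1)p}{s-1}$. When $p=0$ ($C$ induces a clique in $G$), all pairwise distances equal $2(K-1)$, so the smallest enclosing $L_\infty$-ball has radius $K-1$ (half the diameter), and the bound is tight: $\mathrm{cost}(C)=(K-1)s$. For \probClustSelect I reduce from \textsc{Multicolored Clique}: take $t=K$ groups $X_i=\{x_v:v\in V_i\}$ with unit weights and threshold $D=K(K-1)+1$. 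A multicolored clique realizes cost $K(K-1)\le D$; conversely, any selection with cost $\le D$ forces $2p\le 1$, hence $p=0$, i.e.\ a multicolored clique. The parameter $t+D=O(K^2)$ yields W[1]-hardness.

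For \probClust I reduce from $K$-\textsc{Clique}: take all $\{x_v:v\in V\}$, set $r=n-K+1$ clusters, and $D=K(K-1)+1$. A $K$-clique $S$ gives the partition $\{S\}\cup\{\{v\}:v\in V\setminus S\}$ of cost $K(K-1)$. Conversely, in any $r$-partition the total size of non-singleton clusters is at least $n-(r-1)=K$, so summing the per-cluster lower bound gives total cost $\ge K(K-1)$, with equality iff there is exactly one non-singleton cluster, of size $K$, that induces a clique in $G$. Any deviation (two or more non-singleton clusters, whose total size is then $\ge K+1$, or a size-$K$ cluster with some non-edge $p\ge 1$) adds at least $2$ to the cost, via the additive $(K-1)$- and $\tfrac{2(K-1)p}{s-1}$-slack terms respectively. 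Hence cost $\le D$ iff $G$ has a $K$-clique, and the parameter $D=O(K^2)$ yields W[1]-hardness parameterized by $D$.

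The main subtlety is the converse of the \probClust reduction: ensuring that every deviating partition shape exceeds the threshold by at least $2$. This reduces to a short case analysis (one big cluster of size $K$ versus multiple non-singleton clusters), each case handled cleanly by the lower bound above.
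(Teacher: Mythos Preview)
Your argument is correct and takes a genuinely different route from the paper's. The paper encodes $G$ with dimension $|V|+\binom{|V|}{2}-|E|$ and entries in $\{-2,0,2\}$: vertex coordinates distinguish points, and each non-edge gets its own coordinate carrying a $\pm 2$ pair so that any cluster containing a non-edge is forced to pay an extra unit. This yields the tighter parameter $D=K$, and the cluster lower bound (Claim~\ref{claim:linfopt}) is proved by a direct case analysis on the centroid value in the distinguished coordinate of a near-centroid point. You instead use a single Fr\'echet-type embedding in dimension $|V|$ with entries scaled by $2(K-1)$, and derive the cluster lower bound by averaging the pairwise triangle inequalities; tightness for cliques follows from the $L_\infty$ fact that the minimum enclosing radius equals half the diameter. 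Your approach is cleaner and more metric-geometric, at the price of a quadratic parameter $D=K(K-1)+1$, which of course is irrelevant for $\classW{1}$-hardness.

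Two small points to tidy up. First, the assertion that a deviating partition ``adds at least $2$'' via the $(K-1)$ slack is literally true only for $K\ge 3$; for $K=2$ the case $\ell\ge 2$ is simply vacuous since $\sum s_i = K-1+\ell$ together with $s_i\ge 2$ forces $\ell\le K-1$. Second, the bound $\sum s_i = K-1+\ell$ implicitly assumes no empty clusters; you should remark (as the paper does elsewhere) that any empty cluster can be filled from a non-singleton without increasing cost. Neither affects correctness.
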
    

This naturally brings us to the question:  What happens with  \probClust  for  $p \in (1, \infty)$, especially for the Euclidean distance, that is  $p=2$. Unfortunately, we are not able to answer this question when the parameter is $D$ only. However, we can prove that 
\begin{restatable}{theorem}{ltwoseldD}    \label{thm:l2seldD1} %For $p=2$,  
  \probClust  and    \probClustSelect   with distance $\dist_2$   are  \classFPT when  parameterized by $d+D$.
\end{restatable}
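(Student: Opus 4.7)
By Theorem~\ref{thm:genfpt}, whose reduction from \probClust to \probClustSelect applies to $\dist_2$ and does not alter $d$, it suffices to show that \probClustSelect under $\dist_2$ is \classFPT when parameterized by $d+D$. Let $(X_1,\ldots,X_t,w,D)$ be an instance and suppose a selection $\{x_i\in X_i\}$ achieves cost at most $D$. Under $\dist_2$ the cost-minimizing centroid is the weighted mean $c^\star$; since every weight is at least $1$ and $w(x_i)\|x_i-c^\star\|_2^2\le D$, we get $\|x_i-c^\star\|_2\le\sqrt{D}$, so any two selected vectors lie within Euclidean distance $2\sqrt{D}$. My first step is to iterate over the $|X_1|$ candidates for the anchor $x_1^\star$; for each, I discard from every other $X_i$ vectors outside $B(x_1^\star,2\sqrt{D})$, so that each surviving vector lies in $V=B(x_1^\star,2\sqrt{D})\cap\mathbb{Z}^d$ with $|V|=D^{O(d)}$.

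Next, grouping the selected vectors by distinct value and letting $W_j$ be the total selected weight at $v_j$, the standard identity $\mathrm{cost}=W^{-1}\sum_{j<k}W_jW_k\|v_j-v_k\|_2^2$, combined with $\|v_j-v_k\|_2^2\ge 1$ for distinct integer vectors, yields $\mathrm{cost}\ge(W-\max_j W_j)/2$. Denoting by $v_0$ the value of maximum weight $W_0$, this gives $W-W_0\le 2D$ and hence $|S|\le 2D$ for the set of \emph{minority} values. I enumerate the \emph{minority profile}: $v_0\in V$, a set $S\subseteq V\setminus\{v_0\}$ with $|S|\le 2D$, and targets $(W_j)_{j\in S}$ with each $W_j\in\{1,\ldots,2D\}$. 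The total number of profiles is bounded by $D^{O(d)}\cdot|V|^{2D}\cdot(2D)^{2D}=D^{O(dD)}$, a function of $d+D$.

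For a fixed profile, the cost becomes the linear-fractional expression $(A+BW_0)/(W_0+M)$ in the single remaining unknown $W_0$ (with $A$, $B$, $M$ determined by the profile), so ``cost $\le D$'' reduces to a one-sided linear inequality of the form $W_0\le U$ or $W_0\ge L$. Realizability amounts to finding a per-group choice of vector with value in $S\cup\{v_0\}$ such that the minority weight sums equal $(W_j)_{j\in S}$ and the induced $W_0$ satisfies the inequality. Since at most $2D$ groups receive minority assignments (each contributing integer weight at least $1$), I apply color-coding with $2D$ colors (derandomized via a perfect hash family) and run a dynamic program over colors whose state is the partial profile $(W_j^{\text{part}})_{j\in S}\in\{0,\ldots,2D\}^{|S|}$, of size $D^{O(D)}$; at each color step the transition enumerates (group of that color, option) pairs in polynomial time. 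After the DP fixes the minority assignments, all remaining groups must be assignable to $v_0$; the $W_0$-inequality is then checked greedily by computing the minimum or maximum achievable sum of $v_0$-weights over the non-minority groups, according to the direction of the inequality. Combined with the profile enumeration and the outer anchor loop, the total time is $D^{O(dD)}\cdot(nd)^{O(1)}$.

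\textbf{Main obstacle.} The principal difficulty is reconciling arbitrary integer weights with the nominally quadratic cost constraint. The bound $W_j\le 2D$ confines the useful minority weight choices to a parameter-bounded range, keeping both the profile enumeration and the DP state parameter-bounded. Once the minority profile is fixed, the quadratic cost constraint collapses into a single \emph{linear} inequality in the aggregate $W_0$; and because that inequality is one-sided, its feasibility reduces to checking a single extremal achievable value of $W_0$, rather than a subset-sum-style existence test over all attainable $W_0$. Unfixed weights at $v_0$, which could a priori be unbounded, thus never enter the DP dimension, and by Theorem~\ref{thm:genfpt} the algorithm for \probClustSelect lifts to \probClust.
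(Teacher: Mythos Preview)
Your approach is essentially correct but takes a different route from the paper, and it carries some unnecessary machinery and a couple of small omissions.

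\medskip
\textbf{Comparison with the paper.} The paper's argument for \probClustSelect is centroid enumeration: it bounds $t\le 4D+1$, observes at most one selected vector can have weight exceeding $4D$, guesses the total weight $W$ and the heavy vector $x_{j^\star}$, and then enumerates candidate centroids of the form $y/W$ within an $O(D^3)$-box of $x_{j^\star}$ in each coordinate; for each centroid it greedily picks the cheapest vector per group. This gives $D^{O(d)}\cdot (md)^{O(1)}$ time. You instead enumerate \emph{selections}: anchor on $x_1$, restrict to the integer ball $V$ of size $D^{O(d)}$, use the $L_2$ variance identity to cap the total minority weight by $2D$, and then branch over ``minority profiles''. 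Your bound $\mathrm{cost}\ge (W-W_0)/2$ is correct (it follows from $\sum_j W_j^2\le W_0\sum_j W_j=W_0W$), and the linear-fractional reduction of the cost constraint in $W_0$ is valid. What your route buys is that it never needs the explicit rational form of the $L_2$ centroid; what the paper's route buys is a strictly smaller parameter dependence, $D^{O(d)}$ versus your $D^{O(dD)}$.

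\medskip
\textbf{Over-engineering.} In the stated problem the $X_i$ are disjoint \emph{sets} of vectors, so any two selected vectors are distinct points of $\mathbb{Z}^d$. Hence exactly one group can realize the value $v_0$, so $t-1\le 2D$ and $t\le 2D+1$; the whole color-coding/DP layer is unnecessary. Once the anchor is fixed, brute force over $\prod_i |X_i\cap V|\le |V|^{t}\le D^{O(dD)}$ selections already suffices. Your machinery is not wrong for this setting---the forced bijection between minority values and groups means the DP cannot conflate a feasible assignment with an infeasible one---but it adds no power here. If you intended a model where several groups can select the same point, note that your DP as stated tracks only the weight profile, not which groups are minority; the later checks (``remaining groups assignable to $v_0$'' and the greedy $W_0$-extremum) depend on the specific minority set, so you would need to fold the $v_0$-assignability and the running $W_0$-contribution into the DP transitions.

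\medskip
\textbf{Omissions.} Invoking Theorem~\ref{thm:genfpt} requires two ingredients you do not supply: the $\alpha$-property for $\dist_2$ (the paper uses $\alpha=1/4$, since any two distinct integer vectors force one of them to be at squared distance $\ge 1/4$ from the mean) and an explicit set $\mathcal{D}$ of candidate cluster costs (the paper takes $\mathcal{D}=\{z/s^2:1\le s\le n,\ 0\le z\le Ds^2\}$, of polynomial size). Both are easy, but they are genuine hypotheses of the theorem you cite.
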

%\todo[inline]{State Theorem \ref{thm:l2seldD1} about \probClust to match the table?}

Thus in particular, Theorem~\ref{thm:l2seldD1} implies that  \probClust     with distance $\dist_2$
 is \classFPT parameterized by $d+D$. On the other hand, we prove that 
 \begin{restatable}{theorem}{lpselecthard}\label{thm:lpselecthard1}
     \probClustSelect with distance $\dist_p$ is \classW1-hard for every $p \in (1, \infty)$ when parameterized by   $t+D$ .% Moreover, unless ETH fails, there is no $n^{o(t^{1/2} + D^{1/2})}$ algorithm for \probClustSelect.

   \end{restatable}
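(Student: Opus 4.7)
The plan is to reduce from \probMultiClique parameterized by $k$. Given a graph $G$ with color classes $V_1,\dots,V_k$, we build a \probClustSelect instance with $t=\binom{k}{2}$ groups $X_{ij}$, one for every pair $i<j$. The group $X_{ij}$ contains one $0/1$-vector $v_{uw}$ for each edge $uw\in E(G)$ with $u\in V_i$, $w\in V_j$. The ambient coordinates are partitioned into $k$ blocks indexed by the color classes (block $V_i$ has $|V_i|$ coordinates), and $v_{uw}$ has a $1$ at coordinate $u$ of block $V_i$ and a $1$ at coordinate $w$ of block $V_j$ and $0$ everywhere else. All weights equal $1$. Parameters satisfy $t=\binom{k}{2}$ and, as we set $D$ below, $t+D$ will depend only on $k$ (and the fixed $p$).

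Call a selection \emph{consistent} at color $i$ if the $k-1$ selected vectors that are nonzero in block $V_i$ all place their $1$ in the same coordinate $v_i$. A selection consistent at every color corresponds bijectively to a multicolored clique of $G$: the common coordinate $v_i$ identifies the clique vertex from $V_i$, and the presence of $v_{v_iv_j}$ in $X_{ij}$ forces $v_iv_j\in E(G)$. Because the $L_p^p$-cost separates coordinate-wise, the optimal centroid is supported only on coordinates touched by the selected vectors. In a single coordinate with $a$ entries equal to $1$ and $N-a$ equal to $0$, where $N=\binom{k}{2}$, the cost is
\[
f(a)=\min_{c\in\mathbb{R}}\bigl[a(1-c)^p+(N-a)c^p\bigr].
\]
A consistent selection then costs exactly $k\cdot f(k-1)$, while an inconsistent one whose $k-1$ touching vectors split their $1$'s in block $V_i$ into groups of sizes $a_1,\dots,a_r$ (with $r\ge 2$) contributes $f(a_1)+\cdots+f(a_r)$ in that block.

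The heart of the argument is the strict inequality $f(a_1)+\cdots+f(a_r)>f(a_1+\cdots+a_r)$ for $r\ge 2$ and $a_i\ge 1$, ensuring that any inconsistency strictly raises the cost. We plan to derive this from strict concavity of $f$: as the pointwise infimum of the functions $\ell_c(a)=a(1-c)^p+(N-a)c^p$ that are affine in $a$, $f$ is automatically concave, and strict convexity of $t\mapsto|t|^p$ for $p>1$ makes the minimizer $c^*(a)$ unique and a strictly monotone function of $a$ on $(0,N)$, which upgrades concavity to strict concavity; combined with $f(0)=0$ this gives strict subadditivity. Setting $D:=k\cdot f(k-1)$, a number determined by $k$ and $p$ alone, the constructed instance has a feasible selection of cost $\le D$ if and only if $G$ has a multicolored $k$-clique. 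Since $t+D$ is bounded by a function of $k$, this is a parameterized reduction and establishes \classW{1}-hardness.

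The main obstacle is the strict-concavity step for arbitrary $p\in(1,\infty)$, where no closed form for $f$ or $c^*$ is available; the min-of-affine viewpoint yields concavity for free, but strictness must be justified via the uniqueness and non-constancy of $c^*(a)$, which is exactly where $p>1$ enters. A secondary technicality is that $D=kf(k-1)$ need not be rational when $p\notin\{1,2\}$; this is harmless since the problem statement allows $D\in\mathbb{R}_{\ge 0}$, and if a rational threshold is demanded one scales all vectors by a sufficiently large integer and picks any rational strictly inside the positive inconsistency gap $\min_{a+b=k-1,\,a,b\ge 1}(f(a)+f(b)-f(k-1))$, a quantity that again depends only on $k$ and $p$.
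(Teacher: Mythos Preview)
Your reduction is identical to the paper's: groups $X_{ij}$ indexed by color pairs, $0/1$-vectors with a $1$ at each endpoint of an edge, all weights one, and $D=k\cdot f(k-1)$; consistency in every color block is exactly the clique condition. The one place you diverge is in proving the key inequality. The paper computes $c^*(a)$ and $f(a)$ in closed form (Claim~\ref{claim:lpcost}) and then differentiates $f(b)/b$ to show it is strictly decreasing (Claim~\ref{claim:moreones}); you instead observe that $f$ is an infimum of affine functions in $a$, hence concave, and that uniqueness plus non-constancy of the minimizer upgrades this to strict concavity, whence strict subadditivity via $f(0)=0$. Both arguments yield the same conclusion; yours is more conceptual and sidesteps the calculus, while the paper's has the advantage of making the value of $D$ explicit. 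One small inaccuracy: you write that ``no closed form for $f$ or $c^*$ is available,'' but in fact $c^*(a)=a^{1/(p-1)}/\bigl(a^{1/(p-1)}+(N-a)^{1/(p-1)}\bigr)$ and $f(a)=a(N-a)/\bigl(a^{1/(p-1)}+(N-a)^{1/(p-1)}\bigr)^{p-1}$ drop out of the first-order condition; this does not affect your argument.
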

  In particular, Theorem~\ref{thm:lpselecthard1} yields that the approach we used to establish 
  the  tractability (with parameter $D$)  of   \probClust  for  $p =1$ will not work for $p>1$.
   
   We summarize our and previously known algorithmic and hardness results for  \probClust and  \probClustSelect with different distances in Table~\ref{tabl:compl}.
   
  \begin{table}[ht]
%\begin{center}
    \centering
{\small
    \iffull
    \begin{tabular}{|c|c|c|}
    \else
    \begin{edtable}{tabular}{|c|c|c|}
    \fi
\hline
$\distp$   & \probClust  &\probClustSelect\\
\hline
$p=0$ &\specialcell{\classW1-hard param. $d+ D$ [Thm~\ref{thm:l0dDhard1}] \\  \classNP-c for $k = 2$ \cite{Feige14b}}& \classW1-hard param.    $d+ t+ D$  [Thm~\ref{thm:l0dDhard1}] \\
\hline
$0<p\leq 1$ &    
\specialcell{
 $2^{\Oh(D \log D)} (nd)^{\Oh(1)}$ [Thm~\ref{thm:lmain_algorithmic}] \\ \classNP-c for $k = 2$ \cite{Feige14b} \\ \classNP-c for $d = 2$ \cite{MegiddoS84}}
 &  \specialcell{$2^{\Oh(D \log D)} (nd)^{\Oh(1)}$ [Thm~\ref{thm:probClustSelect}] \\ \classW1-hard param. $t+ d$ for $p = 1$ [Thm~\iffull\ref{thm:l1selecthard}\else 12\fi]}\\
\hline
$1<p< +\infty$ &   \specialcell{\classFPT param.  $d+ D$ for $p=2$  [Thm~\ref {thm:l2seldD1}] \\
\classNP-c for $k = 2$ \cite{AloiseDHP09} \\ \classNP-c for $d = 2$ \cite{MahajanNV09}}
 &
 \specialcell{\ \classFPT param.  $d+ D$ for $p=2$  [Thm~\ref {thm:l2seldD1}] \\
 \classW1-hard param. $t+ D$ [Thm~\ref{thm:lpselecthard1}]}\\
\hline
$p=\infty$ &  
\specialcell{\classW1-hard param. $D$ [Thm~\ref{thm:linftydDhard1}]\\  \classNP-c for $k = 2$ [Thm~\iffull\ref{thm:linfoct}\else 15\fi] }&\classW1-hard param. $t+ D$  [Thm~\ref{thm:linftydDhard1}]\\
\hline
\iffull
\end{tabular}
\else
\end{edtable}
\fi
}
\caption{Complexity of \probClust and \probClustSelect. In the table, known \classNP-completeness results are for $p=1$ and $p=2$ only.} \label{tabl:compl}
%\end{center}
\end{table}

%\vspace{-.5cm}
\iffull
 The remaining part of this paper is organized as follows.   Section~\ref{sec:preliminaries} contains preliminaries.
  In  Section~\ref{secfromcluster} we prove Theorem~\ref{thm:genfpt} which provides us with \classFPT Turing reduction from  \probClust to \probClustSelect.  Theorem~\ref{thm:genfpt} appears to be a handy tool to establish tractability of   \probClust. 
 In  Section~\ref{sec:pin01} we collect the results on  clustering with $L_p$-norm for $p\in (0,1]$. 
   In particular, in Subsection~\ref{subsec:FPTD}, we prove Theorem~\ref{thm:lmain_algorithmic}, the main algorithmic result of this work,  stating that when $p \in (0, 1]$, \probClust and \probClustSelect admit FPT algorithms with parameter $D$. In Subsection~\ref{subs:W1clustsel} we complement the algorithmic upper bounds with lower bounds by proving that   \probClustSelect is \classW1-hard when   $p = 1$ and parameter is $t+ d$ (Theorem~\ref{thm:l1selecthard}). 
    In Section~\ref{sec:l0}, we consider the case $p = 0$ and prove 
Theorem~\ref{thm:l0dDhard1} establishing \classW1-hardness of \probClust and  \probClustSelect.
%With distance $\distzero$,  
   %  \probClust parameterized by $d+D$ and \probClustSelect parameterized by $d+t+D$  are \classW1-hard.
Section~\ref{sec:tinfty} is devoted to 
  the case $p = \infty$. Here we  establish two hardness results about \probClust: \classW1-hardness when parameterized by $D$ and \classNP-hardness in the case $k = 2$.
In Section~\ref{sec:pinfty}, we look at the
case $p \in (1, \infty)$, with the particular emphasis on the most commonly used case $p = 2$. We show that when $d+D$ is the parameter, then \probClustSelect and  \probClust in the $L_2$ distance are \classFPT. We also show
that \probClustSelect is \classW1-hard when parameterized by $t + D$ for all $p \in (1, \infty)$.
   We conclude with open problems in Section~\ref{sec:OPEN}. 
\else
In the extended abstract, we provide a full proof of Theorems \ref{thm:lmain_algorithmic} and \ref{thm:probClustSelect}. Proofs of Theorems \ref{thm:l0dDhard1}, \ref{thm:linftydDhard1}, \ref{thm:l2seldD1}, \ref{thm:lpselecthard1}, 12 and 15 can be found in the attached full version of this paper.
\fi

\section{Preliminaries and notation}\label{sec:preliminaries}

%\todo[inline]{add something about the model of computation for $p \notin \{0, 1, 2, \infty\}$?}

\medskip
\noindent\textbf{Cluster notation.}
By a \emph{cluster} we always mean a multiset of vectors in $\mathbb{Z}^d$. For distance $\dist$, the \emph{cost} of a given cluster $C$ is the total distance from all vectors in the cluster to the optimally selected cluster centroid, 
$\min_{c\in \mathbb{R}^d}\sum_{x \in C} \dist(x,c)$.
An \emph{optimal} cluster centroid for a given cluster $C$ is any $c \in \mathbb{R}^d$ minimizing
$\sum_{x \in C} \dist (x,c)$.
For most of the considered distances, we argue that an optimal cluster centroid could always be chosen among selected family of vectors (e.g. integral). Whenever we show this, we only consider optimal cluster centroids of the stated form afterwards.

\medskip\noindent\textbf{Complexity.}
 A \emph{parameterized problem} is a language $Q\subseteq \Sigma^*\times\mathbb{N}$ where $\Sigma^*$ is the set of strings over a finite alphabet $\Sigma$. Respectively, an input  of $Q$ is a pair $(I,k)$ where $I\subseteq \Sigma^*$ and $k\in\mathbb{N}$; $k$ is the \emph{parameter} of  the problem. 
A parameterized problem $Q$ is \emph{fixed-parameter tractable} (\classFPT) if it can be decided whether $(I,k)\in Q$ in time $f(k)\cdot|I|^{\Oh(1)}$ for some function $f$ that depends of the parameter $k$ only. Respectively, the parameterized complexity class \classFPT is composed by  fixed-parameter tractable problems.
The $\operatorClassW$-hierarchy is a collection of computational complexity classes: we omit the technical
definitions here. The following relation is known amongst the classes in the $\operatorClassW$-hierarchy:
$\classFPT=\classW{0}\subseteq \classW{1}\subseteq \classW{2}\subseteq \ldots \subseteq \classW{P}$. It is widely believed that $\classFPT\neq \classW{1}$, and hence if a
problem is hard for the class $\classW{i}$ (for any $i\geq 1$) then it is considered to be fixed-parameter intractable.
We refer to   books \cite{CyganFKLMPPS15,DowneyF13} for the detailed introduction to parameterized complexity.  

\iffull
We also provide conditional lower bounds by making use of  the following complexity hypothesis formulated by  Impagliazzo, Paturi, and Zane   \cite{ImpagliazzoPZ01}.
 
\begin{quote}
\textbf{Exponential Time Hypothesis (ETH)}:  There is a positive real $s$ such that 3-CNF-SAT with $n$ variables and $m$ clauses cannot be solved in time $2^{sn}(n+m)^{\Oh(1)}$.
\end{quote}
 
\medskip
\noindent\textbf{Graphs.}
For  proving \classW1-hardness, we need to consider graphs. Whenever we work with  a graph $G$, we always fix some ordering on the vertices $\pi_V : V(G) \to \{1, \dots, |V(G)|\}$ and on the edges $\pi_E : E(G) \to \{1, \dots, |E(G)|\}$. We drop $\pi_V$ and $\pi_E$ to simplify notation, so when we consider a vertex $v \in V(G)$ or an edge $e \in E(G)$, $v$ and $e$ also denote integers---numbers of $v$ and $e$ according to the orderings $\pi_V$ and $\pi_E$ correspondingly.
\fi

%!TEX root = Integer_clustering.tex
%\section{FPT algorithms parameterized by $D$}
\section{From \probClust   to \probClustSelect}\label{secfromcluster}
In this section we present a general scheme for obtaining an FPT algorithm parameterized by $D$, which is later applied to various distances.

First, we formalize the following intuition: there is no reason to assign equal vectors to different clusters.

\begin{definition}[Initial cluster and regular partition]
    For a multiset of vectors $X$, an inclusion-wise maximal multiset $I \subset X$ such that all vectors in $I$ are equal is called \emph{an initial cluster}.

    We say that a clustering $\{C_1, \dots, C_k\}$ of $X$ is \emph{regular} if for every initial cluster $I$ there is a $i \in \{1, \dots, k\}$ such that $I \subset C_i$.
    \label{def:initreg}
\end{definition}

Now we prove that it suffices to look only for regular solutions.

\begin{proposition}
    Let $(X, k, D)$ be a \yesinstance to \probClust. Then there exists a solution of $(X, k, D)$ which is a regular clustering.
    \label{prop:reg}
\end{proposition}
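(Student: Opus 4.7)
The statement says that merging every initial cluster into a single part is always without loss of generality. The natural proof is an exchange argument: take any feasible clustering and, for each split initial cluster, move all of its copies into whichever part is ``closest'' to the common vector. I will carry this out in three short moves, then address one small bookkeeping issue.

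\textbf{Step 1: set-up.} Start from an arbitrary solution $\{C_1,\ldots,C_k\}$ with chosen centroids $c_1,\ldots,c_k$ witnessing cost $\le D$. If this clustering is already regular we are done, so assume some initial cluster $I$ is split: every $x\in I$ equals a common vector $v$, and the index set $S=\{i : C_i\cap I\neq\emptyset\}$ has $|S|\ge 2$. Write $n_i=|C_i\cap I|$ for $i\in S$, so the contribution of the vectors in $I$ to the total cost is exactly $\sum_{i\in S}n_i\,\dist(v,c_i)$.

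\textbf{Step 2: merge $I$ into one part.} Pick $i^{\star}\in\arg\min_{i\in S}\dist(v,c_i)$ and define a new partition $\{C_1',\ldots,C_k'\}$ by putting all of $I$ into the $i^{\star}$-th part and removing copies of $v$ from the other parts of $S$; leave indices outside $S$ untouched. Keep the same centroids $c_1,\ldots,c_k$. Since the vectors of $X\setminus I$ stay in the same parts and the new contribution of $I$ is $|I|\,\dist(v,c_{i^{\star}})\le\sum_{i\in S}n_i\,\dist(v,c_i)$ by the choice of $i^{\star}$, the total cost of the new assignment does not increase. Re-optimizing the centroids of $\{C_1',\ldots,C_k'\}$ can only make the cost smaller, so $(X,k,D)$ remains satisfied by the new clustering.

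\textbf{Step 3: iterate.} Initial clusters are defined purely from $X$ and do not change when we modify the partition, so applying Step 2 in turn to each split initial cluster yields, after finitely many (at most the number of initial clusters) applications, a regular clustering of cost $\le D$.

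\textbf{Main obstacle.} The one subtle point is that in Step 2 a part $C_i$ with $i\in S\setminus\{i^{\star}\}$ may become empty, which would break the convention that a $k$-partition has $k$ non-empty parts. This is handled by an easy repair: any initial instance with $|X|<k$ is a \noinstance (there are not enough vectors to form $k$ non-empty parts), so at every moment there exists some part of size $\ge 2$, and we may transfer a single vector from it into the empty slot. A singleton part contributes $0$ to the cost and removing a vector from the donor part, followed by centroid re-optimization, never increases its cost. Thus we can always restore $k$ non-empty parts while preserving feasibility, which completes the argument.
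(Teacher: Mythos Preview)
Your Steps 1--3 are correct and amount to the same exchange idea as the paper: keep the centroids from an arbitrary solution fixed and reassign vectors so that equal vectors land together, which can only lower the cost. The paper does this in one shot (send every vector to its globally nearest centroid, breaking ties by index), whereas you iterate over split initial clusters and merge each into the best among the parts it currently meets; both variants work for the same reason.

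There is, however, a small flaw in your ``Main obstacle'' repair. Moving a \emph{single} vector $x$ from a donor part of size $\ge 2$ into an empty slot can destroy regularity: if $x$ has duplicates in $X$, its entire initial cluster currently sits in the donor part (by regularity), and pulling out one copy splits that initial cluster. The easy fix is to move an \emph{entire} initial cluster out of some composite part instead; the cost argument you give still applies verbatim, and regularity is preserved. (If no composite part exists then the number of initial clusters is strictly less than $k$, and a regular clustering into $k$ non-empty blocks is impossible; the paper silently allows empty clusters here and does not address this point at all, so your raising it is already more careful than the original.)
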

\begin{proof}
    Let us assume that the instance $(X, k, D)$ has a solution. There are $k$ clusters $\{C_i\}_{i=1}^k$ and $k$ vectors $\{c_i\}_{i=1}^k$ in $\mathbb{R}^d$ such that
\[\sum_{i=1}^k \sum_{x \in C_i} \dist(x, c_i) \le D.\]
    Note that for every $x \in C_j$, $\dist(x, c_j) \ge \min_{1\leq i\leq k} \dist(x, c_i)$. So if we consider a new clustering $\{C_1', \dots, C_k'\}$ with the same centroids, where $C_j'$ are all vectors from $X$ for which $c_j$ is the closest centroid, the total distance does not increase. If we also break ties in favor of the lower index, then for any initial cluster $I$ the same centroid $c_i$ will be the closest, and all vectors from $I$ will end up in $C_i'$, so $\{C_1', \dots, C_k'\}$ is a regular clustering.
\end{proof}

From now on, we consider only regular solutions.

\begin{definition}[Simple and composite clusters]
    We say that a cluster $C$ is \emph{simple} if it is an initial cluster. Otherwise, the cluster is \emph{composite}.
    \label{def:simple}
\end{definition}

Next we state a property of \probClust with a particular distance, which is required for the algorithm. Intuitively, each unique vector adds at least some constant to the cluster cost. In the subsequent sections we show that the property holds for all distances in our consideration.

\begin{definition}[$\alpha$-property]
    We say that a distance has the \emph{$\alpha$-property} for some $\alpha > 0$ if for any $s$ 
    the cost of any composite cluster which consists of $s$ initial clusters is at least $\alpha (s - 1)$.
    \label{def:alpha}
\end{definition}

The following problem is a key subroutine in our algorithm. In some cases it is solvable trivially, but it presents the main challenge for our main algorithmic result in the $L_1$ distance.

\defproblema{\probClustSelect}%
{Family of $t$ disjoint sets of vectors $X_1, \dots, X_t$, containing $m$ vectors in total, a weight function $w : \cup_{i=1}^t X_i \to \mathbb{Z}_+$, and a nonnegative number $D$}%
{Determine whether it is possible to choose one vector $x_i$ from each set $X_i$ such that the total cost of forming a composite cluster out of $x_1$, \dots, $x_t$ is at most $D$:
\[\min_{c \in \mathbb{R}^d} \sum_{i=1}^t w(x_i) \dist(x_i, c) \le D.\]
%    and return the solution $C = \{J_1, \dots, J_t\}$ if it exists.
}

The intuition to the weight function in the definition of \probClustSelect is that it represents sizes of initial clusters, that is, how many equal vectors are there.

We  also need a procedure  to enumerate all possible optimal cluster costs which are less than $D$. It may not be straightforward since not all distances in our consideration are integer. So we assume that the set of all possible optimal cluster costs which are less than $D$ is also given in the input. Now we are ready to state the result formally.

\begin{theorem}
    Assume that the $\alpha$-property holds, \probClustSelect is solvable in time $\Phi(m, d, t, D)$, where $\Phi$ is a non-decreasing function of its arguments, and we are given the set $\mathcal{D}$ of all possible optimal cluster costs which are at most $D$. Then \probClust is solvable in time
\[2^{\Oh(D \log D)} (nd)^{\Oh(1)} |\mathcal{D}| \Phi(n, d, 2D/\alpha, D).\]
    \label{thm:genfpt}
\end{theorem}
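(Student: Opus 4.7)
The plan is to reduce a hypothetical optimal regular solution to a bounded number of \probClustSelect calls through color coding on the initial clusters of $X$.

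First, by Proposition~\ref{prop:reg} it suffices to search for a regular solution, so every cluster is a union of initial clusters $I_1,\ldots,I_N$ of $X$. I would represent each $I_j$ by a single vector together with the weight $w(I_j)=|I_j|$. Applying the $\alpha$-property to a solution of total cost at most $D$, the number of composite clusters is at most $D/\alpha$ (each contributes at least $\alpha$), and the total number of initial clusters appearing inside composite clusters is at most $r:=\lceil 2D/\alpha\rceil$ (since a composite cluster of $s$ initial clusters costs at least $\alpha(s-1)\ge \alpha s/2$).

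Next, I would color the initial clusters using an $(N,r)$-perfect hash family $\mathcal F$, for instance the Naor--Schulman--Srinivasan construction, of size $2^{\Oh(r)}\log N = 2^{\Oh(D)}(nd)^{\Oh(1)}$, which ensures that for some $h\in\mathcal F$ the initial clusters taking part in the composite clusters of a fixed optimal solution receive pairwise distinct colors. For each such $h$ I would enumerate every set partition $\mathcal P=\{G_1,\ldots,G_{k'}\}$ of a subset of $[r]$ into parts of size at least $2$ with $k'\le D/\alpha$, where each $G_j$ encodes the guess ``colors forming the $j$-th composite cluster''. The number of such partitions is bounded by the Bell number $B_r=2^{\Oh(r\log r)}=2^{\Oh(D\log D)}$, supplying the main combinatorial factor.

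Given a fixed pair $(h,\mathcal P)$ and a group $G_j\in\mathcal P$, define $X_c$ to be the set of weighted representatives of the initial clusters colored $c$ by $h$. I invoke the \probClustSelect algorithm on $(X_c)_{c\in G_j}$, scanning the target cost over the supplied set $\mathcal D$ to obtain the minimum achievable cost $d_j^\star$ of the corresponding composite cluster. Initial clusters not covered by $\mathcal P$ remain as simple clusters of zero cost, and I accept if the resulting total number of clusters is exactly $k$ and $\sum_j d_j^\star\le D$. Correctness follows from the existence of a good $h\in\mathcal F$ combined with the fact that the color-partition induced by the composite clusters of an optimal solution is among those enumerated.

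The step I expect to require the most care is the cost bookkeeping: since centroids are real-valued, the optimal cluster costs need not be integers within $[0,D]$, so enumerating possible targets is not obviously finite; this is exactly what the hypothesis on $\mathcal D$ circumvents. Using $\mathcal D$, each of the $2^{\Oh(D\log D)}(nd)^{\Oh(1)}$ pairs $(h,\mathcal P)$ triggers at most $|\mathcal D|\cdot\Phi(n,d,2D/\alpha,D)$ work per group (invoking monotonicity of $\Phi$ in $t$ to replace $|G_j|$ by $2D/\alpha$), with the extra $D/\alpha$ factor coming from the number of groups absorbed into the $2^{\Oh(D\log D)}$ term, yielding the claimed running time.
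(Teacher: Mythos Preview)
Your proposal is correct and follows essentially the same approach as the paper: bound the number of initial clusters in composite clusters via the $\alpha$-property, color-code those initial clusters, enumerate partitions of the color set, and solve each part with a \probClustSelect call while scanning the cost target over $\mathcal D$. The differences are cosmetic --- you go straight to the derandomized $(N,r)$-perfect hash family where the paper first presents the randomized version, and you phrase the partition count via Bell numbers where the paper uses the cruder $T^T$ bound --- but the structure, correctness argument, and running-time accounting are the same.
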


\begin{proof}
    By the $\alpha$-property, in any solution there are at most $D/\alpha$ composite clusters, since each contains at least two initial clusters. Moreover, there are at most $2D/\alpha$ initial clusters in all composite clusters.

  Thus by Proposition~\ref{prop:reg},  solving \probClust is equivalent to selecting at most $T := \lceil 2D/\alpha \rceil$ initial clusters and grouping them into composite clusters such that the total cost of these clusters is at most $D$.
We design an algorithm which, taking as a subroutine an algorithm for  \probClustSelect, solves \probClust. 
\iffull
The   algorithm is sketched in Figure~\ref{fig:clustering}, an example is shown in Figure~\ref{fig:clusteringdraw}.
\else
An example is shown in Figure~\ref{fig:clusteringdraw}.
\fi

To perform the selection and grouping, our algorithm uses the color coding technique of Alon, Yuster, and Zwick from \cite{AlonYZ95}. %Our algorithm in the whole is given in the Figure \ref{fig:clustering}.
     Consider the input as a family of initial clusters $\mathcal{I}$. We color initial clusters from $\mathcal{I}$ independently and uniformly at random by $T$ colors 1, 2, \dots, $T$. Consider any solution, and the particular set of at most $T$ initial clusters which are included into composite clusters in this solution.
These initial clusters are colored by distinct colors with probability at least $\frac{T!}{T^{T}} \ge e^{-T}$. Now we construct an algorithm for finding a colorful solution.

\newsavebox\startbox
\begin{lrbox}{\startbox}
    \begin{tikzpicture}[auto, scale=0.5, node distance=2cm,every loop/.style={},
            base/.style={draw, fill},
            color1/.style={base, circle, inner sep=1.5pt, color=red},
            color2/.style={base, rectangle, inner sep=2.1pt, color=blue},
            color3/.style={base, diamond, inner sep=1.35pt, color=green},
            color4/.style={base, regular polygon, regular polygon sides=3, inner sep=0.9pt, color=brown},%, yshift=-1.5pt},
            color5/.style={base, star, star points=10, inner sep=1.1pt, color=magenta},
        ]
            \node[color1] at (0, 0) (1) {};
            \node[color2] at (1, 0) (2) {};
            \node[color5] at (2, 2) (3) {};
            \node[color1] at (0, 3) (4) {};
            \node[color4] at (3, 3) (5) {};
            \node[color4] at (6, 0) (6) {};
            \node[color2] at (7, 2) (7) {};
            \node[color3] at (4, 3) (8) {};
            \node[color1] at (3.5, 0) (9) {};
%            \node[color5] at (-2, 1) (10) {};
            \node[color2] at (-1, 5) (11) {};
            \node[color5] at (5, 5) (12) {};
            \draw[draw opacity=0] (0.5, 0) circle (1cm);
            \draw[draw opacity=0] (0, 3) circle (.5cm);
            \draw[draw opacity=0] (3, 2.5) circle (1.5cm);
            \draw[draw opacity=0] (6, 0) circle (.5cm);
            \draw[draw opacity=0] (7, 2) circle (.5cm);
            \draw[draw opacity=0] (3.5, 0) circle (.5cm);
%            \draw[draw opacity=0] (-2, 1) circle (.5cm);
            \draw[draw opacity=0] (-1, 5) circle (.5cm);
            \draw[draw opacity=0] (5, 5) circle (.5cm);
        \end{tikzpicture}
\end{lrbox}
\newsavebox\ponebox
\begin{lrbox}{\ponebox}
        \begin{tikzpicture}[auto, scale=0.5, node distance=2cm,every loop/.style={},
            base/.style={draw, fill},
            color1/.style={base, circle, inner sep=1.5pt, color=red},
            color2/.style={base, rectangle, inner sep=2.1pt, color=blue},
            color3/.style={diamond, inner sep=1.35pt, color=green},
            color4/.style={regular polygon, regular polygon sides=3, inner sep=0.9pt, color=brown},%, yshift=-1.5pt},
            color5/.style={star, star points=10, inner sep=1.1pt, color=magenta},
        ]
            \node[color1] at (0, 0) (1) {};
            \node[color2] at (1, 0) (2) {};
            \node[color5] at (2, 2) (3) {};
            \node[color1] at (0, 3) (4) {};
            \node[color4] at (3, 3) (5) {};
            \node[color4] at (6, 0) (6) {};
            \node[color2] at (7, 2) (7) {};
            \node[color3] at (4, 3) (8) {};
            \node[color1] at (3.5, 0) (9) {};
 %           \node[color5] at (-2, 1) (10) {};
            \node[color2] at (-1, 5) (11) {};
            \node[color5] at (5, 5) (12) {};
            \draw (0.5, 0) circle (1cm);
            \draw[draw opacity=0] (0, 3) circle (.5cm);
            \draw[draw opacity=0] (3, 2.5) circle (1.5cm);
            \draw[draw opacity=0] (6, 0) circle (.5cm);
            \draw[draw opacity=0] (7, 2) circle (.5cm);
            \draw[draw opacity=0] (3.5, 0) circle (.5cm);
%            \draw[draw opacity=0] (-2, 1) circle (.5cm);
            \draw[draw opacity=0] (-1, 5) circle (.5cm);
            \draw[draw opacity=0] (5, 5) circle (.5cm);
        \end{tikzpicture}
\end{lrbox}
\newsavebox\ptwobox
\begin{lrbox}{\ptwobox}
        \begin{tikzpicture}[auto, scale=0.5, node distance=2cm,every loop/.style={},
            base/.style={draw, fill},
            color1/.style={circle, inner sep=1.5pt, color=red},
            color2/.style={rectangle, inner sep=2.1pt, color=blue},
            color3/.style={base, diamond, inner sep=1.35pt, color=green},
            color4/.style={base, regular polygon, regular polygon sides=3, inner sep=0.9pt, color=brown},%, yshift=-1.5pt},
            color5/.style={base, star, star points=10, inner sep=1.1pt, color=magenta},
        ]
            \node[color1] at (0, 0) (1) {};
            \node[color2] at (1, 0) (2) {};
            \node[color5] at (2, 2) (3) {};
            \node[color1] at (0, 3) (4) {};
            \node[color4] at (3, 3) (5) {};
            \node[color4] at (6, 0) (6) {};
            \node[color2] at (7, 2) (7) {};
            \node[color3] at (4, 3) (8) {};
            \node[color1] at (3.5, 0) (9) {};
%            \node[color5] at (-2, 1) (10) {};
            \node[color2] at (-1, 5) (11) {};
            \node[color5] at (5, 5) (12) {};
            \draw[draw opacity=0] (0.5, 0) circle (1cm);
            \draw[draw opacity=0] (0, 3) circle (.5cm);
            \draw (3, 2.5) circle (1.5cm);
            \draw[draw opacity=0] (6, 0) circle (.5cm);
            \draw[draw opacity=0] (7, 2) circle (.5cm);
            \draw[draw opacity=0] (3.5, 0) circle (.5cm);
%            \draw[draw opacity=0] (-2, 1) circle (.5cm);
            \draw[draw opacity=0] (-1, 5) circle (.5cm);
            \draw[draw opacity=0] (5, 5) circle (.5cm);
        \end{tikzpicture}
\end{lrbox}
\newsavebox\clbox
\begin{lrbox}{\clbox}
        \begin{tikzpicture}[auto, scale=0.5, node distance=2cm,every loop/.style={},
            base/.style={draw, fill},
            color1/.style={base, circle, inner sep=1.5pt, color=red},
            color2/.style={base, rectangle, inner sep=2.1pt, color=blue},
            color3/.style={base, diamond, inner sep=1.35pt, color=green},
            color4/.style={base, regular polygon, regular polygon sides=3, inner sep=0.9pt, color=brown},%, yshift=-1.5pt},
            color5/.style={base, star, star points=10, inner sep=1.1pt, color=magenta},
        ]
            \node[color1] at (0, 0) (1) {};
            \node[color2] at (1, 0) (2) {};
            \node[color5] at (2, 2) (3) {};
            \node[color1] at (0, 3) (4) {};
            \node[color4] at (3, 3) (5) {};
            \node[color4] at (6, 0) (6) {};
            \node[color2] at (7, 2) (7) {};
            \node[color3] at (4, 3) (8) {};
            \node[color1] at (3.5, 0) (9) {};
%            \node[color5] at (-2, 1) (10) {};
            \node[color2] at (-1, 5) (11) {};
            \node[color5] at (5, 5) (12) {};
            \draw (0.5, 0) circle (1cm);
            \draw (0, 3) circle (.5cm);
            \draw (3, 2.5) circle (1.5cm);
            \draw (6, 0) circle (.5cm);
            \draw (7, 2) circle (.5cm);
            \draw (3.5, 0) circle (.5cm);
%            \draw (-2, 1) circle (.5cm);
            \draw (-1, 5) circle (.5cm);
            \draw (5, 5) circle (.5cm);
        \end{tikzpicture}
\end{lrbox}
\newsavebox\circlebox
\begin{lrbox}{\circlebox}
 \begin{tikzpicture}\node[draw,fill,circle, inner sep=1.5pt, color=red] {};\end{tikzpicture}   
\end{lrbox}

\begin{figure}[ht]
    \centering
    \begin{tikzpicture}[
            pic/.style={rectangle, rounded corners=10pt, draw, rectangle split, rectangle split parts=2, text centered, minimum width=5.5cm},
            arr/.style={-{Latex[length=3mm]}, line width=0.5mm},
            scale=0.8
        ]
        \node[pic] (start) {
            A random coloring
            \nodepart{second}
            \usebox\startbox
        };
        %\node[pic, below left=.5cm and -2cm of start] (p1) {
        \node[pic, below right=1cm and -3.5cm of start] (p1) {
            \probClustSelect on \begin{tikzpicture}\node[draw,fill,circle, inner sep=1.5pt, color=red, minimum width=0cm] {};\end{tikzpicture}  and \begin{tikzpicture}\node[rounded corners=0pt, draw,fill,rectangle, inner sep=2.1pt, color=blue, minimum width=0cm] {};\end{tikzpicture}
            \nodepart{second}
            \usebox\ponebox
        };
        \node[pic, below right=-3.5cm and 1.5cm of start] (p2) {
\probClustSelect on \begin{tikzpicture}\node[rounded corners=0pt,draw,fill,diamond, inner sep=1.35pt, color=green, minimum width=0cm] {};\end{tikzpicture}, \begin{tikzpicture}\node[rounded corners=0pt,draw,fill, regular polygon, regular polygon sides=3, inner sep=0.9pt, color=brown, minimum width=0cm] {};\end{tikzpicture} and \begin{tikzpicture}\node[rounded corners=0pt,draw,fill,star, star points=10, inner sep=1.1pt, color=magenta, minimum width=0cm] {};\end{tikzpicture}
            \nodepart{second}
            \usebox\ptwobox
        };
        \node[pic, below right=1cm and -3.5cm of p2] (cl) {
            The resulting clustering
            \nodepart{second}
            \usebox\clbox
        };
        \draw[arr] (start) to (p1);
        \draw[arr] (start) to (p2);
        \draw[arr] (p1) to (cl);
        \draw[arr] (p2) to (cl);

    \end{tikzpicture}
    \caption[]{An illustration to the algorithm in Theorem \ref{thm:genfpt}. We start with a particular random coloring and a particular partition of colors $\mathcal{P} = \{P_1, P_2\}$, where $P_1=\{\text{\begin{tikzpicture}\node[draw,fill,circle, inner sep=1.5pt, color=red] {};\end{tikzpicture}},\text{\begin{tikzpicture}\node[rounded corners=0pt, draw,fill,rectangle, inner sep=2.1pt, color=blue] {};\end{tikzpicture}}\}$ and $P_2 = \{\text{\begin{tikzpicture}\node[rounded corners=0pt,draw,fill,diamond, inner sep=1.35pt, color=green] {};\end{tikzpicture}}, \text{\begin{tikzpicture}\node[rounded corners=0pt,draw,fill, regular polygon, regular polygon sides=3, inner sep=0.9pt, color=brown] {};\end{tikzpicture}}, \text{\begin{tikzpicture}\node[rounded corners=0pt,draw,fill,star, star points=10, inner sep=1.1pt, color=magenta] {};\end{tikzpicture}}\}$. We make two calls to \probClustSelect with respect to $P_1$ and $P_2$ and construct the resulting clustering. In the example, all input vectors are distinct.}
    \label{fig:clusteringdraw}
\end{figure}

We consider all possible ways to split colors between clusters (some colors may be unused). Hence we consider all possible  families $\mathcal{P} = \{P_1, \dots, P_h\}$ of pairwise disjoint non-empty subsets of $\{c \in \{1,\dots, T\} : \exists J \in \mathcal{I} \text{ colored by } c\}$. Each family $\mathcal{P}$  corresponds to a partition  of the set of colors $\{1,\dots, T\}$ if we add one fictitious subset for colors which are not used in the composite clusters. The total number of partitions does not exceed  $T^T = 2^{\Oh(D \log D)}$.

When  partition $\mathcal{P}$ is fixed, we form clusters by solving instances of \probClustSelect: For each $i\in \{1, \dots, h\}$, we take initial clusters colored by elements of $P_i$, bundle together those with the same color, and pass the resulting family to \probClustSelect. First note that there cannot be $P \in \mathcal{P}$ of size at most one, since then \probClustSelect has to make a simple cluster while we assume that all clusters obtained from $\mathcal{P}$ are composite. Second, the total number of clusters has to be $k$, the number of clusters is $|\mathcal{I}| - \sum_{P \in \mathcal{P}} |P| + |\mathcal{P}|$. For each $\mathcal{P}$ we check that both conditions hold, and if not, we discard the choice of $\mathcal{P}$ and move to the next one, before calling the \probClustSelect subroutine.

Next, we formalize how we call the \probClustSelect subroutine. We fix the set of colors $P_i = \{c_1, \dots, c_t\}$, then   take the sets $I_j = \{J \in \mathcal{I}: J\text{ is colored by } c_j\}$ for $j\in\{1,\dots, t\}$. We turn each set of initial clusters $I_j$ into a set of weighted vectors $X_j$ naturally: For each $J \in I_j$, we put one vector $x \in J$ into $X_j$, and $w(x) := |J|$. The family of sets of vectors $X_1$, \dots, $X_t$ and the weight function $w$ are the input for \probClustSelect.
Then we search for the minimum cluster cost bound $d_i \le D$ from $\mathcal{D}$, for which the instance $(X_1, \dots, X_t, d_i)$ of \probClustSelect is a \yesinstance, running each time the algorithm for \probClustSelect.

If for some $i$ setting $d_i$ to $D$ leads to a \noinstance, or if $\sum_{i=1}^h d_i > D$, then we discard the choice of the partition $\mathcal{P}$ and move to the next one.
Otherwise, we report that \probClust has a solution and stop. Next, we prove that in this case the solution indeed exists.
%Otherwise, we reconstruct a solution to \probClust and stop.

%The reconstruction is done as follows: for each $i \in \{1, \dots, h\}$ we make a composite cluster out of the initial clusters returned by the \probClustSelect algorithm. All other clusters are simple, so the total cost is $\sum_{i=1}^h d_i$, which is at most $D$. Thus, if the algorithm finds a solution, then $(X, d, D)$ is a \yesinstance.

We reconstruct the solution to \probClust as follows: For each $i\in\{ 1, \dots, h\}$ the corresponding to $P_i = \{c_1, \dots, c_t\}$ instance of \probClustSelect has a solution $\{x_1, \dots, x_t\}$. For each $j \in \{1, \dots, t\}$, consider the corresponding initial cluster $J_j$ consisting of $w(x_j)$ vectors equal to $x_j$. For each $i \in \{1,\dots,h\}$ we obtain a composite cluster $\cup_{j=1}^t J_j$, all other clusters are simple. So the total cost is $\sum_{i=1}^h d_i$, which is at most $D$. Thus, if the algorithm finds a solution, then $(X, d, D)$ is a \yesinstance.

In the opposite direction.  If there is a solution to \probClust, then there is a regular solution, and with probability at least $e^{-T}$ initial clusters which are parts of composite clusters in this solution are colored by distinct colors. Then, there is a partition $\mathcal{P} = \{P_1, \dots, P_h\}$ which corresponds to this solution. This partition is obtained as follows: put into $P_1$ colors from the first composite cluster, into $P_2$ from the second and so on. At some point our algorithm checks the partition $\mathcal{P}$, and as it finds the optimal cost value for each cluster, then it is at most the cost of the corresponding cluster of the solution from which we started.

To analyze the running time, we consider $2^{\Oh(D\log D)}$ partitions $\mathcal{P}$, for each $\mathcal{P}$ we $|\mathcal{P}| = \Oh(D)$ times search for optimal $d_i$ in time $|\mathcal{D}|$. And for each possible value of $d_i$ we make one call to the \probClustSelect algorithm, which takes time at most $\Phi(n, d, T, D)$.

To amplify the error probability to be at least $1/e$, we do $N = \lceil e^{T}\rceil$ iterations of the algorithm, each time with a new random coloring. As each iteration succeeds with probability at least $e^{-T}$, the probability of not finding a colorful solution after $N$ iterations is at most $(1 - e^{-T})^{e^{T}} \le e^{-1} < 1$. So the total running time is $2^{\Oh(D\log D)}\cdot(nd)^{\Oh(1)} |\mathcal{D}| \Phi(n, d, 2D/\alpha, D)$.

    The algorithm could be derandomized by the standard derandomization technique using perfect hash families \cite{AlonYZ95, NaorSS95}. So \probClust is solvable in the same deterministic time.\end{proof}

\iffull
    \begin{algorithm}[ht]
        \captionof{figure}{\probClust algorithm from Theorem \ref{thm:genfpt}}
        \label{fig:clustering}
        \SetKwInOut{Input}{Input}\SetKwInOut{Output}{Output}
        \nonl\hspace*{-.4cm}\probClust($X$, $k$, $D$, $\alpha$, $\mathcal{D}$)\DontPrintSemicolon\;
        \Input{A multiset $X \subset \mathbb{Z}^d$, a positive integer $k$, real nonnegative values $D$ and $\alpha$, a set $\mathcal{D}$, an algorithm \probClustSelect}
        \Output{\emph{Yes} or \emph{No}}
%        \Output{A partition of $X$ into clusters $C_1$, \dots, $C_k$ with cost at most $D$, or \emph{No solution}}
        \BlankLine
        $T \leftarrow \lceil 2D / \alpha \rceil$\;
        $\mathcal{I} \leftarrow \text{initial clusters of }X$\;
        \For{$\lceil e^T\rceil$ iterations}{
            Fix a random coloring $c$ of $\mathcal{I}$ with colors $\{1, \dots, T\}$\;
            \For{valid partitions $\mathcal{P}$ of $\{1, \dots, T\}$}{
                \For{$i = 1$ \KwTo $|\mathcal{P}|$}{
                    $P_i = \{i_1, \dots, i_t\}$\;
                    \For{$j = 1$ \KwTo $t$}{
                        $X_j \leftarrow \emptyset$\;
                        \For{$J \in \mathcal{I} : c(J) = i_j$}{
                            $x \leftarrow \text{ a point from } J$\;
                            $X_j \leftarrow X_j \cup \{x\}$\;
                            $w(x) \leftarrow |J|$\;
                        }
                    }
                    $d_i \leftarrow D + 1$\;
                    \ForEach{$d \in \mathcal{D}$}{
                        \If{\probClustSelect($X_1$, \dots, $X_t$, $w$, $d$)}{
                            $d_i \leftarrow d$\;
%                            $C_i \leftarrow \text{solution cluster of \probClustSelect}$\;
                            BREAK\;
                        }
                    }
                }
                \If{$\sum_{i = 1}^t d_i \le D$}{
                    %\ForEach{$J \in \mathcal{I} \setminus (C_1 \cup \dots \cup C_{|\mathcal{P}|})$}{
                    %    make new cluster $C_i \leftarrow J$\;
                    %}
                    %$C_1$,\dots, $C_k$ is a solution, STOP\;
                    \emph{Yes}, STOP\;
                }
            }
        }
        \emph{No}, STOP\;
    \end{algorithm}
\else
\fi

%!TEX root = Integer_clustering.tex
\section{Algorithms and complexity for distances with   $p \in (0, 1]$}\label{sec:pin01}

The main motivation for the results in this section is the study of  \probClust with the $L_1$ distance, the case  widely known as \textsc{$k$-Medians}. However, our main algorithmic result also extends to distances of order $p \in (0, 1)$ since in some sense they behave similarly to the $L_1$ distance.

\subsection{FPT algorithm when parameterized by  $D$}\label{subsec:FPTD}

In this subsection, we prove Theorem~\ref{thm:lmain_algorithmic}:   when $p \in (0, 1]$, \probClust admits an FPT algorithm with parameter $D$.   First we state basic geometrical observations for cases $p = 1$ and $p \in (0, 1)$, Then we propose a general algorithm for \probClustSelect which relies only on these properties. Finally, we show how Theorem \ref{thm:genfpt} could be applied.

The next two claims deal with the structure of optimal cluster centroids. We state and prove them in the case of weighted vectors where each vector has a positive integer weight given by a weight function $w$. The unweighted case is just a special case when the weight of each vector is one.
\iffull
\else
The proofs of the claims are straightforward and are available in the full version of this paper.
\fi

First, we show that coordinates of cluster centroids could always be selected among the values present in the input, which helps greatly in enumerating cluster centroids that may be optimal.

\begin{claim} Let  $C= \{x_1, \dots, x_t\}$ be a cluster and   $w : \{x_1, \cdots, x_t\} \to \mathbb{Z}_+$ 
be a weight function. Then there is an optimal (subject to the weighted distance $w(x_i)\cdot \distp(x_i, c)$)  centroid $c$ of $C$ such that  for each   $i \in \{1, \dots, d\}$, the $i$-th coordinate $c[i]$ of the centroid 
is from the values present in the input in this coordinate, that is $c[i]\in  \{x_1[i], \dots, x_t[i]\}$. 
%
% optimal value for a centroid could always be selected among the values present in the input in this coordinate. 
 Moreover, for $p=1$ we may assume that the optimal value is a weighted median of the values present in the $i$-th coordinate.
    \label{claim:presentvalues}
\end{claim}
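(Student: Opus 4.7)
The plan is to exploit the coordinate-wise separability of $\operatorname{dist}_p$. Since
\[
\sum_{i=1}^t w(x_i)\cdot \operatorname{dist}_p(x_i,c) \;=\; \sum_{j=1}^d \Bigl(\sum_{i=1}^t w(x_i)\,|x_i[j]-c[j]|^p\Bigr),
\]
the cost decomposes as an independent sum over coordinates, so each coordinate $c[j]$ of an optimal centroid can be chosen independently. Thus the claim reduces to the one-dimensional statement: given real values $a_1,\dots,a_t$ (the values $x_i[j]$) with positive integer weights $w_1,\dots,w_t$, the function
\[
f(c) \;=\; \sum_{i=1}^t w_i\,|a_i-c|^p
\]
attains its minimum at some $c^\star \in \{a_1,\dots,a_t\}$.

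For $p=1$, the function $f$ is continuous, piecewise linear and convex, with breakpoints exactly at the $a_i$'s. A standard computation (differentiating on each linear piece) shows that an optimum is achieved precisely at any weighted median of $(a_i,w_i)$, which in particular lies in $\{a_1,\dots,a_t\}$. This gives the moreover-part of the claim.

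For $p\in(0,1)$, I would use concavity. Sort the distinct values $a_i$ as $b_1<b_2<\dots<b_s$, and consider any closed interval $[b_\ell,b_{\ell+1}]$. On this interval, for each $i$ the single-point cost $c \mapsto w_i|a_i-c|^p$ equals $w_i(c-a_i)^p$ or $w_i(a_i-c)^p$, because $a_i$ never lies in the open interior of $[b_\ell,b_{\ell+1}]$. Since the map $u\mapsto u^p$ is concave on $[0,\infty)$ for $p\in(0,1)$, and a nonnegative affine precomposition preserves concavity, each summand is concave on $[b_\ell,b_{\ell+1}]$, hence $f$ itself is concave on $[b_\ell,b_{\ell+1}]$. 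A concave function on a closed interval attains its minimum at an endpoint, so $\min_{c\in[b_\ell,b_{\ell+1}]} f(c)\in\{f(b_\ell),f(b_{\ell+1})\}$. Taking the minimum over all $\ell$ and handling the unbounded tails $(-\infty,b_1]$ and $[b_s,\infty)$ (where $f$ is monotone nondecreasing away from the data, as each $|a_i-c|^p$ grows to $\infty$) shows that the global minimum of $f$ lies in $\{b_1,\dots,b_s\}\subseteq\{a_1,\dots,a_t\}$.

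The only mildly delicate step is the concavity argument for $p\in(0,1)$: one must be careful that $|\cdot|^p$ itself is \emph{not} concave on all of $\mathbb{R}$ (it has a cusp at $0$), which is exactly why we restrict attention to intervals between consecutive data points. Once this is in place, combining the one-dimensional statements across coordinates yields an optimal centroid with every coordinate drawn from $\{x_1[i],\dots,x_t[i]\}$, as required.
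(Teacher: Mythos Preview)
Your proof is correct and follows essentially the same approach as the paper: both reduce to a one-dimensional problem via coordinate-wise separability, handle $p=1$ by the weighted-median argument, and for $p\in(0,1)$ restrict to intervals between consecutive data values and show the minimum on each such interval is attained at an endpoint. The only cosmetic difference is that the paper verifies this last point by computing the derivative and observing it is monotone on each interval, whereas you invoke concavity of $u\mapsto u^p$ directly; these are equivalent.
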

\iffull
\begin{proof}
   For cluster $C$, consider the corresponding multiset of unweighted vectors $C' = \{x_1, \dots, x_t\}$, where each vector $x \in C$ is repeated $w(x)$ times. We define  $y_j = x_j[i]$ for $j \in \{1, \dots, t\}$. Assume that $y_1 \le y_2 \le \dots \le y_t$.
   Let us  consider an optimal cluster centroid $c$ for $C$ and  denote $z = c[i]$. Figure \ref{fig:p1centroids} shows how the cluster cost behaves with respect to $z$ on a concrete set of values $\{y_i\}$ for $p = 1$ and $p = 1/2$.

\begin{figure}[ht]
    \centering
    \begin{subfigure}{.45\textwidth}
        \centering
        \begin{tikzpicture}[scale=0.8]
            \begin{axis}[xlabel=$z$,ylabel={cost},
                xmin=0,xmax=10,ymin=5,ymax=20, grid=major,no marks,domain=0:10,samples=200]
                                \addplot[very thick] {abs(x - 2) + abs(x - 3) + abs(x - 6) + abs(x - 8)};
            \end{axis}
        \end{tikzpicture}
        \subcaption{$\text{cost}(z) = |z - 2| + |z - 3| + |z - 6| + |z - 8|$}
    \end{subfigure}\hfill
    \begin{subfigure}{.45\textwidth}
        \centering
        \begin{tikzpicture}[scale=0.8]
            \begin{axis}[xlabel=$z$,
                xmin=0,xmax=10,ymin=4,ymax=10, grid=major,no marks,domain=0:10,samples=200]
                                \addplot[very thick] {pow(abs(x - 2), 0.5) + pow(abs(x - 3), 0.5) + pow(abs(x - 6), 0.5) + pow(abs(x - 8), 0.5)};
            \end{axis}
        \end{tikzpicture}
        \subcaption{${\text{cost}(z) = |z - 2|^{1/2} + |z - 3|^{1/2}} + {|z - 6|^{1/2} + |z - 8|^{1/2}}$}
    \end{subfigure}
    \caption{Graphs of cluster cost over different values of $z$: $\distone$ in the left plot, $\dist_{1/2}$ in the right plot. The set of coordinate values is given as $y_1 = 2$, $y_2 = 3$, $y_3 = 6$, $y_4 = 8$.}
    \label{fig:p1centroids}
\end{figure}
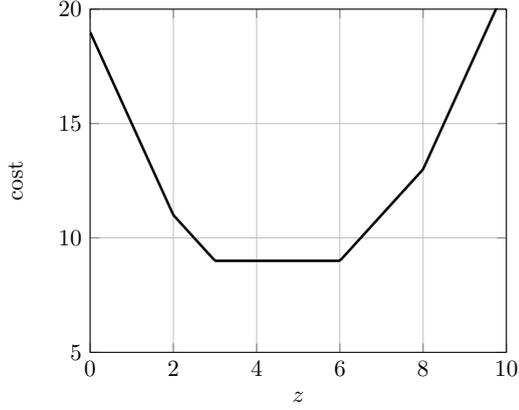
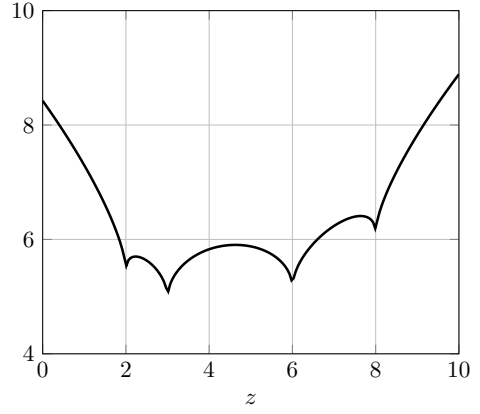
    
For the formal proof, we start with the case $p = 1$. The total cost of $C$ contributed by the $i$-the  coordinate  is
\[|y_1 - z| + |y_2 - z| + \dots + |y_t - z|.\]
 If  $z \in (y_i, y_{i + 1})$ for $i \in \{1, \dots, t - 1\}$, then the derivative with respect to $z$ is
\[((z - y_1) + \dots + (z - y_i) + (y_{i + 1} - z) + \dots + (y_t  - z))' = i - (t - i).\]
    And when $z = y_i$ for $i \in \{1, \dots, t\}$, analogously the derivative is $i - 1 - (t - i)$. So if $t$ is odd, then the derivative is zero at $y_{\lceil t/2\rceil}$, strictly negative before and strictly positive after, so $y_{\lceil t/2 \rceil}$, which is the only median, is the optimal value for $z$. If $t$ is even, then the derivative is zero on $[y_{t/2}, y_{t/2 + 1}]$, strictly negative before and strictly positive after. So any value from $[y_{t/2}, y_{t/2 + 1}]$ is optimal, and we may assume that it is one of the two medians $y_{t/2}$, $y_{t/2 + 1}$.
    
    Now to the case $p \in (0, 1)$, the contribution of the coordinate $i$ is
\[|y_1 - z|^p + |y_2 - z|^p + \dots + |y_t - z|^p.\]
    When  $z$ is between $y_i$ and $y_{i + 1}$, then the derivative of the above with respect to $z$ is equal to
\[p\cdot \left((z - y_1)^{p - 1} + \dots + (z - y_i)^{p - 1} - (y_{i + 1} - z)^{p - 1} - \dots - (y_t - z)^{p - 1}\right).\]
    It is monotone on $(y_i, y_{i + 1})$: when $z$ increases, the sum decreases, as terms of the form $(z - y_j)^{p - 1}$ decrease and terms of the form $(y_j - z)^{p - 1}$ increase, because $p - 1 < 0$. Thus, the optimal value on this interval is achieved at one of its ends. Doing the same for all intervals, we conclude that the optimal value for $z$ must be in $\{y_1, \dots, y_t\}$.
\end{proof}
\else
\fi

In particular, by Claim \ref{claim:presentvalues} we may assume that the coordinates of optimal cluster centroids are integers. Then, the $\alpha$-property holds with $\alpha=1$ since at most one of the initial clusters could have distance zero to the cluster centroid, and all others have distance at least one since the cluster centroid is integral. Namely, let $x$ be a vector in the cluster, and $c$ be the cluster centroid, if $x \ne c$, then there is a coordinate $j$ where $x$ and $c$ differ, and since they are both integral, $|x[j] - c[j]| \ge 1$, and
\[\distp(x,c) = \sum_{i = 1}^d |x[i] - c[i]|^p \ge |x[j] - c[j]|^p \ge 1^p = 1.\]

%Next, we prove that when  at least half of the values by weight equal in some coordinate, the cluster centroid must also have the same value there. The purpose of the claim will be clear when we will discuss the algorithm for \probClustSelect.

In what  follows, the expression \emph{half of vectors by weight}    means that the total weight of the corresponding set of vectors is at least half of the total weight of $C$.

\begin{claim}
    If at least half of the vectors by weight in the cluster $C$ have the same value $z$ in some coordinate $i$, then the optimal cluster centroid is also equal to $z$ in this coordinate.     \label{claim:lphalfcovering}
\end{claim}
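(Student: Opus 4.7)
My plan is to exploit the separability of $\distp$ across coordinates and argue coordinate by coordinate. Since $\distp(x,c)=\sum_{i=1}^d |x[i]-c[i]|^p$, the contribution of coordinate $i$ to the weighted cost depends only on the values $y_j := x_j[i]$ and the weights $w_j := w(x_j)$. So it suffices to show: if $W_= := \sum_{j:\, y_j=z} w_j \ge \tfrac{1}{2}\sum_j w_j$, then setting the $i$-th coordinate of the centroid to $z$ minimizes the one-dimensional cost $F(\zeta) := \sum_j w_j |y_j - \zeta|^p$ (achieves it among the at-least-as-good values).

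For $p=1$, this is immediate from the median characterization already established in Claim~\ref{claim:presentvalues}: the hypothesis states that $z$ carries at least half of the weight, which is precisely the defining property of a weighted median, so $\zeta=z$ is optimal.

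For $p\in(0,1)$ the key tool is subadditivity of $t\mapsto t^p$ on $[0,\infty)$, which gives $(a+b)^p\le a^p+b^p$ for all $a,b\ge 0$, and hence
\[
\bigl||y_j-z-\delta|^p - |y_j-z|^p\bigr| \le |\delta|^p
\]
for every $\delta\in\mathbb{R}$. Comparing $F(z+\delta)$ to $F(z)$ and splitting the sum according to whether $y_j=z$ or not, the vectors at $z$ contribute exactly $W_= \cdot |\delta|^p$ to the increase, while the vectors with $y_j\ne z$ can decrease the cost by at most $W_{\ne}\cdot |\delta|^p$ in total, where $W_{\ne} := \sum_{j:\, y_j\ne z} w_j$. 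Therefore
\[
F(z+\delta) - F(z) \ge (W_= - W_{\ne})\,|\delta|^p \ge 0,
\]
the last inequality being precisely the hypothesis $W_=\ge W_{\ne}$. Hence no displacement from $z$ improves the cost, so $\zeta=z$ is optimal for coordinate $i$; combining with Claim~\ref{claim:presentvalues} applied coordinate-wise to the remaining coordinates produces an optimal centroid with $c[i]=z$.

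The only subtle step is verifying the $p\in(0,1)$ inequality rigorously (subadditivity, not convexity, is what we need), and checking that the splitting into ``vectors at $z$'' vs.\ ``vectors not at $z$'' correctly bounds the decrease term---this is where the half-weight hypothesis is used in a tight way. Everything else is bookkeeping.
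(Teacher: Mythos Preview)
Your proof is correct and follows essentially the same approach as the paper: both arguments reduce to coordinate $i$, split the weighted sum according to whether the value equals $z$, and use subadditivity of $t\mapsto t^p$ on $[0,\infty)$ (equivalently, the inequality $|y-z|^p \le |y-z'|^p + |z-z'|^p$ for $p\in(0,1]$) together with the half-weight hypothesis to conclude that moving the centroid away from $z$ cannot decrease the cost. The only cosmetic difference is that you handle $p=1$ separately via the median, whereas the paper treats all $p\in(0,1]$ uniformly through the same subadditivity bound.
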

\iffull
\begin{proof}
    Let $S$ be the weight-respecting multiset of values which vectors from $C$ have in the $i$-th coordinate: $S = \{x[i] : x \in C, w(x) \text{ times}\}$. Consider the difference between selecting $z$ and some other value $z'$ as the $i$-th coordinate of the centroid:
\[\sum_{y \in S} |y - z|^p - \sum_{y \in S} |y - z'|^p \le \sum_{y \in S, y \ne z} (|y - z|^p - |y - z'|^p - |z - z'|^p).\]
    The inequality holds since at least half of the elements of $S$ are equal to $z$, and so for any value $y \ne z$ there is a term $|z - z'|^p$ in $\sum_{y \in S} |y - z'|^p$ corresponding to one of the values from $S$ equal to $z$. The last sum is non-positive because in every term
\[|y - z|^p \le |y - z'|^p + |z - z'|^p,\]
    as $p \in (0, 1]$.  This concludes the proof. %So $z$ is at least as good as $z'$, for any $z'$.
\end{proof}
\else
\fi

%For more intuition on how the optimal clustering behaves with different distances, examples for $p = 1$ and $p = 1/4$ are shown in Figure \ref{fig:p1clusterings}.

In order to apply Theorem \ref{thm:genfpt},  we need   an FPT algorithm for \probClustSelect. Before obtaining it, we state some properties of hypergraphs, which we need for the algorithm.

A \emph{hypergraph} $G$ is a set of vertices $V(G)$ and a collection of hyperedges $E(G)$, each hyperedge is a subset of $V(G)$. If $G$ and $H$ are hypergraphs, we say that $H$ \emph{appears} at $V' \subset V(G)$ as a \emph{subhypergraph} if there is a bijection $\pi : V(H) \to V'$ with a property that for any $E \in E(H)$ there is $E' \in E(G)$ such that $\pi(E) = E' \cap V'$, the action of $\pi$ is extended to subsets of $V(H)$ in a natural way.

A \emph{fractional edge cover} of a hypergraph $H$ is an assignment $\psi: E(H) \to [0, 1]$ such that for every $v \in V(H)$, $\sum_{E \in E(H) : v \in E} \psi(E) \ge 1$. The \emph{fractional cover number} $\rho^*(H)$ is the minimum of $\sum_{E \in E(H)} \psi(E)$ taken over all fractional edge covers $\psi$.

We need the following result of Marx \cite{Marx08} about finding occurences of one hypergraph in another.

\begin{lemma}[\cite{Marx08}]
    Let $H$ be a hypergraph with fractional cover number $\rho^*(H)$, and let $G$ be a hypergraph where each hyperedge has size at most $\ell$. There is an algorithm that enumerates in time 
    $|V(H)|^{\Oh(|V(H)|)} \cdot \ell^{|V(H)|\rho^*(H)+1} \cdot |E(G)|^{\rho^*(H)+1} \cdot |V(G)|^2$
    every subset $V' \subset V(G)$ where $H$ appears in $G$ as a subhypergraph.
\label{lemma:subhyper}
\end{lemma}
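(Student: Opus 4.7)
The plan is to enumerate subhypergraph occurrences of $H$ in $G$ by branching on hyperedges of $G$ that ``anchor'' the image of $V(H)$, with the branching budget controlled by a fractional edge cover of $H$.

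First I would compute, via linear programming, an optimal fractional edge cover $\psi^* : E(H) \to [0,1]$ of $H$ of weight $\rho^*(H)$, and restrict attention to its support $E_1, \dots, E_r \subseteq E(H)$ (with $r \le |V(H)|$ by the standard basic feasible solution bound). The motivating observation is that in any occurrence $\pi : V(H) \to V'$, every vertex $v \in V(H)$ is covered (fractionally) by the $E_i$'s, so in particular its image $\pi(v)$ lies inside the image of some $E_i$. By the definition of subhypergraph, $\pi(E_i) = E_i' \cap V'$ for some $E_i' \in E(G)$; hence every occurrence is ``anchored'' by a tuple $(E_1', \dots, E_r') \in E(G)^r$.

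Second, given such an anchoring tuple, the candidate image satisfies $V' \subseteq U := \bigcup_i E_i'$, a set of size at most $\ell r \le \ell \cdot |V(H)|$. I would then enumerate all injections $\pi : V(H) \to U$ with $\pi(E_i) \subseteq E_i'$ for each $i$, and for each $\pi$ verify in time $\Oh(|E(G)| \cdot |V(G)|)$ that every other $E \in E(H)$ also satisfies $\pi(E) = E' \cap \pi(V(H))$ for some $E' \in E(G)$; if so, output $V' = \pi(V(H))$. Correctness is immediate from the anchoring observation, and uniqueness of output can be enforced by a canonicalization step.

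The main obstacle, and the combinatorial heart of Marx's result, is controlling the number of anchoring tuples that need to be inspected: a naive iteration over all $r$-tuples of edges of $G$ gives $|E(G)|^r$, which is too expensive since $r$ can be as large as $|V(H)|$. The trick is to reorganize the search using the weights $\psi_i^*$ via a Bollob\'as--Thomason / Shearer entropy argument, which yields that the number of genuinely distinct occurrences $V'$ is at most $|E(G)|^{\rho^*(H)}$, and to design an enumeration procedure (using one additional anchoring edge to break symmetry) that meets this bound up to the $|E(G)|$ and $\ell^{|V(H)|\rho^*(H)+1}$ factors in the stated running time. Combining this with the $|V(H)|^{\Oh(|V(H)|)}$ cost of enumerating injections and the $|V(G)|^2$ verification cost yields the claimed bound.
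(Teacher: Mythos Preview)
The paper does not prove this lemma; it is quoted as a black box from \cite{Marx08}, so there is no in-paper proof to compare against. Your outline is a reasonable high-level sketch of the strategy behind Marx's result, and the ingredients you name---an optimal fractional edge cover with support of size at most $|V(H)|$, anchoring each occurrence by a tuple of host hyperedges, and the Shearer/Friedgut--Kahn bound on the number of occurrences---are indeed the right ones.

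That said, there is a genuine gap in your sketch at exactly the point you flag as ``the main obstacle.'' You correctly observe that iterating over all $r$-tuples of anchoring edges costs $|E(G)|^r$ with $r$ possibly as large as $|V(H)|$, and you correctly note that the Shearer-type entropy argument bounds the \emph{number} of occurrences by $|E(G)|^{\rho^*(H)}$. But a counting bound does not by itself yield an enumeration algorithm of matching cost: knowing that there are few needles in the haystack does not tell you how to avoid sifting the whole haystack. Your sentence ``to design an enumeration procedure (using one additional anchoring edge to break symmetry) that meets this bound'' is precisely where the real algorithmic work of Marx's proof lives, and you have not supplied it. The actual argument needs a structured join-based procedure that processes the anchoring constraints in an order governed by the LP weights so that at every stage the number of surviving partial embeddings stays within the target bound; this is nontrivial and is the content of the cited reference rather than something that follows from what you wrote.
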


Also, the following version of the Chernoff Bound will be of use.
\begin{proposition}[\cite{AngluinV79}]    \label{prop:chernoff}
    Let $X_1$, $X_2$, \dots, $X_n$ be independent 0-1 random variables. Denote $X = \sum_{i = 1}^n X_i$ and $\mu = E[X]$. Then     for $0 < \beta \le 1$, 
    \begin{gather*}
        P[X \le (1 - \beta)\mu] \le \exp(-\beta^2\mu/2),\\
        P[X \ge (1 + \beta)\mu] \le \exp(-\beta^2\mu/3).
    \end{gather*}
%    for $0 < \beta \le 1$.
\end{proposition}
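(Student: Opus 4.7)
The plan is to use the standard Chernoff moment-generating-function method, handling the two tails separately but symmetrically. I will first reduce each tail estimate to an optimization problem over a free parameter $t > 0$, and then dispose of the optimization via elementary calculus inequalities to obtain the specific constants $1/2$ and $1/3$.

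For the upper tail, the starting point is Markov's inequality applied to $e^{tX}$: for any $t > 0$,
\[
P[X \ge (1+\beta)\mu] \;=\; P[e^{tX} \ge e^{t(1+\beta)\mu}] \;\le\; e^{-t(1+\beta)\mu}\, E[e^{tX}].
\]
Independence of the $X_i$'s factors the expectation, and for each Bernoulli variable with mean $p_i$ one has $E[e^{tX_i}] = 1 + p_i(e^t - 1) \le \exp(p_i(e^t-1))$ using $1+x \le e^x$. Multiplying over $i$ and collecting the exponent gives $E[e^{tX}] \le \exp(\mu(e^t-1))$, so
\[
P[X \ge (1+\beta)\mu] \;\le\; \exp\bigl(\mu(e^t-1) - t(1+\beta)\mu\bigr).
\]
Choosing $t = \ln(1+\beta)$ (the minimizer of the right-hand side in $t$) simplifies this to $\exp\bigl(\mu(\beta - (1+\beta)\ln(1+\beta))\bigr)$. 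The analogous argument for the lower tail applies Markov's inequality to $e^{-tX}$ and leads to the bound $\exp\bigl(\mu(-\beta - (1-\beta)\ln(1-\beta))\bigr)$ after optimizing at $t = -\ln(1-\beta)$.

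It remains to verify the two elementary inequalities
\[
(1+\beta)\ln(1+\beta) - \beta \;\ge\; \frac{\beta^2}{3}, \qquad (1-\beta)\ln(1-\beta) + \beta \;\ge\; \frac{\beta^2}{2},
\]
valid for all $\beta \in (0,1]$. Each can be established by defining the difference of the two sides, checking that it vanishes at $\beta = 0$, and showing its derivative is nonnegative on $(0,1]$ via a short Taylor expansion of $\ln(1\pm\beta)$. Substituting these bounds into the two displayed inequalities above yields exactly $\exp(-\beta^2\mu/3)$ and $\exp(-\beta^2\mu/2)$, respectively, completing the proof.

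The main obstacle is the last step: the MGF optimization is mechanical, but the specific constants $1/2$ and $1/3$ demand the two calculus inequalities, and these are slightly asymmetric because the upper-tail argument loses more in the use of $1+x \le e^x$ than the lower-tail one does. Aside from this, the structure of the proof is entirely routine.
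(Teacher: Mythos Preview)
Your proof is correct and is the standard moment-generating-function argument for Chernoff bounds. Note, however, that the paper does not actually prove this proposition: it is stated with a citation to \cite{AngluinV79} and used as a black box, so there is no ``paper's own proof'' to compare against. Your write-up supplies exactly the classical derivation one would expect.
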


We are ready to proceed with the proof that   \probClustSelect  with $p\in (0,1]$ is \classFPT when parameterized by $D$. 

\begin{theorem}   \label{thm:probClustSelect}
    For every $p\in (0,1]$, 
    \probClustSelect with distance $\dist_p$ is solvable in time
   $2^{\Oh(D \log D)} (md)^{\Oh(1)}$.
\end{theorem}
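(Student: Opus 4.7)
My plan is to combine the structural claims about $L_p$-centroids with color coding and Marx's subhypergraph enumeration theorem (Lemma~\ref{lemma:subhyper}).

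First I would fix an optimal centroid $c$ with integer coordinates via Claim~\ref{claim:presentvalues}. Since inputs are integer and $p \in (0,1]$, any selected vector $x_j \ne c$ contributes $w(x_j)\cdot \dist_p(x_j, c) \ge 1$ to the cost, so the $\alpha$-property holds with $\alpha = 1$ for the selection viewed as a composite cluster: hence at most $D+1$ distinct vector values appear in any feasible selection, and at most $D$ of the $t$ groups are \emph{special}, i.e.\ select a vector different from $c$. Next I would split on whether $c$ itself occurs among the input vectors. If $c\in\bigcup_j X_j$, a direct enumeration over the $m$ candidates $c^{*}\in\bigcup_j X_j$ suffices: for each $c^{*}$, greedily pick in every $X_j$ the vector $x_j^{*}$ minimizing $w(x_j^{*})\dist_p(x_j^{*}, c^{*})$ and test whether the total cost is at most $D$, all in $\Oh(m^{2} d)$ time. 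In the complementary subcase every group must be special, and therefore $t \le D$.

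For the harder subcase the plan is to couple color coding with Lemma~\ref{lemma:subhyper}. I would color the $t$ groups uniformly with $D+1$ colors so that, with probability at least $e^{-\Oh(D)}$, the $\le D+1$ distinct selected vectors fall into pairwise different color classes; derandomization via perfect hash families gives $2^{\Oh(D\log D)}$ colorings to try. In each coloring I would build a hypergraph $G$ on $\bigcup_j X_j$ whose hyperedges are the groups restricted per color class, and search for a subhypergraph matching a pattern $H$ with $|V(H)|=\Oh(D)$ and constant fractional cover number $\rho^{*}(H)$ encoding the selection of one distinct vector per color class that covers every group. Each occurrence of $H$ yields a candidate set of distinct selected vectors; Claim~\ref{claim:lphalfcovering} then pins down the optimal centroid coordinates as weighted-majority values, and the total cost is verified in polynomial time.

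The main obstacle I anticipate is designing the pattern $H$ and preprocessing $G$ so that both $\rho^{*}(H)$ and the effective hyperedge size $\ell$ remain small enough for Lemma~\ref{lemma:subhyper} to enumerate all occurrences in $\ell^{\Oh(D)}\cdot (md)^{\Oh(1)}$ time, while ensuring that every valid selection corresponds to an occurrence of $H$ in $G$. Once this is established, combining the enumeration with Chernoff-style probability amplification from Proposition~\ref{prop:chernoff} and the standard derandomization of color coding yields the claimed deterministic running time of $2^{\Oh(D\log D)}(md)^{\Oh(1)}$.
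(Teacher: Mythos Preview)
Your initial reduction---checking whether the centroid equals an input vector, and otherwise bounding $t\le D$---is fine and matches the paper. The gap is in the ``harder subcase,'' and it is the whole proof.

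The paper's hypergraph is built on the \emph{coordinate set} $\{1,\dots,d\}$, not on the input vectors. After fixing a guess $x_1\in X_1$, each input vector $x$ of weight $w(x)$ contributes $w(x)$ copies of the hyperedge $\{i:x[i]\ne x_1[i]\}$, truncated to size $\le D$. The pattern $H$ corresponding to a solution lives on the coordinate set $P$ where the centroid $c$ differs from $x_1$. The point of Claim~\ref{claim:lphalfcovering} is \emph{structural}: for every $i\in P$, at least half the selected vectors (by weight) must differ from $x_1$ at $i$, which forces every vertex of $H$ to be covered by at least half its hyperedges, giving $\rho^{*}(H)\le 2$. A Chernoff argument then thins $H$ to $\Oh(\log D)$ hyperedges with $\rho^{*}\le 4$, so there are only $2^{\Oh(D\log D)}$ patterns to try, and Lemma~\ref{lemma:subhyper} enumerates all occurrences in $(md)^{\Oh(1)}$ time because $\ell\le D$ and $|E(G)|\le Dm$. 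Each occurrence is a candidate set $P$; one then enumerates the $\Oh(D^{1/p})^{|P|}=2^{\Oh(D\log D)}$ integer centroids supported on $P$ and greedily checks each.

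Your construction puts vertices on $\bigcup_j X_j$ and hyperedges on the groups $X_j$. This breaks both hypotheses of Lemma~\ref{lemma:subhyper}: the hyperedge size $\ell$ is $\max_j|X_j|$, which is unbounded in $D$; and a selection of one vector per group, viewed as a subhypergraph, has $t$ vertices each covered by a single singleton edge, so $\rho^{*}(H)=t=\Omega(D)$. Your color-coding step does not help here: the groups already form a partition, and since the $X_j$ are disjoint the $t$ selected vectors are already pairwise distinct, so coloring groups with $D+1$ colors adds no usable structure. Finally, you invoke Claim~\ref{claim:lphalfcovering} only to compute the centroid \emph{after} a selection is found, but its real role is upstream---it is what makes the pattern hypergraph have constant fractional cover, and without that the enumeration blows up. You correctly identify ``designing $H$ so that $\rho^{*}(H)$ and $\ell$ are small'' as the main obstacle; that obstacle is the theorem, and it is resolved by switching to the coordinate hypergraph.
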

\begin{proof}
    First we check if any of the given vectors could be the centroid of the resulting composite cluster. When the centroid is fixed, we find the optimal solution in polynomial time by just selecting the cheapest vector with respect to this centroid from each set. If at some point we find a suitable centroid, then we return that the solution exists. If not, we may assume that the centroid is not equal to any of the given vectors. As a consequence, any vector $x$ selected into the solution cluster contributes at least $w(x)$ to the total distance, since the centroid must be integral by Claim \ref{claim:presentvalues}. So we may now consider only vectors of weight at most $D$ and, moreover, the total weight of the resulting cluster is at most $D$.

    Consider a resulting cluster $C$ with the centroid $c$. There is some $x_1$ in $C$ from $X_1$, and $\distp(x_1, c) \le D$. So if we try all possible $x_1$ from $X_1$ (there are at most $m$ of them), any feasible centroid is at distance at most $D$ from at least one of them. Since $x_1$ and $c$ are integral, they could be different in at most $D$ coordinates, as $\distp(x_1, c) = \sum_{i=1}^d |x_1[i] - c[i]|^p \le D$.

    We try all possible $x_1 \in X_1$. After $x_1$ is fixed, we enumerate all subsets $P$ of coordinates $\{1, \dots, d\}$ where $x_1$ and $c$ could differ, we show how to do it efficiently afterwards. When the subset of coordinates $P$ is fixed, we consider all possible centroids, which are integral, equal to $x_1$ in all coordinates except $P$, and differ from $x_1$ by at most $D^{1/p}$ in each of coordinates from $P$. If $|x_1[i^*] - c[i^*]| > D^{1/p}$ for some coordinate $i^*$, then $\distp(x_1, c) = \sum_{i=1}^d |x_1[i] - c[i]|^p \ge |x_1[i^*] - c[i^*]|^p > D$, so $c$ can not be a centroid. With restrictions stated above, there are at most $2^{\Oh(D\log D)}$ possible centroids. 

It remains to show  that we could enumerate all possible coordinate subsets efficiently. We reduce this task to the task of finding a specific subhypergraph and then apply Lemma \ref{lemma:subhyper}.

    \begin{claim}
        There are $2^{\Oh(D \log D)}$ coordinate subsets where $x_1$ and an optimal cluster centroid $c$ could differ. There exists an algorithm which enumerates all of them in time $2^{\Oh(D \log D)} (md)^{\Oh(1)}$.
        \label{claim:coordinates}
    \end{claim}
    \begin{proof}

        Let $G$ be a hypergraph with $V(G) = \{1,\dots,d\}$, one vertex for each coordinate, and for each vector $x$ in $\cup_{j=1}^t X_j$ we take $w(x)$ multiple hyperedges $E_x$ which contains exactly the coordinates where $x$ and $x_1$ differ. We add an edge only if there are at most $D$ such coordinates, otherwise $x$ can not be in the same cluster as $x_1$. So hyperdeges in $G$ are of size at most $D$. Since we consider only vectors of weight at most $D$, $|E(G)| \le Dm$.

        For a solution, let $x_j$ be the vector selected from the corresponding $X_j$, for $j \in \{1, \dots, t\}$, $C = \{x_1, \dots, x_t\}$ be the solution cluster and $c$ be the centroid. All vectors in $C$ are identical in all coordinates except at most $D$, since if there are different values in at least $D + 1$ coordinates, the cost is at least $D + 1$. Denote this subset of coordinates as $Q$, $c$ could also differ from $x_1$ only at $Q$. Denote the subset of coordinates where $c$ differs from $x_1$ as $P$, $P \subset Q$ and so $|P| \le D$.
        The solution $(C, c)$ induces a subhypergraph $H$ of $G$ in the following way. Leave only hyperedges corresponding to the vectors in $C$, and restrict them to vertices in $P$. There are at most $D$ vertices and at most $D$ hyperedges in $H$, since the total weight is at most $D$. An example of the correspondence between input vectors and hypergraphs is given in Figure~\ref{fig:coordinates}.

        \begin{figure}[ht]
            \centering
            \begin{subfigure}{.5\textwidth}
                \centering
                $D = 2$\par\medskip\begin{tabular}{c | c c c c c}
                    $v$&1&2&\textcolor{red}{3}&4&5\\\hline
                    $\textcolor{red}{x_1}$&0&2&1&3&2\\\hline
                    $x_2$&0&\textbf{1}&1&3&\textbf{1}\\
                    $x_3$&\textbf{1}&2&1&3&\textbf{1}\\
                    $\textcolor{red}{x_4}$&0&2&\textcolor{red}{\textbf{2}}&3&2\\
                    $\textcolor{red}{x_5}$&0&2&\textcolor{red}{\textbf{2}}&3&\textbf{1}\\\midrule
                    $c$&0&2&\textcolor{red}{\textbf{2}}&3&2\\
                \end{tabular}
            \end{subfigure}\hfill
            \begin{subfigure}{.5\textwidth}
                \centering
                \begin{tikzpicture}[auto,  node distance=2cm,every loop/.style={},thick,main node/.style={circle,draw}]
                    \node[main node] (1) {1};
                    \node[main node] (2) [above right of=1] {2};
                    \node[main node,color=red] (3) [below right of=2] {3};
                    \node[main node] (4) [below of =3]{4};
                    \node[main node] (5) [below of=1] {5};
                    \draw[line width=1pt]
                    (2) -- node [pos=0.25] {$x_2$} (5);
                    \draw[line width=1pt]
                    (1) -- node [left] {$x_3$}(5);
                    \draw[line width=1pt, color=red]
                    (3) -- node {$x_5$} (5);
                    \path[line width=1pt, color=red]
                    (3) edge[loop] node {$x_4$} (3);
                \end{tikzpicture}
            \end{subfigure}
            \caption{An illustration of the hypergraph construction in Claim \ref{claim:coordinates}. On the left, the vector $x_1$ and other input vectors $x_2$, \ldots, $x_5$ are given. On the right, the corresponding hypergraph $G$. The solution is in red: on the left, the resulting cluster $\{x_1, x_4, x_5\}$ is of cost 2; on the right, the corresponding subhypergraph is $H$. Note that in $H$ the hyperedge $x_5$ is restricted to the only vertex $3$, so its size is one.}
            \label{fig:coordinates}
        \end{figure}
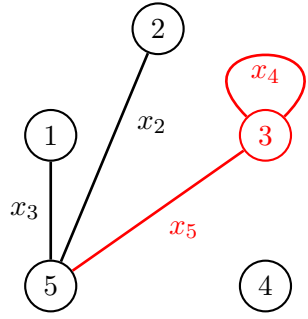
        
        The next claim shows that the fractional cover number of $H$ is bounded by a constant.

    \begin{claim}
        Each vertex in $H$ is covered by at least half of the hyperedges of $H$, and $\rho^*(H) \le 2$.
        \label{claim:fraccover}
    \end{claim}
    \begin{proof}
        Consider a vertex $p \in P$, and assume that less than half of the hyperedges cover $p$.
        It means that in the $p$-th coordinate the centroid $c$ differs from $x_1$, but less than half of the vectors in $C$ by weight differ from $x_1$ in this coordinate. This contradicts Claim \ref{claim:lphalfcovering}.

        So each vertex is covered by at least half of the hyperedges, and setting $\psi \equiv \frac{2}{|E(H)|}$ leads to $\rho^*(H)\le 2$.
    \end{proof}

    In order to enumerate all possible subsets of coordinates $P$, we try all hypergraphs $H$ with at most $D$ vertices and at most $D$ hyperedges, and if each vertex is covered by at least half of the hyperedges, we find all places where $H$ appears in $G$ by Lemma \ref{lemma:subhyper}. The last step is done in $2^{\Oh(D\log D)} \cdot (md)^{\Oh(1)}$ time. However, the number of possible $H$ could be up to $2^{\Omega(D^2)}$. The following claim, which is analogous to Proposition 6.3 in \cite{Marx08}, shows that we could consider only hypergraphs with a logarithmic number of hyperedges.

    \begin{claim}
        If $D \ge 2$, it is possible to delete all except at most $160 \ln D$ hyperedges from $H$ so that in the resulting hypergraph $H^*$ each vertex is covered by at least $1/4$ of the hyperedges, and $\rho^*(H^*) \le 4$.
        \label{claim:logedges}
    \end{claim}
    \begin{proof}
        Denote $s = |E(H)|$, construct a new hypergraph $H^*$ on the same vertex set $V(H)$ by independently selecting each hyperedge of $H$ with probability $(120 \ln D) / s$. Applying Proposition \ref{prop:chernoff} with $\beta = 1/3$, probability of selecting more than $160 \ln D$ hyperedges is at most $\exp((-120 \ln D)/27) < 1 / D^2$.
        By Claim \ref{claim:fraccover}, each vertex $v$ of $H$ is covered by at least $s/2$ hyperedges, and the expected number of hyperedges covering $v$ in $H^*$ is at least $60 \ln D$. By Proposition \ref{prop:chernoff} with $\beta = 1/3$, the probability that $v$ is covered by less than $40 \ln D$
        hyperedges in $H^*$ is at most $\exp(-60 \ln D / 18) \le 1/D^3$. By the union bound, with probability at least $1 - 1/D^2 - D \cdot 1/D^3 > 0$ we select at most $160 \ln D$ hyperedges and each vertex is covered by at least $40\ln D$ hyperedges. So the claim holds, and $\rho^*(H^*) \le 4$ by setting $\psi \equiv \frac{4}{|E(H^*)|}$.
    \end{proof}

    So if there is a subhypergraph $H$ in $G$ corresponding to a solution, then there is also a subhypergraph $H^*$ in $G$ appearing at the same subset of $V(G)$ with at most $160 \ln D$ hyperedges and $\rho^*(H^*) \le 4$. Since we only need to enumerate possible coordinate subsets, it suffices to
    consider only hypergraphs with at most $160 \ln D$ hyperedges, and there are $2^{\Oh(D \log D)}$ of them. Since the fractional cover number is still bounded by a constant, the total running time is $2^{\Oh(D\log D)} \cdot (md)^{\Oh(1)}$, as desired.
    \end{proof}
    With Claim \ref{claim:coordinates} proven, the proof of the theorem is complete.
    \iffull
    The pseudocode  given in Figure~ \ref{alg:clustselect} summarizes the main steps of the algorithm.
    \else
    \fi
    %
    %the description our algorithm, see  is complete now. The pseudocode is given in Figure~ \ref{alg:clustselect}.
    \end{proof}

    \iffull
    \begin{algorithm}[h]
        \captionof{figure}{\probClustSelect algorithm from Theorem \ref{thm:probClustSelect}}
        \label{alg:clustselect}
        \nonl\hspace*{-.4cm}\probClustSelect($X_1$, \dots, $X_t$, $w$, $D$)\DontPrintSemicolon\;
        \SetKwInOut{Input}{Input}\SetKwInOut{Output}{Output}
        \Input{Sets of vectors $X_1$, \dots, $X_t$, a weight function $w$, a nonnegative integer $D$}
        %\Output{A cluster $C$ which contains exactly one initial cluster from each $I_j$ and has a cost at most $D$, or \emph{No solution}}
        \Output{\emph{Yes} or \emph{No}}
        \BlankLine

        \For{vector $c$ in the input}{
            \If{$\sum_{i=1}^t \min_{x_i \in X_i} w(x_i) \distp(x_i, c) \le D$}{
                \emph{Yes}, STOP\;
            }
        }
        \For{$x_1 \in X_1$}{
            $G \leftarrow \text{hypergraph with } V(G) = \{1, \dots, d\}, E(G) = \{\text{positions where $x_1$ and $x$ differ} : x \in \cup_{j = 1}^t X_j, w(x) \text{ times}\}$\;
            \For{hypergraph $H$ with at most $D$ vertices and at most $160 \ln D$ hyperedges}{
                \If{each vertex of $H$ is covered by at least 1/4 of its hyperedges}{
                    \For{place $P$ where $H$ appears in $G$ as subhypergraph}{
                        \For{integer vector $c$ which differs from $x_1$ only at $P$ by at most $D^{1/p}$}{
                            \If{$\sum_{i=1}^t \min_{x_i \in X_i} w(x_i) \distp(x_i, c) \le D$}{
                            \emph{Yes}, STOP\;
                        %        $\{\argmin_{J_i \in I_i} \sum_{x \in J_i} ||x - c||_1\}_{i=1}^t$ is a solution, STOP\;
                            }
                        }
                    }
                }
            }
        }
        \emph{No}, STOP\;
    \end{algorithm}
    \else
    \fi

Combining Theorem \ref{thm:genfpt} and Theorem \ref{thm:probClustSelect}, we obtain an \classFPT algorithm for \probClust. This proves Theorem \ref{thm:lmain_algorithmic}, which we recall here.

\lmainalgorithmic*
\begin{proof}
    We have an algorithm for \probClustSelect whose running time is specified by  Theorem \ref{thm:probClustSelect}.
    By Claim \ref{claim:presentvalues}, the $\alpha$-property holds. 
    The only missing part is to describe the way of producing  the set $\mathcal{D}$ of all possible cluster costs which are at most $D$.

    In the case $p=1$ all distances are integral so we  can  take $\mathcal{D} = \{0, \dots, D\}$.

    For the general case, let  $\mathcal{B} = \{a^p : a \in \{1, \dots, \lceil D^{1/p} \rceil\}\}$. Consider a cluster $C = \{x_1, \dots, x_t\}$ and the corresponding optimal cluster centroid $c$. For any $x_j \in C$, $\distp(x_j, c) = \sum_{i=1}^d |x_j[i] - c[i]|^p$ is a combination of elements of $\mathcal{B}$ with nonnegative integer coefficients.  This is because $x_j$ and $c$ are integral and the cluster cost is at most $D$, hence $|x_j[i] - c[i]| \le D^{1/p}$ for each $i \in \{1, \dots, d\}$. Since weights are also integral, the whole cluster cost is a combination of distances between cluster vectors and the centroid with nonnegative integer coefficients, and so also a combination of elements of $\mathcal{B}$ with nonnegative integer coefficients. This means that we can take
\[\mathcal{D} = \left\{\sum_{b \in \mathcal{B}} a_b \cdot b : a_b \in \mathbb{Z}, a_b \ge 0, \sum_{b \in \mathcal{B}} a_b \le D\right\},\]
    the sum of coefficients $a_b$ is at most $D$ since all elements of $\mathcal{B}$ are at least 1. The size of $\mathcal{D}$ is at most $|\mathcal{B}|^{D} = 2^{\Oh(D \log D)}$.
\end{proof}

Note that another widely studied version of \probClust is where centroids $c_i$ could be selected only among the set of given vectors. Naturally, our algorithm also works in this setting since the set of possible centroids is only restricted further.

\iffull
\subsection{W[1]-hardness of \probClustSelect parameterized by  $t+ d$ for $p = 1$}\label{subs:W1clustsel}

In this subsection, we restrict our attention to the $p = 1$ case. What happens when $D$ is not bounded, but the dimension $d$ and the number of clusters $k$ are parameters? There is a trivial XP-algorithm in time $n^{\Oh(kd)}$, as by Claim \ref{claim:presentvalues} it suffices to try all possible combinations of the values present in coordinates as possible cluster centroids. There are at most $n$ distinct values in each coordinate, so at most $n^d$ candidates for a cluster centroid. After the cluster centroids are fixed, each vector goes to the cluster with the closest centroid.

    We do not know of a lower bound for \probClust complementing this algorithm. However, we are able to show the hardness of \probClustSelect with respect to the dimension.

    \begin{theorem}
        \probClustSelect with distance $\distone$ is \classW1-hard when parameterized by $t + d$.% Assuming ETH, there is no $n^{o(d + t^{1/2})}$ algorithm for \probClustSelect.
        \label{thm:l1selecthard}
    \end{theorem}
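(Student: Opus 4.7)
The plan would be to establish \classW1-hardness by a parameterized reduction from \probMultiClique, which is \classW1-hard parameterized by the number of color classes~$k$. Recall that an instance consists of a graph $G$ with vertex partition $V(G) = V_1 \cup \cdots \cup V_k$, and the goal is to decide whether there exist $v_i \in V_i$ such that $\{v_1, \dots, v_k\}$ is a clique in $G$.

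From such an instance I would build a \probClustSelect instance with $t = k + \binom{k}{2}$ sets and dimension $d$ polynomial in~$k$ (say $d = \Theta(k)$ or $d = \Theta(k^2)$, depending on the encoding). For each color $i$, include a ``vertex set'' $X_i$ with one vector $y_v$ per $v \in V_i$; for each pair $\{i,j\}$, include an ``edge set'' $X_{\{i,j\}}$ with one vector $z_e$ per edge $e \in E(V_i, V_j)$. Each vector would encode, at the coordinate(s) associated with color~$i$, the identifier $\mathrm{id}(u)$ of the $V_i$-endpoint it references; at other coordinates it would carry designated background values. The intended equivalence is: a multicolored clique $\{v_1,\dots,v_k\}$ together with its $\binom{k}{2}$ clique-edges yields a selection whose cluster cost meets a prescribed budget $D$, while any other selection incurs extra cost from disagreements between the $V_i$-endpoints referenced by vectors chosen from different sets.

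For correctness, the forward direction is a direct calculation using the clique's consistency at every coordinate. The backward direction would argue that if the cluster cost is at most $D$, then the $L_1$-optimal centroid---the weighted coordinatewise median (Claim~\ref{claim:presentvalues})---is pinned in each color coordinate to the identifier of the vertex selected from $X_i$, and the only way to keep the overall cost within budget is for the selected edge vectors from $X_{\{i,j\}}$ (for all $j \ne i$) to agree with $y_{v_i}$ on that coordinate. This amounts to saying that every selected edge is incident to the selected vertex in each color, i.e., the selection encodes a multicolored clique. Since $t$ and $d$ are both bounded by a function of the parameter $k$ of \probMultiClique, this is a valid parameterized reduction.

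The principal obstacle, and the place where the construction must be done carefully, is designing the weight function and the background values so that the weighted $L_1$ median in each coordinate actually lands on the intended vertex identifier. The naive choice (identifier on the home coordinate, zero on every other coordinate, unit weights) fails: at each coordinate~$i$, there are only $k$ ``signal'' entries (one from $X_i$ and one from each $X_{\{i,j\}}$) versus $\binom{k}{2}$ background zeros, so for $k \ge 4$ the median is dragged to $0$ and the cluster cost becomes insensitive to consistency. Overcoming this requires giving the vertex-set vectors sufficiently large weights---or introducing auxiliary coordinates that isolate signal from background---so that the median at coordinate~$i$ is forced to $\mathrm{id}(v_i)$, and each edge vector chosen from $X_{\{i,j\}}$ whose $V_i$-endpoint differs from $v_i$ contributes a strictly positive additive penalty. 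Calibrating these weights and choosing $D$ so that the budget cleanly separates cliques from non-cliques is where I expect the bulk of the technical effort to lie.
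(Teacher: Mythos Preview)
Your plan is sound at the level of ``reduce from \probMultiClique and encode vertex identifiers in per-color coordinates,'' and you correctly diagnose the central difficulty: with na\"ive backgrounds the weighted median at each coordinate is swamped by non-signal entries. However, the proposal stops precisely at the point where the real construction begins, and the one concrete fix you float---giving the vertex-set vectors $X_i$ large weight---does not work as stated. The same large weight on the vector chosen from $X_j$ applies at coordinate~$i$ as well (where it sits at the background value), so inflating vertex weights inflates the background mass in every coordinate simultaneously; you have not broken the tie between signal and background, only scaled both sides. Without a concrete mechanism that forces the median at coordinate~$i$ to the intended identifier \emph{and} makes the residual background cost independent of which identifier that is, you cannot set a budget $D$ that separates cliques from non-cliques.

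The paper's construction sidesteps the weighting issue entirely and is structurally simpler than your plan: it uses no vertex sets at all. For each color pair $\{i,j\}$ there is an edge set $X_{i,j}$ (encoding the edge's endpoints at coordinates $i$ and $j$, zeros elsewhere) \emph{and a mirror set} $Y_{i,j}$ identical except that the background entries are set to the high value $|V(G)|+1$ instead of $0$. With unit weights and $t = 2\binom{k}{2}$, each coordinate then receives exactly $\binom{k-1}{2}$ zeros and $\binom{k-1}{2}$ copies of $|V(G)|+1$ from the background, plus $2(k-1)$ signal entries in the range $[1,|V(G)|]$. The symmetric sandwich forces the median into the signal range, and---crucially---the total contribution of the background entries is a fixed constant regardless of where in $[1,|V(G)|]$ the median lands. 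The variable part of the cost is therefore purely the $L_1$ dispersion of the $2(k-1)$ signal values, which is zero exactly when they all agree, i.e., when the selected edges form a clique. This mirror trick is the missing idea in your proposal.
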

    \begin{proof}
        We construct a reduction from \probMultiClique with the input $G$ and $k$. We set $d$ to $k$, for each pair of colors $1 \le i < j \le k$ and each $e = \{u, v\}$ between a vertex $u$ of color $i$ and a vertex $v$ of color $j$ we add a vector $x_e$ to the set $X_{i, j}$, such that $x_e[i] = u$, $x_e[j] = v$
        and all other coordinates are set to zero, and a vector $y_e$ to the set $Y_{i, j}$ which is the same as $x_e$, only coordinates other that $i$ and $j$ are set to $|V(G)| + 1$. We will refer to 0 and $|V(G)| + 1$ as boundary values. The sets $X_{i, j}$ and $Y_{i, j}$ are the input to \probClustSelect, so $t$ is $2\binom{k}{2}$,
        and we set $D$ to $k(|V(G)| + 1)\binom{k - 1}{2}$. Intuitively, the set $X_{i, j}$ corresponds to the choice of the clique edge between $i$-th and $j$-th color, and $Y_{i, j}$ mirrors it. All vectors have weight one. An example is given in Figure \ref{fig:l1selecthard}.

        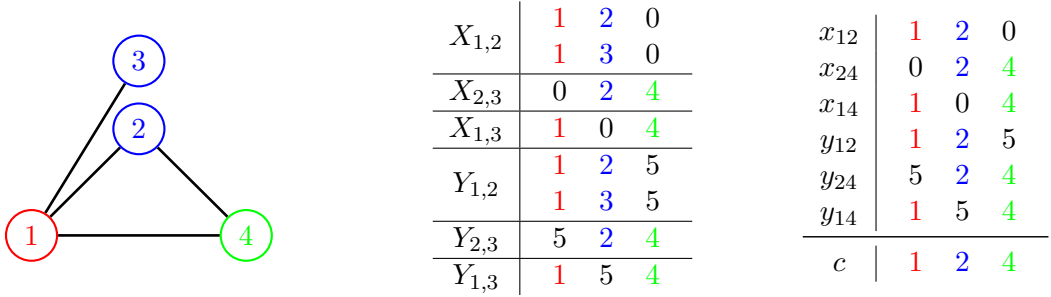
\begin{figure}[ht]
            \centering
            \begin{subfigure}{0.4\textwidth}
                \centering

                \begin{tikzpicture}[auto,  node distance=2cm,every loop/.style={},thick,main node/.style={circle,draw}, baseline={(0,1)}]
                    \node[main node, color=red] at (0,0) (1) {1};
                    \node[main node, color=blue] (2) [above right of=1] {2};
                    \node[main node,color=blue] (3) [above =0.2cm of 2] {3};
                    \node[main node,color=green] (4) [below right of=2] {4};
                    \draw[line width=1pt]
                    (1) -- (2);
                    \draw[line width=1pt]
                    (1) -- (3);
                    \draw[line width=1pt]
                    (1) -- (4);
                    \draw[line width=1pt]
                    (2) -- (4);
                \end{tikzpicture}
            \end{subfigure}\hfill
            \begin{subfigure}{0.3\textwidth}
                \centering
                \begin{tabular}{c|c}
                    $X_{1, 2}$& \begin{tabular}{c c c}
                        \textcolor{red}{1}&\textcolor{blue}{2}&0\\
                        \textcolor{red}{1}&\textcolor{blue}{3}&0
                    \end{tabular}\\
                    \hline
                    $X_{2, 3}$& \begin{tabular}{c c c}
                        0&\textcolor{blue}{2}&\textcolor{green}{4}
                    \end{tabular}\\
                    \hline
                    $X_{1, 3}$& \begin{tabular}{c c c}
                        \textcolor{red}{1}&0&\textcolor{green}{4}
                    \end{tabular}
                    \\
                    \hline
                    $Y_{1, 2}$& \begin{tabular}{c c c}
                        \textcolor{red}{1}&\textcolor{blue}{2}&5\\
                        \textcolor{red}{1}&\textcolor{blue}{3}&5
                    \end{tabular}\\
                    \hline
                    $Y_{2, 3}$& \begin{tabular}{c c c}
                        5&\textcolor{blue}{2}&\textcolor{green}{4}
                    \end{tabular}\\
                    \hline
                    $Y_{1, 3}$& \begin{tabular}{c c c}
                        \textcolor{red}{1}&5&\textcolor{green}{4}
                    \end{tabular}
                \end{tabular}
            \end{subfigure}\hfill
            \begin{subfigure}{0.3\textwidth}
                \centering
                \begin{tabular}{c|c}
                    $x_{12}$& \begin{tabular}{c c c}
                        \textcolor{red}{1}&\textcolor{blue}{2}&0\\
                    \end{tabular}\\
                    $x_{24}$& \begin{tabular}{c c c}
                        0&\textcolor{blue}{2}&\textcolor{green}{4}
                    \end{tabular}\\
                    $x_{14}$& \begin{tabular}{c c c}
                        \textcolor{red}{1}&0&\textcolor{green}{4}
                    \end{tabular}
                    \\
                    $y_{12}$& \begin{tabular}{c c c}
                        \textcolor{red}{1}&\textcolor{blue}{2}&5
                    \end{tabular}\\
                    $y_{24}$& \begin{tabular}{c c c}
                        5&\textcolor{blue}{2}&\textcolor{green}{4}
                    \end{tabular}\\
                    $y_{14}$& \begin{tabular}{c c c}
                        \textcolor{red}{1}&5&\textcolor{green}{4}
                    \end{tabular}\\\midrule
                    $c$&\begin{tabular}{c c c}
                        \textcolor{red}{1}&\textcolor{blue}{2}&\textcolor{green}{4}\\
                    \end{tabular}
                \end{tabular}
            \end{subfigure}

            \caption{An example illustrating the reduction in Theorem \ref{thm:l1selecthard}: an input graph $G$ with vertices colored in three colors, the sets of vectors produced by the reduction, and the resulting optimal cluster, corresponding to the clique on $\{1, 2, 4\}$.}
            \label{fig:l1selecthard}
        \end{figure}

        Note that in any feasible cluster, each coordinate $i$ has exactly $2(k - 1)$ values in $[1, |V(G)|]$, one from each of the sets $X_{i, j}$ and $Y_{i, j}$ for $j \ne i$. Out of all $2(\binom{k}{2} - k + 1) = 2\binom{k - 1}{2}$ other values, exactly half are zero and half are $|V(G)| + 1$. So the median is always in $[1, |V(G)|]$, and the boundary values in each column contribute exactly $(|V(G)| + 1)\binom{k - 1}{2}$ to the total distance.

        Assume there is a colorful $k$-clique in $G$, with vertices $v_1$, $v_2$, \dots, $v_k$.
        We form the resulting cluster by choosing the vector corresponding to the clique's edge between its $i$-th and $j$-th vertices from $X_{i, j}$, and also from $Y_{i, j}$, for all $1 \le i < j \le k$.
        For this cluster, in the $i$-th coordinate we have all non-boundary values equal to $v_i$. So the median is also $v_i$, and the total distance is $D$, since non-boundary values do not contribute anything.

        In the other direction, if we are able to select a cluster of cost exactly $D$, then all non-boundary values in each coordinate must be equal, denote this common value in the $i$-th coordinate as $v_i$. We claim that vertices $v_1$, $v_2$, \dots, $v_k$ form a colorful clique in $G$. Indeed, since we have $2(k - 1)$ times $v_i$ in the $i$-th column, then we have $(k - 1)$ of them from the sets $X_{i, j}$, one from each, and in the $j$-th column the only non-boundary value is $v_j$. So $v_i$ must have an edge to each $v_j$ for $j \ne i$. By construction, vertices in the $i$-th coordinate are of color $i$.

%        Finally, to see that the ETH bound holds note that an $n^{o(d + t^{1/2})}$ algorithm for \probClustSelect together with the reduction above would give a $|V(G)|^{o(k)}$ algorithm for \probMultiClique since $n = |E(G)|$, $d = k$ and $t = 2\binom{k}{2}$.
    \end{proof}

%In this section we consider the $L_p$ distance, which is defined as
%$$||x - y||_p^p = \sum_{i = 1}^d |x_i - y_i|^p,\]
%for any real $0 < p < 1$. Although it is not as nice as $L_1$ distance --- the mean is not necessarily the median, two crucial properties still hold, and we are still able to solve the problem in time $2^{\Oh(D \log D)} \cdot (nm)^{\Oh(1)}$ analoguosly to the $L_1$ case.

\fi

\iffull
%!TEX root = Integer_clustering.tex
\section{The $L_0$ distance}\label{sec:l0}

In this section, we consider the case $p = 0$. It is a natural measure of difference to consider since observation parameters are often incomparable, and we very well may be interested in counting only the number of different entries. From another point of view, the $L_0$ distance gives the \probClust problem a more combinatorial flavor, since the input vectors could be viewed as strings and we are interested about how close they are according to the Hamming distance. However, in comparison to a number of problems on strings, the size of the alphabet is unbounded.

First, note that there is a simple rule of finding the optimal cluster centroid for a given cluster.

\begin{observation}
    For a given cluster $C$, the coordinates of the optimal cluster centroid $c$ could be set as
\[c[i] = \text{the most frequent element of the multiset } \{x[i]\}_{x \in C}, \ 1 \le i \le d,\]
    breaking ties in favor of the lowest values.
    \label{obs:l0frequents}
\end{observation}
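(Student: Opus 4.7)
The plan is to exploit the fact that the $L_0$ distance is perfectly separable across coordinates. First I would write out
\[
\sum_{x \in C} \distzero(x, c) \;=\; \sum_{x \in C} \sum_{i=1}^d \mathbf{1}[x[i] \ne c[i]] \;=\; \sum_{i=1}^d \bigl|\{x \in C : x[i] \ne c[i]\}\bigr|,
\]
swapping the order of summation. This shows that the total cost is a sum of $d$ terms, each depending on only one coordinate of $c$, so the optimization over $c \in \mathbb{R}^d$ decomposes into $d$ independent one-dimensional optimizations.

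Next, for a fixed coordinate $i$, I would observe that $|\{x \in C : x[i] \ne c[i]\}| = |C| - |\{x \in C : x[i] = c[i]\}|$, so minimizing the former is equivalent to maximizing $|\{x \in C : x[i] = c[i]\}|$, which is precisely the multiplicity of the value $c[i]$ inside the multiset $\{x[i]\}_{x \in C}$. The maximum is attained by choosing $c[i]$ to be any mode of that multiset; the lowest-value tie-breaking rule is simply a way of fixing a canonical choice when several values tie.

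I do not expect any real obstacle here: the argument is just the coordinate-wise decomposition of the Hamming cost, which is immediate, followed by the trivial observation that the function $c[i] \mapsto |\{x \in C : x[i] = c[i]\}|$ is maximized at a mode. The only minor point worth noting is that a priori the centroid $c$ lives in $\mathbb{R}^d$, but the above argument shows that an optimal centroid can always be chosen with $c[i] \in \{x[i] : x \in C\}$, and so in particular integer; any value outside this set contributes $|C|$ to the cost in that coordinate and is therefore never better than a mode.
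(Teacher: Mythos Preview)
Your argument is correct. The paper does not actually supply a proof of this observation at all; it is stated as self-evident and then immediately used (the very next sentence invokes it to conclude that optimal centroids can be taken integral). Your coordinate-wise decomposition and mode argument is exactly the natural justification the paper leaves implicit, so there is nothing to compare.
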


By Observation \ref{obs:l0frequents}, we may assume that optimal cluster centroids could never have values not present in the input, and in particular that they are integral.

We prove W[1]-hardness of \probClust with the $L_0$ distance by showing a reduction from \probClique. The reduction also shows hardness of \probClustSelect.

Note that when $d$ is fixed, we could apply Theorem \ref{thm:genfpt} to obtain an FPT algorithm: \probClustSelect solves trivially by trying every present value in each coordinate as a value for the centroid, there are only $n^d$ variants. The $\alpha$-property holds for $L_0$ distance with $\alpha = 1$ since at most one initial cluster could coincide with the cluster centroid, and all others have distance at least one.

We restate Theorem~\ref{thm:l0dDhard1}, which we prove next.

\lzerohard*
\begin{proof}
    First we show how to obtain an FPT reduction from \probClique parameterized by the clique size to \probClust.

    Given an instance ($G$, $k$) of \probClique, for each pair of indices $\{i, j\}$, $1 \le i < j \le k$, we make $|E(G)|$ vectors in $\mathbb{Z}^k$, assume $k \ge 3$. For each $e = \{u, v\} \in E(G)$, we add a vector $x_{i, j, e}$: two coordinates are set to vertex values, $x_{i, j, e}[i] = u$, $x_{i, j, e}[j] = v$, and in all other coordinates $x_{i, j, e}$ is set to the special padding value $c_{i, j, e}  = |V(G)| + (k\cdot i + j) \cdot |E(G)| + e$.
    In total, there are $n = \binom{k}{2}|E(G)|$ vectors and $|V(G)| + \binom{k}{2}|E(G)|$ different values, since there are $|V(G)|$ vertex values, all padding values are distinct from vertex values and from each other.

    Finally, we set $k' = n - \binom{k}{2} + 1$ and $D = \binom{k}{2}(k - 2)$. An example of the reduction is shown in Figure \ref{fig:l0dDhard1}.

        \begin{figure}[ht]
            \centering
            \begin{subfigure}{0.4\textwidth}
                \centering

                \begin{tikzpicture}[auto,  node distance=2cm,every loop/.style={},thick,main node/.style={circle,draw}, baseline={(0,1)}]
                    \node[main node] at (0,0) (1) {1};
                    \node[main node] (2) [above right of=1] {2};
                    \node[main node] (3) [above =0.2cm of 2] {3};
                    \node[main node] (4) [below right of=2] {4};
                    \draw[line width=1pt]
                    (1) -- (2);
                    \draw[line width=1pt]
                    (1) -- (3);
                    \draw[line width=1pt]
                    (1) -- (4);
                    \draw[line width=1pt]
                    (2) -- (4);
                \end{tikzpicture}
            \end{subfigure}\hfill
            \begin{subfigure}{0.3\textwidth}
                \centering
                \begin{tabular}{c|c}
                    $x_{1, 2, \cdot}$& \begin{tabular}{c c c}
                        1&2&$\cdot$\\
                        1&3&$\cdot$\\
                        1&4&$\cdot$\\
                        2&4&$\cdot$\\
                    \end{tabular}\\
                    \hline
                    $x_{1, 3, \cdot}$& \begin{tabular}{c c c}
                        1&$\cdot$&2\\
                        1&$\cdot$&3\\
                        1&$\cdot$&4\\
                        2&$\cdot$&4\\
                    \end{tabular}\\
                    \hline
                    $x_{2, 3, \cdot}$& \begin{tabular}{c c c}
                        $\cdot$&1&2\\
                        $\cdot$&1&3\\
                        $\cdot$&1&4\\
                        $\cdot$&2&4\\
                    \end{tabular}
                \end{tabular}
            \end{subfigure}\hfill
            \begin{subfigure}{0.3\textwidth}
                \centering
                \begin{tabular}{c|c}
                    $x_{1,2,12}$& \begin{tabular}{c c c}
                        1&2&$\cdot$\\
                    \end{tabular}\\
                    $x_{1,3,14}$& \begin{tabular}{c c c}
                        1&$\cdot$&4
                    \end{tabular}\\
                    $x_{2,3,24}$& \begin{tabular}{c c c}
                        $\cdot$&2&4
                    \end{tabular}
                    \\\midrule
                    $c$&\begin{tabular}{c c c}
                        1&2&4\\
                    \end{tabular}
                \end{tabular}
            \end{subfigure}

            \caption{An example illustrating the reduction in Theorem \ref{thm:l0dDhard1}: an input graph $G$, the vectors produced by the reduction (for clarity, they are separated over corresponding pairs $\{i, j\}$, and padding values are replaced by dots), and the only composite cluster in the resulting optimal clustering of cost 3, corresponding to the clique on $\{1, 2, 4\}$.}
            \label{fig:l0dDhard1}
        \end{figure}
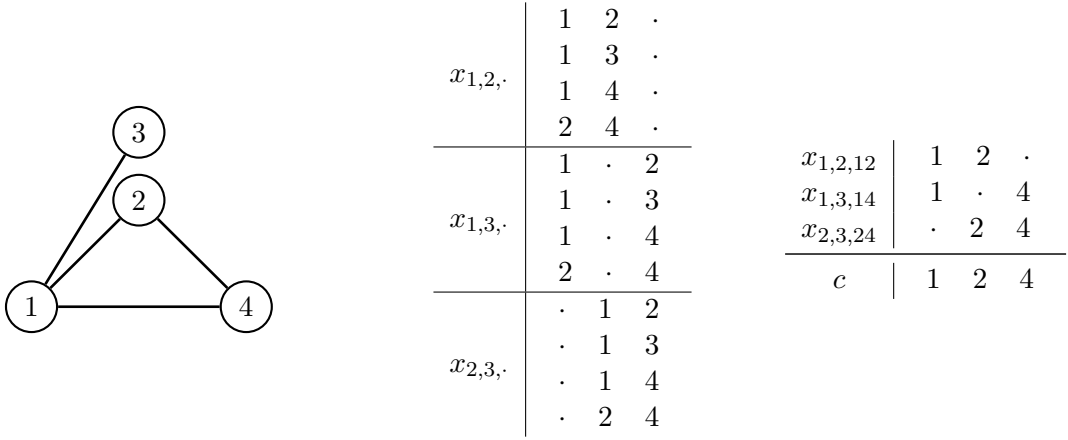

    Now we prove that the original instance has a $k$-clique iff the transformed instance has a $k'$-clustering of cost at most $D$.

If there is a $k$-clique, there is a clustering with cost $D$: we take one nontrivial cluster of size $\binom{k}{2}$ and all other clusters are of size 1. Let $v_1$,..., $v_k$ be the vertices of the clique, for each $\{i, j\}$, $1 \le i < j\le k$ we take $x_{i, j, \{v_i, v_j\}}$ into the cluster. The cluster centroid is $(v_1, ..., v_k)$, each vector in the cluster has distance to the centroid of exactly $(k - 2)$.

Now to the opposite direction. Assume that there is a clustering of cost at most $D$, and there are $t$ composite clusters: $C_1$, ..., $C_t$. In each cluster and each coordinate, by Observation \ref{obs:l0frequents} we may assume that we select the most frequent vertex there as the value of the centroid, since all padding values are distinct. If there are no vertex values in this cluster in this coordinate, we may assume that we select any of the occuring padding values.
For a cluster $C$, denote the number of vertex-containing coordinates as $\beta(C)$, and the total number of vertex-valued entries which do not match with the centroid value in the corresponding coordinate as $\gamma(C)$. We could write the total cost of the clustering as
\[\sum_{i=1}^t \left(|C_i| (k - 2) - (k - \beta(C_i)) + \gamma(C_i) \right).\]
That holds since in each cluster $C_i$ each of the $|C_i| (k - 2)$ padding values is not matched with the cluster centroid and increases the total distance by one, except for the $(k - \beta(C_i))$ vertex-free coordinates, where exacly one of the padding values is selected as the value of the centroid. 
Also each vertex-valued entry which is not matched with the centroid increases the total distance by one, there are $\gamma(C_i)$ of them.

There are $n - \binom{k}{2} + 1$ clusters in total, $n - \binom{k}{2} + 1 - t$ of them are simple. We may assume that in the optimal clustering there are no empty clusters, since we could always move a vector from a composite cluster to an empty one without increasing the cost. So there are $n - (n - \binom{k}{2} + 1 - t) = t + \binom{k}{2} - 1$ vectors in the composite clusters, which is equal to $\sum_{i=1}^t |C_i|$. We could rewrite the total cost as
\[(t + \binom{k}{2} - 1) (k - 2) - tk + \sum_{i=1}^t (\beta(C_i) + \gamma(C_i)) = \binom{k}{2}(k - 2) - (k - 2) + \sum_{i=1}^t (\beta(C_i) - 2 + \gamma(C_i)).\]

Now we show that for any clustering the value $\sum_{i=1}^t (\beta(C_i) - 2 + \gamma(C_i))$ is at least $(k - 2)$, and it is equal to $(k-2)$ only in the $k$-clique clustering.
It suffices to prove the following lemma. 
\begin{lemma}
    For any cluster $C$ such that $2 \le |C| \le \binom{k}{2}$, $\frac{\beta(C) - 2 + \gamma(C)}{|C| - 1} \ge \kappa$, where $\kappa = \frac{k - 2}{\binom{k}{2} - 1} = \frac{2}{k+1}$, and the equality holds only when $C$ is a $k$-clique.
    \label{lemma:l0clustercost}
\end{lemma}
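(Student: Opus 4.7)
The plan is to combine two structural bounds tied to a multigraph $H$ that encodes $C$. Set $s = |C|$, let $V(H) \subseteq \{1,\dots,k\}$ consist of the coordinates in which some vector of $C$ has a vertex value (so $|V(H)| = b := \beta(C)$), and for each pair $(i,j)$ add $n_{(i,j)}$ parallel copies of the edge $\{i,j\}$ to $H$, one per vector of $C$ with that pair index; let $m$ denote the number of distinct pairs used. Then $\sum_{(i,j)} n_{(i,j)} = s$, $m \le \binom{b}{2}$, and $b \le k$.

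The first step will be to establish $\gamma(C) \ge s - m$, equivalently $\sum_{i \in V(H)} F_i \le s + m$. Fix a pair $(i,j) \in E(H)$ with $n = n_{(i,j)}$; its vectors correspond to $n$ distinct edges of $G$, so the pairs $(u_e, v_e)$ of their coordinate-$i$ and coordinate-$j$ values are all distinct. Hence at most one edge can simultaneously satisfy $u_e = c[i]$ and $v_e = c[j]$, giving $|\{e : u_e = c[i]\}| + |\{e : v_e = c[j]\}| \le n + 1$ by inclusion-exclusion. Summing over $(i,j) \in E(H)$ recovers $\sum_i F_i$ on the left (each vertex-valued entry at coordinate $i$ lies in exactly one pair incident to $i$) and produces $s + m$ on the right.

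Next I combine $\beta(C) + \gamma(C) - 2 \ge (b - 2) + (s - m)$ with the elementary inequality $(k-1)s + (k+1)(b - m) \ge 2k$, valid whenever $2 \le b \le k$, $m \le \binom{b}{2}$, $s \ge m$. For fixed $b$ the left side is minimized at $s = m = \binom{b}{2}$ (the coefficient $-(k+1)$ favors large $m$, and then $(k-1)$ favors small $s = m$), reducing it to $b(k + 2 - b)$; this concave quadratic on $[2, k]$ attains its minimum at the endpoints, both equal to $2k$. Dividing by $k + 1$ and rearranging yields exactly $\beta(C) + \gamma(C) - 2 \ge \frac{2(s-1)}{k+1} = \kappa(s - 1)$.

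For equality all the bounds above must be tight: $\gamma = s - m$, $s = m = \binom{b}{2}$, and $b \in \{2, k\}$. The case $b = 2$ forces $s = 1$, excluded by $|C| \ge 2$, so $b = k$, $s = m = \binom{k}{2}$, and $\gamma = 0$. Then $H = K_k$ with a single vector per pair, and $\gamma = 0$ forces the $k - 1$ vertex entries at each coordinate $i$ to share a common value $v_i$; the map $\{i,j\} \mapsto \{v_i, v_j\}$ then embeds $E(K_k)$ into $E(G)$, exhibiting a $k$-clique on $\{v_1, \dots, v_k\}$. I expect the main delicate point to be the inclusion-exclusion bookkeeping in the first step, which must link the per-pair bound $|A| + |B| \le n + 1$ cleanly to the global quantities $F_i$; once that is set, the remaining work is a short endpoint check on $b(k + 2 - b)$.
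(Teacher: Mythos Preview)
Your argument is correct and takes a genuinely different route from the paper's. The paper separates into the case $\gamma(C)=0$ and the case $\gamma(C)>0$. For $\gamma(C)=0$ it proves, by induction on $l$, that $\binom{l}{2}<|C|$ forces $\beta(C)\ge l+1$ (peeling off a coordinate and the vectors using it), and then uses $\frac{\beta(C)-2}{|C|-1}\ge\frac{l-1}{\binom{l+1}{2}-1}=\frac{2}{l+2}\ge\frac{2}{k+1}$. For $\gamma(C)>0$ it deletes the mismatching vectors, reduces to a $\gamma=0$ cluster $C'$, and argues about how the ratio changes when the deleted vectors are reinserted.

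Your approach bypasses both the induction and the case split. The key new idea is the uniform lower bound $\gamma(C)\ge s-m$ obtained by the per--pair inclusion--exclusion $|A_{(i,j)}|+|B_{(i,j)}|\le n_{(i,j)}+1$; summing immediately gives $\sum_i F_i\le s+m$. After that, the remaining inequality $(k-1)s+(k+1)(b-m)\ge 2k$ is a one--line optimization (minimize at $s=m=\binom{b}{2}$ and check $b(k+2-b)\ge 2k$ on $[2,k]$). This is cleaner and shorter: you trade the paper's inductive structural claim for a counting identity plus a concave endpoint check, and the equality analysis falls out directly rather than through a fraction--manipulation argument. One tiny addendum worth stating explicitly in the equality case: the $v_i$ are pairwise distinct because the single vector for the pair $(i,j)$ encodes an edge of $G$, hence $v_i\ne v_j$; this is what makes $\{v_1,\dots,v_k\}$ a genuine $k$-clique rather than a multiset.
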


The lemma implies
\[\sum_{i=1}^t (\beta(C_i) - 2 + \gamma(C_i)) = \sum_{i=1}^t \frac{\beta(C_i) - 2 + \gamma(C_i)}{|C_i| - 1} (|C_i| - 1) \ge \kappa \sum_{i = 1}^t (|C_i| - 1) = \kappa \left(\binom{k}{2} - 1\right) = k - 2,\]
and also that the equality holds only when each term is equal to $\kappa$, so each $C_i$ is a $k$-clique, but then $t = 1$ since $\sum_{i= 1}^t (|C_i| - 1) = \binom{k}{2} - 1$. So $G$ must contain a $k$-clique if there is a clustering of cost at most $D$, and the reduction is correct. Note that none of the $C_i$ could have size larger than $\binom{k}{2}$ since there are $n - \binom{k}{2} + 1$ clusters in total.

\begin{proof}[Proof of Lemma \ref{lemma:l0clustercost}]
    
    First, we consider the case $\gamma(C) = 0$, so in each coordinate all vertex values are equal.

    \begin{claim}
        If $C$ is a cluster of vectors obtained by applying the reduction described in the proof of Theorem~\ref{thm:l0dDhard1} to any graph $H$, $\gamma(C) = 0$, and $\binom{l}{2} < |C|$, then $\beta(C) \ge l + 1$.
    \label{claim:l0widecluster}
\end{claim}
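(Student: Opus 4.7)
The plan is to exploit the combinatorial structure imposed by the condition $\gamma(C) = 0$: it forces each vertex-containing coordinate to display exactly one fixed vertex value, which in turn forces distinct vectors of $C$ to correspond to distinct coordinate pairs. The bound $|C| \le \binom{\beta(C)}{2}$ will then give the claim directly.

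First I would unpack the hypothesis $\gamma(C) = 0$. By construction of the reduction, every vector has the form $x_{i,j,e}$ with $e = \{u, v\}$, so it contains vertex values only in coordinates $i$ and $j$, and padding values (which are unique to the triple $(i,j,e)$) elsewhere. The condition $\gamma(C) = 0$ says that in every coordinate $r$ that contains at least one vertex value, all vertex values appearing in that coordinate among the vectors of $C$ are equal; call this common value $v_r$. Denote by $S \subseteq \{1,\dots,k\}$ the set of vertex-containing coordinates of $C$, so $|S| = \beta(C)$.

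Next I would use this to show that each vector of $C$ is uniquely determined by its coordinate pair. Every $x_{i,j,e} \in C$ has $i, j \in S$ (since it places vertex values precisely at coordinates $i$ and $j$). Moreover, writing $e = \{u, v\}$ with $u$ placed at coordinate $i$ and $v$ at coordinate $j$, the $\gamma(C) = 0$ condition forces $u = v_i$ and $v = v_j$. Hence $e = \{v_i, v_j\}$ is completely determined by the pair $\{i, j\}$, and consequently the whole vector $x_{i,j,e}$ is determined by $\{i,j\} \in \binom{S}{2}$.

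It follows that the map $x_{i,j,e} \mapsto \{i,j\}$ from $C$ to $\binom{S}{2}$ is injective, giving
\[
|C| \;\le\; \binom{|S|}{2} \;=\; \binom{\beta(C)}{2}.
\]
Combining with the hypothesis $|C| > \binom{l}{2}$ yields $\binom{\beta(C)}{2} > \binom{l}{2}$, and since $\binom{\cdot}{2}$ is strictly increasing on nonnegative integers, we conclude $\beta(C) \ge l + 1$. The only nontrivial step is the injectivity argument, and the $\gamma = 0$ hypothesis makes it essentially automatic, so I do not anticipate a real obstacle here.
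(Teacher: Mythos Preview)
Your proof is correct and actually cleaner than the paper's. The paper proves the claim by induction on $l$: it argues that either some coordinate already has $l$ vertex occurrences (forcing $l+1$ distinct ``other'' coordinates), or every coordinate has at most $l-1$ occurrences, in which case deleting one vertex-containing coordinate and the vectors using it leaves a cluster of size $>\binom{l-1}{2}$ to which the induction hypothesis applies. Your argument bypasses the induction entirely by observing the sharp inequality $|C|\le\binom{\beta(C)}{2}$ directly from the injection $x_{i,j,e}\mapsto\{i,j\}$; this is exactly the combinatorial content that the paper's induction is extracting step by step, but you get it in one line.

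One tiny wording nit: $\binom{\cdot}{2}$ is not strictly increasing on all nonnegative integers (it vanishes at $0$ and $1$), but what you actually need is the contrapositive $\beta(C)\le l \Rightarrow \binom{\beta(C)}{2}\le\binom{l}{2}$, which follows from monotonicity alone. This does not affect the validity of your argument.
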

\begin{proof}
The proof is by induction on $l$. The base is $l = 1$, and each non-empty cluster contains at least one vector and so at least 2 coordinates with vertices, we assume $\binom{1}{2} = 0$.

For the general case, if there are at least $l$ occurences of a vertex $v$ in a coordinate $i$, then there are at least $(l + 1)$ coordinates with vertices. Each vector with $v$ in the $i$-th coordinate has also some other vertex in some other coordinate. As in each coordinate all vertex values are equal, it could not be that two of the vectors with the value $v$ in the $i$-th coordinate share the second vertex-valued coordinate, since then they would represent the same edge.

So each coordinate has at most $(l - 1)$ vertex occurences, otherwise the claim holds. Select a coordinate $j$ which contains some vertex value $u$ and remove the $j$-th coordinate and all vectors which have the value $u$ in the $j$-th coordinate. That corresponds to the natural restriction $C'$ of the cluster $C$ to a subgraph $H - u$. The size of $C'$ is at least $\binom{l}{2} + 1 - (l - 1) = \binom{l - 1}{2} + 1$, and by induction there are at least $l$ coordinates which contain vertex values, so the original cluster $C$ has at least $l + 1$ such coordinates, since there is also the $j$-th coordinate with the vertex value $u$.
\end{proof}

Now consider a cluster $C$ with $\gamma(C) = 0$. Let $l$ be the largest value with $\binom{l}{2} + 1 \le |C|$, so $|C| \le \binom{l + 1}{2}$. Since $|C| \le \binom{k}{2}$, $l + 1 \le k$. By Claim \ref{claim:l0widecluster}, $\beta(C) \ge l + 1$, then 
\[\frac{\beta(C) - 2}{|C| - 1} \ge \frac{l - 1}{\binom{l + 1}{2} - 1} = \frac{2}{l + 2} \ge \frac{2}{k + 1} = \kappa,\]
and so if $l + 1 < k$, the inequality is strict. It is also strict if $l + 1 = k$ and $|C| < \binom{k}{2}$, as the denominator becomes larger in the first step. Thus the only possibility of getting exactly $\kappa$ is when $|C| = \binom{k}{2}$.

But then we have exactly $k \cdot (k - 1)$ vertex values across $k$ coordinates, and each coordinate has at most $(k - 1)$ vertex values by the argument in Claim \ref{claim:l0widecluster}, so each coordinate must have exactly $(k - 1)$ vertex values. Since $\gamma(C) = 0$, they must be all equal. Denote the common vertex value in the $i$-th coordinate as $v_i$. Since each occurence of $v_i$ in the $i$-th coordinate corresponds to an edge to a different $v_j$, vertices $v_1$, \dots, $v_k$ form a clique in $G$.
    
In the case $\gamma(C) > 0$, consider a new cluster $C'$ which is obtained from $C$ by removing all vectors which have a vertex-valued entry not equal to the centroid value. Assume for now that $|C'| \ge 2$. By the proof above, $\frac{\beta(C') - 2}{|C'| - 1} \ge \kappa$, since $\gamma(C') = 0$.
The value $\frac{\beta(C) - 2 + \gamma(C)}{|C| - 1}$ could be obtained from $\frac{\beta(C') - 2}{|C'| - 1}$ 
by adding $\gamma(C) + (\beta(C) - \beta(C')$ to the numerator and $|C| - |C'|$ to the denominator. Removing vectors could not increase $\beta$, so $\beta(C) - \beta(C') \ge 0$, and $\gamma(C) \ge |C| - |C'|$ since each of the removed vectors has at least one vertex value not equal to the centroid value. If $\frac{\beta(C') - 2}{|C'| - 1} \ge 1$, then the new fraction is also at least 1 and so striclty greater than $\kappa$. If $|C'| \le 1$, then $\frac{\beta(C) - 2 + \gamma(C)}{|C| - 1} \ge 1$ since $\beta(C) \ge 2$ and $\gamma(C) \ge |C| - |C'|$. If $\frac{\beta(C') - 2}{|C'| - 1} < 1$, then the new fraction became strictly larger, and so stricly larger than $\kappa$. In all cases, the inequality is strict when $\gamma(C) > 0$.

\end{proof}

Now to \probClustSelect: the reduction is almost the same, only we start from \probMultiClique, and for each pair of indices $\{i, j\}$, $1 \le i < j \le k$ we obtain the set of vectors $X_{i, j}$ from edges in $G$ starting in color $i$ and ending in color $j$. The vectors are constructed in the same way as in the previous reduction. All weights are set to one. The value of $D$ is the same, $D = \binom{k}{2}(k - 2)$.

Since vectors are constructed in the same way, all statements about the cost of grouping them remain valid, in particular Lemma \ref{lemma:l0clustercost}. Only now the statement of \probClustSelect already guarantees that we select exactly one cluster and exactly one vector from each $X_{i, j}$, so exactly one edge between each pair of colors. And by Lemma \ref{lemma:l0clustercost} only the proper $k$-clique has the optimal cost.

%Finally, we prove that the stated ETH bounds hold. One of the reductions above together with an $n^{o(d + D^{1/3})}$ algorithm for \probClust or an $n^{o(d + t^{1/2} + D^{1/3})}$ algorithm for \probClustSelect would give a $|V(G)|^{o(k)}$ algorithm for \probClique since $n = \binom{k}{2} |E(G)|$, $d = k$, $t = \binom{k}{2}$ and $D = \binom{k}{2}(k-2)$.

\end{proof}

Note that \probClustSelect with the $L_0$ distance is very similar to the known problem \textsc{Consensus String With Outliers}, studied e.g. in~\cite{BoucherLL11}. The only difference of \probClustSelect is that we have to select one point from each of the given sets, whereas in \textsc{Consensus String With Outliers} the goal is to select the arbitrary subset of size $(n - k)$. The construction from Theorem~\ref{thm:l0dDhard1} also shows W[1]-hardness of \textsc{Consensus String With Outliers} with respect to $(d + D + n - k)$ in the case of unbounded alphabet.

%!TEX root = Integer_clustering.tex
\section{The $L_\infty$ distance}\label{sec:tinfty}

In this section, we consider the case $p = \infty$. We prove two hardness results of \probClust: \classW1-hardness when parameterized by $D$ and \classNP-hardness in the case $k = 2$.

First, we prove some useful facts about the structure of optimal cluster centroids. The one thing, in which the $L_\infty$ distance is harder than all other distances in our consideration, is that even when the cluster is given, we can not just find the optimal cluster centroid by optimizing the value in each coordinate independently. So there seems to be no simple rule of finding the optimal cluster centroid of a given cluster. However, one could still do that in polynomial time by solving a linear program.

\begin{claim}
Given a multiset $C$ of vectors in $\mathbb{Z}^d$, there is a polynomial time algorithm to find  $c \in \mathbb{R}^d$ minimizing
\[\sum_{x \in C} \distinfty(x, c).\]
%in polynomial time.
\end{claim}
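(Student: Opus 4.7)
The plan is to cast the problem as a linear program of polynomial size and invoke any polynomial-time LP solver (e.g.\ the ellipsoid or interior point method). Specifically, I introduce the decision variables $c[1],\dots,c[d] \in \mathbb{R}$ for the coordinates of the centroid, together with one auxiliary variable $t_x \in \mathbb{R}$ for each $x \in C$ intended to represent $\distinfty(x,c) = \max_i |x[i]-c[i]|$. The objective is $\min \sum_{x\in C} t_x$, subject to the constraints
\[
t_x \ge x[i] - c[i] \quad \text{and} \quad t_x \ge c[i] - x[i] \qquad \text{for all } x \in C \text{ and } i \in \{1,\dots,d\}.
\]

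The key step is to verify that this LP computes exactly the desired quantity. In any feasible solution, each $t_x$ is at least $\max_i |x[i]-c[i]|$, so $\sum_x t_x \ge \sum_x \distinfty(x,c)$. Conversely, given any $c \in \mathbb{R}^d$, setting $t_x := \distinfty(x,c)$ yields a feasible solution of value $\sum_x \distinfty(x,c)$. Therefore the optimum of the LP equals $\min_{c\in\mathbb{R}^d}\sum_{x\in C}\distinfty(x,c)$, and the $c$-part of any LP optimum is the desired optimal centroid.

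Finally, the LP has $d + |C|$ variables and $2d|C|$ constraints with coefficients that are either $\pm 1$ or entries of vectors in $C$, so it has size polynomial in the input; hence it is solvable in polynomial time by standard LP algorithms. The only minor obstacle is that an LP optimum is in general only guaranteed to be rational rather than in $\mathbb{Z}^d$, but the statement of the claim allows $c \in \mathbb{R}^d$, so this poses no issue.
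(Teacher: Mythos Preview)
Your proof is correct and follows essentially the same approach as the paper: both reduce the problem to a linear program with one variable per centroid coordinate and one auxiliary variable per vector in $C$, with the same $2d|C|$ linear constraints encoding the $L_\infty$ distance. Your write-up is in fact slightly more detailed in justifying correctness and polynomial size, but the argument is the same.
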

\begin{proof}
    We reduce to solving a linear program, which we define next. Denote $C = \{x_1, \dots, x_n\}$, introduce variables $c_1$, \dots, $c_d$ corresponding to coordinates of the cluster centroid and variables $d_1$, \dots, $d_n$, where $d_i$ corresponds to the value $\distinfty(x_i, c)$. The following linear program solves to the minimum total distance.
\begin{gather*}
\sum_{i=1}^n d_i \to \min\\
x_i[j] - c_j \le d_i \quad \forall \, i, j: 1 \le i \le n, 1 \le j \le d\\
c_j - x_i[j] \le d_i \quad \forall \, i, j: 1 \le i \le n, 1 \le j \le d
\end{gather*}

\end{proof}

The next claim shows that we could only consider half-integral cluster centroids.

\begin{claim}
For any multiset $C$ of vectors in $\mathbb{Z}^d$, the vector $c \in \mathbb{R}^d$ which minimizes
\[\sum_{x \in C} \distinfty(x, c)\]
could always be chosen from $\frac{1}{2}\mathbb{Z}^d$ (coordinates are either integer or half-integer).
\label{claim:linfhalf}
\end{claim}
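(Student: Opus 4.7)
The plan is to reformulate the problem as the LP already used in the previous claim and then exploit the special structure of its constraint matrix. That LP has variables $c_1,\dots,c_d$ and $d_1,\dots,d_n$, constraints of the form $c_j+d_i\ge x_i[j]$ and $c_j-d_i\le x_i[j]$ for all $i,j$, and objective $\sum_i d_i$. The objective is bounded below by $0$, and the feasible region is pointed: a line in direction $(u,w)$ through a feasible point would require both $u_j+w_i=0$ and $u_j-w_i=0$ for all $i,j$, forcing $(u,w)=0$. Hence an optimal vertex $(c^{*},d^{*})$ exists.

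The crucial observation is that the coefficient matrix has entries only in $\{-1,0,+1\}$, with exactly two non-zero entries in each row (one in some $c_j$-column and one in some $d_i$-column), and the right-hand side is integer since each $x_i[j]\in\mathbb{Z}$. I would then invoke the classical fact that every square submatrix $M$ of such a matrix satisfies $\det(M)\in\{-2,-1,0,1,2\}$. A short inductive proof does the job: if some row or column of $M$ has at most one non-zero, expand along it to reduce to a smaller submatrix; otherwise every row and every column has exactly two non-zeros, so up to row signs $M$ is the incidence matrix of a disjoint union of cycles, and its determinant is $0$ or $\pm 2$ according to the product of signs along each cycle.

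The last step is Cramer's rule. At the vertex $(c^{*},d^{*})$ one can select $d+n$ linearly independent tight constraints, giving a square invertible system $Mz=b$ where $M$ is a submatrix of the constraint matrix and $b\in\mathbb{Z}^{d+n}$. Since $|\det(M)|\in\{1,2\}$ and every cofactor is an integer, Cramer's rule places each coordinate of $z$ in $\tfrac{1}{2}\mathbb{Z}$. In particular $c^{*}\in\tfrac{1}{2}\mathbb{Z}^d$, which is exactly what the claim asserts.

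The only non-routine ingredient is the subdeterminant bound; once this is granted (it is textbook material on incidence matrices of signed graphs), everything else is standard LP machinery. So the main obstacle is confirming the bound on subdeterminants, and I would either give the short inductive proof sketched above or simply cite it.
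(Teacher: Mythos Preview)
The LP approach is a natural idea, but your subdeterminant bound is incorrect. When the reduced matrix in your ``otherwise'' branch decomposes into $\ell>1$ cycles, the determinant is the \emph{product} of the cycle contributions, each of which is $0$ or $\pm 2$, so the total can be $\pm 2^{\ell}$, not just $0$ or $\pm 2$. A concrete counterexample with $d=n=2$: take the four rows corresponding to $c_1+d_1\ge x_1[1]$, $-c_1+d_1\ge -x_1[1]$, $c_2+d_2\ge x_2[2]$, $-c_2+d_2\ge -x_2[2]$. The resulting $4\times 4$ matrix is block-diagonal with two blocks $\bigl(\begin{smallmatrix}1&1\\-1&1\end{smallmatrix}\bigr)$, and its determinant is $4$. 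This is the same phenomenon as in the vertex-cover LP, where the incidence matrix of $k$ disjoint triangles has determinant $\pm 2^{k}$; half-integrality there is \emph{not} proved by a global subdeterminant bound, so the ``textbook'' citation does not give what you need. With such a basis Cramer's rule only places the vertex in $\tfrac{1}{4}\mathbb{Z}^{d+n}$, and you have not argued that a basis with $|\det|\le 2$ can always be chosen.

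It is in fact true that every vertex of this LP is half-integral, but the proof requires a perturbation argument rather than determinants: assuming some coordinate is not in $\tfrac{1}{2}\mathbb{Z}$, one exhibits two distinct feasible points whose midpoint is the supposed vertex. The paper carries out exactly this kind of argument, directly in the space of centroids. It groups the coordinates of $c$ by their distance to $\tfrac{1}{2}\mathbb{Z}$, observes that shifting all coordinates in one such group simultaneously (some up by $\beta$, some down by $\beta$) changes each $\distinfty(x,c)$ by $\pm\beta$, and picks the direction that does not increase the sum. Iterating removes one non-half-integral ``remainder'' at a time. Your LP framework could be salvaged along these same lines, but the Cramer-rule shortcut does not go through.
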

\begin{proof}
Assume that we have an optimal solution $c$ which has at least one coordinate not of the form $z/2$, $z\in\mathbb{Z}$.
For $a \in \mathbb{R}$ we denote $\text{frac}(a) = a - \lfloor a \rfloor$, and
\[\text{rem}(a) = \begin{cases}\text{frac}(a), \quad \text{if} \text{ frac} (a) < 1/2\\
       1 - \text{frac}(a), \quad \text{otherwise}\end{cases},\]
calling this value the \emph{remainder} of $a$.

%Choose any such coordinate $j$ and denote $\xi = c_j - \lfloor c_j \rfloor$ if it's less than half or otherwise $\xi = \lceil c_j \rceil - c_j$. Note that $\xi < 1/2$, and so $\xi \ne 1 - \xi$.

We could partition all coordinates on equivalence classes by remainder of $c$. One could also define a partition of all vectors by the remainder of the distance to $c$. These two partitions are related in the following sense:
if $\distinfty(x, c)$ has remainder $\xi$ then each coordinate $j$ where $|x[j] - c[j]| = \distinfty(x, c)$ also has remainder $\xi$, and vice versa. Now we take one particular remainder and show that we can shift it without losing optimality.

There are two kinds of vectors with the particular remainder $\xi$: call \emph{bottom} those vectors $x$ for which $\text{frac}(\distinfty(x, c)) = \xi$, and call \emph{top} those vectors $x$ for which $\text{frac} (\distinfty(x, c)) = 1 - \xi$. Similarly, there are also two kinds of
coordinates of $c$, which we also call bottom and top depending of the value of $\text{frac}(c[j])$.

Consider a bottom cordinate $j$. Increasing $c[j]$ increases $|x[j] - c[j]|$ for all bottom vectors $x$, and decreases $|x[j] - c[j]|$ for all top vectors $x$. Decreasing $c[j]$ does the opposite, as well as increasing a top coordinate.
So if we take some sufficiently small value $\beta$ and simultaneously increase all bottom coordinates and decrease all top coordinates by $\beta$ then for all bottom vectors their distance will become larger by $\beta$, and for all top vectors --- smaller by $\beta$.
An if we do the opposite, the bottom vectors will cost less and the top vectors will cost more. Then, we could just take the group which has more vectors (bottom or top) and choose that action which decreases the distance for these vectors. The larger group has at least as many vectors as the smaller group, so the total distance does not increase.

It remains to see which value of $\beta$ we could take. We could safely shift until we either reach a value in $\frac{1}{2}\mathbb{Z}$ or another remainder. In any case, we reduce the number of distinct remainders by one, and so we conclude the proof by doing this inductively over the number of distinct remainders.

\end{proof}

By Claim \ref{claim:linfhalf}, the $\alpha$-property holds with $\alpha = 1/2$, since at most one vector could be equal to the cluster centroid, and all others have distance at least $1/2$ due to half-integrality. We can also see that when the problem is parameterized by $d + D$, it is FPT.

\begin{claim}
    \probClust with the $L_\infty$ distance is FPT when parameterized by $d + D$.
\end{claim}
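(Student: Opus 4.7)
The plan is to invoke Theorem~\ref{thm:genfpt} with the $L_\infty$ distance. Two of the three ingredients are already in hand: the $\alpha$-property holds with $\alpha = 1/2$ by Claim~\ref{claim:linfhalf} (as noted in the discussion preceding the statement), and the number of groups passed to \probClustSelect in the reduction is bounded by $T = \lceil 2D/\alpha\rceil = \Oh(D)$. Thus it suffices to (i) give an algorithm for \probClustSelect running in time $f(d+t+D)\cdot(m)^{\Oh(1)}$, and (ii) exhibit a suitable set $\mathcal{D}$ of candidate cluster costs of polynomial size in $D$.

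For (i), the key observation is that Claim~\ref{claim:linfhalf} restricts attention to half-integral cluster centroids. I would proceed by fixing one vector $x_1 \in X_1$ (trying all $|X_1|\le m$ choices). If the total weighted cost is at most $D$, then in particular $w(x_1)\cdot\distinfty(x_1,c)\le D$, and since $w(x_1)\ge 1$ this forces $\distinfty(x_1,c)\le D$. Hence for each coordinate $i$ we have $c[i]\in[x_1[i]-D,\,x_1[i]+D]\cap \tfrac12\mathbb{Z}$, giving at most $(4D+1)^d$ candidate centroids once $x_1$ is fixed. For each candidate centroid $c$, the optimal selection from each $X_j$ is found greedily in polynomial time by picking $\argmin_{x\in X_j} w(x)\distinfty(x,c)$; we then test whether the aggregate weighted cost is $\le D$. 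This yields an algorithm running in $m \cdot (4D+1)^d \cdot (md)^{\Oh(1)}$ time, which is \classFPT in $d+D$ (and in particular in $d+t+D$).

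For (ii), since every $x\in\mathbb{Z}^d$ and every feasible centroid $c\in\tfrac12\mathbb{Z}^d$, each distance $\distinfty(x,c)$ lies in $\tfrac12\mathbb{Z}_{\ge 0}$. With integer weights, every term $w(x)\distinfty(x,c)$ is a nonnegative half-integer, and the total cost is a nonnegative half-integer at most $D$. Therefore one may take
\[
\mathcal{D} \;=\; \{0,\tfrac12,1,\tfrac32,\ldots,D\},
\]
of size at most $2D+1$. Feeding $\alpha=1/2$, this $\mathcal{D}$, and the algorithm from (i) into Theorem~\ref{thm:genfpt} gives a total running time of the form $2^{\Oh(D\log D)}\cdot (nd)^{\Oh(1)}\cdot (2D+1)\cdot(4D+1)^d\cdot n^{\Oh(1)}$, which is \classFPT parameterized by $d+D$.

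The argument is essentially routine once half-integrality is invoked, so no step is a real obstacle; the only mild subtlety is making sure the candidate-centroid count is bounded \emph{per} choice of $x_1$ rather than globally, which is why the algorithm iterates over one anchor vector. Any attempt to drop the dimension from the parameter would immediately run into the $(4D+1)^d$ factor, foreshadowing the \classW1-hardness with parameter $D$ alone that is established separately.
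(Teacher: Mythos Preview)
Your proof is correct and follows essentially the same approach as the paper: invoke Theorem~\ref{thm:genfpt} with $\alpha=1/2$ and $\mathcal{D}$ the set of half-integers up to $D$, and solve \probClustSelect by anchoring on some $x_1\in X_1$ and enumerating all half-integral centroids within $L_\infty$ distance $D$ of $x_1$. Your count of $(4D+1)^d$ candidate centroids is in fact slightly more careful than the paper's stated $(2D+1)^d$, but the difference is immaterial for the \classFPT conclusion.
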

\begin{proof}
    We use Theorem \ref{thm:genfpt}. We have the $\alpha$-property, and for the set $\mathcal{D}$ of all possible cluster costs not exceeding $D$ we could take all half-integral values not exceeding $D$ by Claim \ref{claim:linfhalf}. All that remains is to solve \probClustSelect in FPT time.

    For that, we try all possible $x_1 \in X_1$, and then try each possible resulting cluster centroid $c$. Since $\distinfty(x_1, c) \le D$ and $c$ is half-integral by Claim \ref{claim:linfhalf}, we can try only vectors $c$ of this form, and that is done in time $(2D + 1)^d$.
\end{proof}

\subsection{\classW1-hardness when parameterized by $D$}

Knowing that \probClust with the $L_\infty$ distance is FPT when parameterized by $d + D$, the next natural question is, is the problem FPT or \classW1-hard when parameterized only by $D$?
We show that \classW1-hardness is the case, proving Theorem \ref{thm:linftydDhard1}, which we recall here for convenience.

\linftyhard*
\begin{proof}

    First, we show a reduction from \probClique to \probClust. Given a graph $G$ and a clique size $k$, we construct the following instance of the clustering problem.

    We set the dimension to $|V(G)| + \binom{|V(G)|}{2} - |E(G)|$. We take $|V(G)|$ vectors $\{x_i\}_{i=1}^{|V(G)|}$ corresponding to vertices. For the vertex $v$, first $|V(G)|$ coordinates are set to zero, except $v$-th coordinate, which is set to 2.

    The last $\binom{|V(G)|}{2} - |E(G)|$ coordinates correspond to non-edges, vertex pairs which are not connected by an edge. For each vertex pair $\{u, v\} \notin E(G)$ in the coordinate $\{u, v\}$ we set $x_u$ to $2$, $x_v$ to $-2$, the order on $u$, $v$ is chosen arbitrarily, and all other vectors to zero.

    Finally, we set the number of clusters to $|V(G)| - k + 1$ and the total distance to $k$. We show an example on how the reduction works in Figure \ref{fig:linftydDhard1}.

        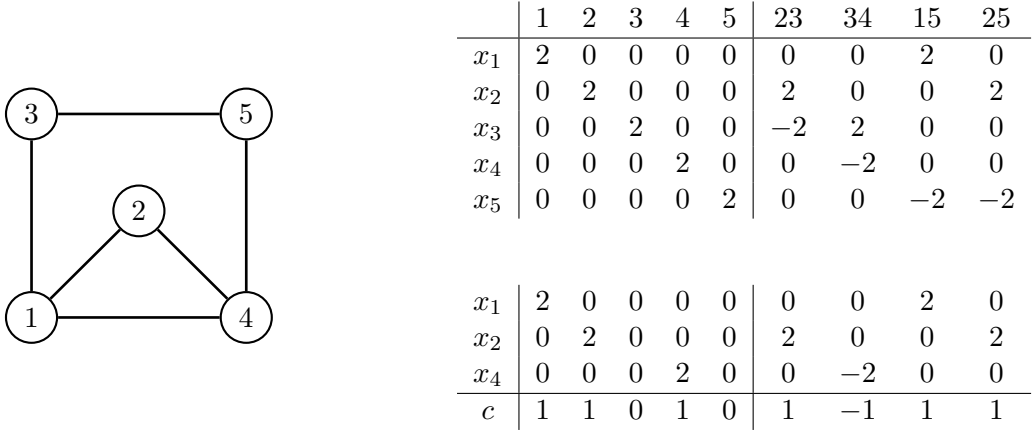
\begin{figure}[ht]
            \centering
            \begin{subfigure}{0.4\textwidth}
                \centering

                \begin{tikzpicture}[auto,  node distance=2cm,every loop/.style={},thick,main node/.style={circle,draw}]
                    \node[main node] at (0,0) (1) {1};
                    \node[main node] (2) [above right of=1] {2};
                    \node[main node] (3) [above =2cm of 1] {3};
                    \node[main node] (4) [below right of=2] {4};
                    \node[main node] (5) [above =2cm of 4] {5};
                    \draw[line width=1pt]
                    (1) -- (2);
                    \draw[line width=1pt]
                    (1) -- (3);
                    \draw[line width=1pt]
                    (1) -- (4);
                    \draw[line width=1pt]
                    (2) -- (4);
                    \draw[line width=1pt]
                    (3) -- (5);
                    \draw[line width=1pt]
                    (5) -- (4);
                \end{tikzpicture}
            \end{subfigure}\hfill
            \begin{subfigure}{0.6\textwidth}
                \centering
                \begin{tabular}{c|c c c c c|c c c c}
                    &1&2&3&4&5&23&34&15&25\\\hline
                    $x_1$&2&0&0&0&0&0&0&2&0\\
                    $x_2$&0&2&0&0&0&2&0&0&2\\
                    $x_3$&0&0&2&0&0&$-2$&2&0&0\\
                    $x_4$&0&0&0&2&0&0&$-2$&0&0\\
                    $x_5$&0&0&0&0&2&0&0&$-2$&$-2$\\
                \end{tabular}
                \par\bigskip
                \par\bigskip
                \begin{tabular}{c|c c c c c|c c c c}
                    $x_1$&2&0&0&0&0&0&0&2&0\\
                    $x_2$&0&2&0&0&0&2&0&0&2\\
                    $x_4$&0&0&0&2&0&\makebox[\widthof{$-2$}][c]{0}&$-2$&\makebox[\widthof{$-2$}][c]{0}&\makebox[\widthof{$-2$}][c]{0}\\
                    \hline
                    $c$&1&1&0&1&0&1&$-1$&1&1\\
                \end{tabular}
            \end{subfigure}

            \caption{An example illustrating the reduction in Theorem \ref{thm:linftydDhard1}: an input graph $G$, the vectors produced by the reduction (for clarity, the coordinates corresponding to vertices and to non-edges are separated), and the only composite cluster in the resulting optimal clustering of cost 3, corresponding to the clique on $\{1, 2, 4\}$. Note that $\distinfty(x_1, c) = \distinfty(x_2, c) = \distinfty(x_4, c) = 1$.}
            \label{fig:linftydDhard1}
        \end{figure}

    If there is a clique of size $k$ in $G$, then we have a solution of cost $k$: take $k$ vectors corresponding to the clique vertices in one cluster, and make all other clusters trivial. For the only nontrivial cluster $C$, we can always choose $c$ so that $|x[j] - c[j]| \le 1$ for any $x \in C$ and for any coordinate $j$. Each vertex coordinate has only 0 and $2$, so setting $c$ to 1 there suffices.
As in $C$ we have an edge between any two vertices, in any non-edge coordinate $j$ there are either all zeroes, or zeroes and $2$, or zeroes and $-2$. 
In each of the cases there is a suitable value for $c_j$: $0$, $1$ or $-1$ correspondingly.

Next, we prove that any solution has cost at least $k$, and any solution which is not a $k$-clique has stricly larger cost. For that, we prove the following claim.

\begin{claim}
    In the instance above, the cost of any cluster $C$ containing at least two vectors is at least $|C|$. If there is at least one non-edge in $C$, then the cost is at least $|C| + 1$.
\label{claim:linfopt}
\end{claim}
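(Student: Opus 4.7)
I will work with $S \subseteq V(G)$, the set of vertices whose vectors lie in $C$, so that $|C| = |S|$. Fix any centroid $c$ and write $d_v := \distinfty(x_v, c)$; the target inequalities become $\sum_{v \in S} d_v \ge |S|$ in general, and $\sum_{v \in S} d_v \ge |S|+1$ if $S$ contains a non-edge. The key observation is a pairwise bound: for any distinct $v, u \in S$, examining the vertex coordinate indexed by $v$ we have $x_v[v] = 2$ and $x_u[v] = 0$, so $d_v \ge |2-c[v]|$ and $d_u \ge |c[v]|$, and the triangle inequality gives $d_v + d_u \ge 2$. Summing this bound over all $\binom{|S|}{2}$ unordered pairs, each $d_v$ appears in $|S|-1$ of them, so $(|S|-1)\sum_{v\in S} d_v \ge |S|(|S|-1)$, which yields $\sum_{v\in S} d_v \ge |S|$ and proves the first part of the claim.

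For the strengthening, suppose $\{u_1,u_2\} \subseteq S$ is a non-edge of $G$, and I exploit the non-edge coordinate $j = \{u_1,u_2\}$: there $x_{u_1}[j] = 2$ and $x_{u_2}[j] = -2$, so $d_{u_1} \ge |2-c[j]|$ and $d_{u_2} \ge |{-2}-c[j]| = |2+c[j]|$, giving $d_{u_1} + d_{u_2} \ge 4$ by the triangle inequality. I then split $\sum_{v\in S} d_v = (d_{u_1} + d_{u_2}) + \sum_{v \in T} d_v$ where $T = S \setminus \{u_1,u_2\}$. When $|T| \ge 2$, applying the pairwise averaging argument inside $T$ gives $\sum_{v \in T} d_v \ge |T| = |S|-2$, so the total is at least $4 + (|S|-2) = |S|+2$. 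When $|T| \le 1$ we have $|S| \le 3$, and the bound $d_{u_1} + d_{u_2} \ge 4$ alone already yields $\sum_{v \in S} d_v \ge 4 \ge |S|+1$, since the remaining $d_v$ is nonnegative.

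The subtlety I expect to be the main obstacle is that a direct averaging approach, substituting $d_{u_1}+d_{u_2} \ge 4$ for $d_{u_1}+d_{u_2} \ge 2$ in the full sum of pairwise inequalities, only yields $\sum_{v \in S} d_v \ge |S| + 2/(|S|-1)$, which falls short of $|S|+1$ once $|S| \ge 4$. The decomposition above, which isolates the non-edge pair before averaging over the rest, is the clean way around this shortcoming and is the step I would take care to present carefully.
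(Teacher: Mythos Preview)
Your proof is correct and takes a somewhat different route from the paper's. For the first statement, the paper argues by case analysis: either every $d_v \ge 1$ (trivial), or some $d_{v^*} = \xi < 1$, in which case the vertex coordinate of $v^*$ forces $c[v^*] \ge 2-\xi$ and hence every other $d_u \ge 2-\xi$, giving total cost $\xi + (|C|-1)(2-\xi) \ge |C|$. Your pairwise averaging argument is more symmetric and avoids cases. For the second statement, the paper observes that the non-edge coordinate forces \emph{one} of $d_{u_1}, d_{u_2}$ to be at least $2$, removes that single vector, and applies the first part to the remaining $|C|-1$ vectors (noting that the first-part argument used only vertex coordinates and so is unaffected). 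You instead remove \emph{both} non-edge endpoints, use $d_{u_1}+d_{u_2} \ge 4$, and apply your averaging to the complement $T$; this even yields $|S|+2$ once $|S|\ge 4$. Both decompositions are variants of the same isolate-then-reapply idea, and your explicit remark about why the naive averaging over all pairs (substituting only one $4$ for one $2$) falls short is a nice touch.
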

\begin{proof}
    Denote the cluster centroid as $c$. If each vector $x$ in $C$ has $\distinfty(x, c) \ge 1$, the first statement is trivial. So assume that there is a vector $x^*$ in $C$ such that $\distinfty(x^*, c) = \xi < 1$. Consider the coordinate $j^*$ which corresponds to the same vertex as the vector $x^*$, $x^*[j^*] = 2$, and all other vectors are zero in the coordinate $j^*$. As $\distinfty(x^*, c) = \xi$, $c[j^*] \ge 2 - \xi$. Then, for any other $x \in C$, $\distinfty(x, c) \ge 2 - \xi > 1$. The total cost of the cluster is at least $\xi + (|C| - 1) (2 - \xi) = 2 + (|C| - 2)(2 - \xi) \ge |C|$, as $2 - \xi > 1$.

    Now to the second part of the claim. Assume there are only two vectors in $C$ and they do not have an edge, there is a coordinate $j^*$ where one is 2 and the other is $-2$. No matter what we choose for $c[j^*]$, the cost is at least $|2 - c[j^*]| + |-2 - c[j^*]| \ge 4$, and the statement follows. So assume that $|C| \ge 3$ and there is a coordinate $j^*$ corresponding to a non-edge in $C$. One vector from $C$ has 2 in the coordinate $j^*$, another $-2$, and all others have 0. Then there is a vector in $C$ with distance to $c$ of at least 2, as either $c[j^*] \ge 0$ and $|-2 - c[j^*]| \ge 2$ or $c[j^*] < 0$ and $|2 - c[j^*]| > 2$. Let us just forget about this vector and consider all other vectors in $C$. There are $|C| - 1 \ge 2$ of them, and by the reasoning in the proof of the first statement, their cost is at least $|C| - 1$. In this proof we considered only vertex coordinates, so the vector we forgot and the $j^*$-th coordinate (which is a non-edge coordinate) does not affect it. So, the total cost is at least $|C| - 1 + 2 = |C| + 1$.
\end{proof}

Assume that we have $l \ge 1$ nontrivial clusters of sizes $\{t_i\}_{i=1}^l$, nontrivial means that the size is at least two, $t_i \ge 2$ for $i \in \{1, \dots, l\}$. By Claim \ref{claim:linfopt}, the total cost is at least
\[\sum_{i=1}^l t_i = k + l - 1 \ge k,\]
as there are $|V(G)| - k + 1$ clusters in total, $|V(G)| - k + 1 - l$ trivial clusters, and the total number of vectors is $|V(G)| = \sum_{i=1}^l t_i + |V(G)| - k + 1 - l$, from which it follows that $\sum_{i = 1}^l t_i = k + l - 1$. So no solution has cost less than $k$.

Also, if there are at least two nontrivial clusters, then $k + l - 1 \ge k + 1$. So if a solution has cost $k$, it must have only one nontrivial cluster, and its size must be $k$.

Finally, assume that the solution indeed has only one nontrivial cluster, but there is a non-edge in it. Then, as the size is $k$,  by Claim \ref{claim:linfopt} its cost is at least $k + 1$. So only a $k$-clique has cost $k$, which proves the correctness of the reduction.

Now, to \probClustSelect. We consider essentially the same reduction, only we start from \probMultiClique. We obtain sets of vectors $X_1$, \dots, $X_k$ in the same way as $X$ in the reduction above, only vectors obtained from vertices of color $j$ are put into $X_j$. The total distance parameter is also set to $k$. So parameters $t$ and $D$ of the obtained instance have the same value as the starting parameter $k$.

Since vectors are constructed in the same way, Claim \ref{claim:linfopt} still works. And now the statement of \probClustSelect enforces that exactly one cluster of $k$ vectors is selected. By Claim \ref{claim:linfopt} it could be done with the cost $k$ if and only if there is a colorful $k$-clique in the original graph.

%Finally, we prove that the stated ETH bounds hold. An $n^{o(D)}$ algorithm for \probClust together with the first reduction would give a $|V(G)|^{o(k)}$ algorithm for \probClique. The same is with an $n^{o(d + t)}$ algorithm for \probClustSelect together with the second reduction. Both of these hold since $n = |V(G)|$, $t = k$ and $D = k$.
\end{proof}

\subsection{\classNP-hardness when $k = 2$}

In this subsection we prove \classNP-hardness of \probClust with the $L_\infty$ distance when $k = 2$. Intuitively, if we consider the previous reduction, partitioning the vectors optimally into two clusters loosely corresponds to partitioning the vertices into two sets such that there are as many as possible vertices having no edges inside their set. Which, in turn, is \probOCT: the problem of removing the smallest number of vertices so that the remaining graph is biparite. However, to make everything really work, we need to consider a modified version of \probOCT which we call \probHIOCT.

\defproblema{\probHIOCT}%
{An undirected graph $G$, an integer $t$.}%
{Is there an assignment $\delta:V(G) \to \{0, 1, 2\}$, such that $\sum_{v \in V(G)} \delta(v) \le t$ and $G - S$ is bipartite, where $S = \{\{u, v\} \in E(G) : \delta(u) + \delta(v) \ge 2\}$?}

First we show that \probHIOCT is also \classNP-hard by constructing a reduction from \probSAT.

\begin{lemma}
    There is a polynomial time reduction from \probSAT to \probHIOCT.
    \label{lemma:hioct}
\end{lemma}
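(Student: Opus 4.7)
The plan is to reduce from \probSAT. Given a 3-CNF formula $\phi$ with variables $x_1, \dots, x_n$ and clauses $C_1, \dots, C_m$, I would construct in polynomial time a graph $G$ and a budget $t$ such that $\phi$ is satisfiable if and only if $(G, t)$ is a yes-instance of \probHIOCT.

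The construction follows the standard gadget template. For each variable $x_i$ I would build a \emph{variable gadget} containing two distinguished literal vertices $v_i, \bar v_i$ together with auxiliary vertices forming several odd cycles. The gadget is designed so that every $\delta$-assignment making it bipartite must spend at least a fixed amount on its vertices, and this minimum is achieved in exactly two ways: by setting $\delta(v_i) = 2$ (interpreted as $x_i$ being false, since the positive-literal vertex is hit) or by setting $\delta(\bar v_i) = 2$ (interpreted as $x_i$ being true). For each clause $C_j = l_{j,1} \vee l_{j,2} \vee l_{j,3}$ I would attach a \emph{clause gadget} --- in the simplest case, a triangle on the three literal vertices appearing in $C_j$ --- which is an odd cycle that becomes bipartite at no extra cost precisely when at least one of its literal vertices has already received $\delta = 2$ from its variable gadget. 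I would set $t$ equal to the sum of the minimum variable-gadget costs, so that any $\delta$ within budget forces every clause gadget to be broken for free by the variable-gadget choices.

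The main obstacle is the half-integral flexibility of $\delta$: a triangle $uvw$ becomes bipartite at cost $2$ not only by giving $\delta = 2$ to a single vertex, but also by giving $\delta(u) = \delta(v) = 1$, which deletes the single edge $uv$ for the same price. If I used the naive triangle variable gadget $(v_i, h_i, \bar v_i)$, the fractional option $\delta(v_i) = \delta(\bar v_i) = 1$ would simultaneously break the gadget and cheaply contribute to killing clause triangles that contain both $v_i$ and $\bar v_i$, yielding admissible $\delta$-assignments that correspond to no Boolean satisfying assignment. The key technical content of the proof is therefore designing the variable gadget carefully --- for instance, by using a longer odd cycle on which $v_i$ and $\bar v_i$ are non-adjacent and attaching enforcer subgadgets that penalize fractional $\delta$-values on helper vertices --- so that every minimum-cost $\delta$-assignment on the gadget either sets exactly one of $\delta(v_i), \delta(\bar v_i)$ to $2$ or pays a strict surplus that drives the total cost above $t$. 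Verifying the absence of alternative optima is the bulk of the work; once that is established, soundness and completeness follow from straightforward bookkeeping linking Boolean assignments to canonical $\delta$-assignments.
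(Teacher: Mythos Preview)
Your proposal is an outline rather than a proof: you correctly identify the variable/clause gadget template and the central obstacle (the half-integral option $\delta(v_i)=\delta(\bar v_i)=1$), but you never exhibit a concrete variable gadget that excludes it, and your suggested fixes (``a longer odd cycle on which $v_i,\bar v_i$ are non-adjacent'', ``enforcer subgadgets'') are left unspecified. Your triangle clause gadget is also unsafe until that gap is closed, and for a reason broader than the one you name: if two literals from \emph{different} variables both carry $\delta=1$, the edge between them in the clause triangle has endpoint sum $2$ and is deleted, so the triangle becomes bipartite with no literal set to $2$. The leak is therefore not confined to clauses containing both $v_i$ and $\bar v_i$.

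The paper sidesteps the difficulty you are wrestling with rather than solving it head-on. Its variable gadget is just the edge $\{x_i,x'_i\}$ together with $2n{+}1$ common neighbours $y_{i,1},\dots,y_{i,2n+1}$. This does \emph{not} forbid $\delta(x_i)=\delta(x'_i)=1$ --- that choice still costs $2$ and leaves the gadget bipartite --- it only forces $\delta(x_i)+\delta(x'_i)\ge 2$ (if both are $\le 1$ and coloured differently, every $y_{i,j}$ must pay at least $1$, exceeding the budget $t=2n$). The real work is pushed to the clause gadget: instead of a triangle on the three literal vertices, the paper uses a $7$-cycle in which the literals are separated by four fresh clause vertices $C_{j,1},\dots,C_{j,4}$, so that every edge of the cycle is incident to some $C_{j,\ell}$. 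Since the tight budget forces $\delta\equiv 0$ outside the $2n$ literal vertices, an edge of the $7$-cycle can be deleted only if the literal endpoint has $\delta=2$. Half-integral values on literals are thus permitted but rendered useless for satisfying clauses, and the extraction of a Boolean assignment goes through. This is both simpler and more robust than engineering a local gadget that forbids $\delta=1$ outright.
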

\begin{proof}
    Given an instance of \probSAT with $n$ variables and $m$ clauses, make a graph $G$ as follows. The example of the reduction is given in Figure \ref{fig:hioct}. For each variable $x_i$, introduce two vertices $x_i$ and $x_i'$, connect them with an edge. Also introduce $2n + 1$ vertices $y_{i, j}$ connect them to both $x_i$ and $x_i'$. 
    
    For each clause $C_j$ introduce four vertices $C_{j,1}$,\dots,$C_{j,4}$. Consider following seven vertices: $C_{j, 1}$, \dots, $C_{j, 4}$, and three variable vertices which are present in $C_j$: if $x_i \in C_j$ then we consider the vertex $x_i$, and if $\neg x_i \in C_j$ then we consider the vertex $x_i'$. Connect all these seven vertices in a cycle such that each variable vertex is adjacent to two clause vertices. Finally, set $t$ to $2n$.

    \begin{figure}[ht]
        \centering
        \begin{tikzpicture}[auto,  node distance=2cm,every loop/.style={},thick,main node/.style={circle,draw, inner sep=0cm, minimum width=.75cm}]
            \node[main node] (x1) {$x_1$};
            \node[main node] (x1p) [right of=x1] {$x_1'$};
            \node[main node] (y11) [above left =1cm and .5cm of x1] {$y_{1, 1}$};
            \node[main node] (y17) [above right =1cm and .5cm of x1p] {$y_{1, 7}$};
%            \node[draw=none] (y1m) [right=2cm of y11] {$\cdots$};

            \draw[line width=1pt]
            (x1) -- (x1p)
            (y11) -- (x1)
            (y11) -- (x1p)
            (y17) -- (x1)
            (y17) -- (x1p);
            \path (y11.base) -- node[anchor=center] (y1m) {{\Large $\cdots$}} (y17.base);

            \node[main node] (x2) [right=3cm of x1p] {$x_2$};
            \node[main node] (x2p) [right of=x2] {$x_2'$};
            \node[main node] (y21) [above left =1cm and .5cm of x2] {$y_{2, 1}$};
            \node[main node] (y27) [above right =1cm and .5cm of x2p] {$y_{2, 7}$};
%            \node[draw=none] (y1m) [right=2cm of y11] {$\cdots$};

            \draw[line width=1pt]
            (x2) -- (x2p)
            (y21) -- (x2)
            (y21) -- (x2p)
            (y27) -- (x2)
            (y27) -- (x2p);
            \path (y21.base) -- node[anchor=center] (y2m) {{\Large $\cdots$}} (y27.base);

            \node[main node] (x3) [right=3cm of x2p] {$x_3$};
            \node[main node] (x3p) [right of=x3] {$x_3'$};
            \node[main node] (y31) [above left =1cm and .5cm of x3] {$y_{3, 1}$};
            \node[main node] (y37) [above right =1cm and .5cm of x3p] {$y_{3, 7}$};
%            \node[draw=none] (y1m) [right=2cm of y11] {$\cdots$};

            \draw[line width=1pt]
            (x3) -- (x3p)
            (y31) -- (x3)
            (y31) -- (x3p)
            (y37) -- (x3)
            (y37) -- (x3p);
            \path (y31.base) -- node[anchor=center] (y3m) {{\Large $\cdots$}} (y37.base);

            \node[main node] (c11) [below right = 1cm and 1.3cm of x1] {$C_{1, 1}$};
            \node[main node] (c12) [right = 2.5cm of c11] {$C_{1, 2}$};
            \node[main node] (c13) [right = 2.5cm of c12] {$C_{1, 3}$};
            \node[main node] (c14) [right = 2.5cm of c13] {$C_{1, 4}$};
            \draw[line width=1pt,color=blue]
            (c11) -- (x1)
            (x1) -- (c12)
            (c12) -- (x2p)
            (x2p) -- (c13)
            (c13) -- (x3)
            (x3) -- (c14);
            \path[line width=1pt,color=blue]
            (c11) edge[bend right] (c14);
        \end{tikzpicture}
        \caption{A graph obtained from the 3CNF-formula $(x_1 \lor \lnot x_2\lor x_3)$ by the reduction from Lemma \ref{lemma:hioct}. A 7-cycle corresponding to the only clause of the formula is highlighted in blue.}
        \label{fig:hioct}
    \end{figure}
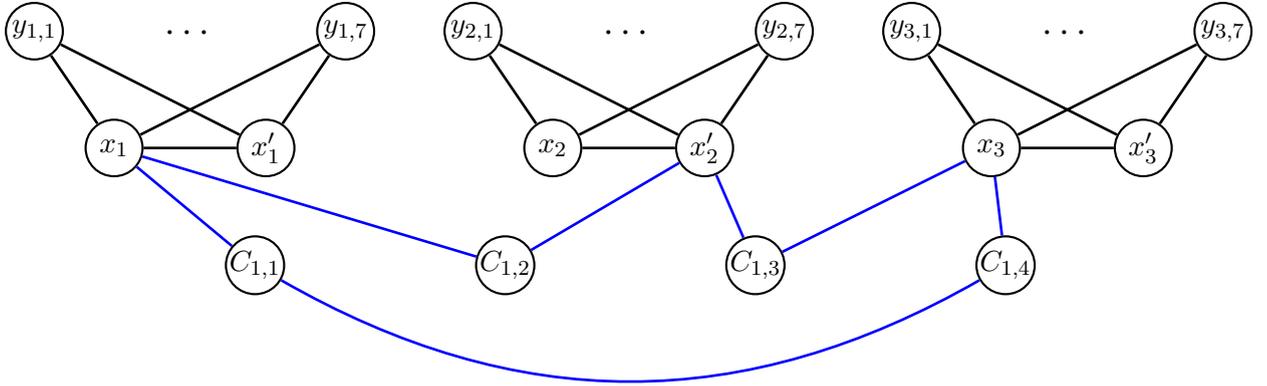

    First, assume there is a satisfying assignment. Consider the following $\delta : V(G) \to \{0, 1, 2\}$: if $x_i$ is true, $\delta(x_i) = 2$, otherwise $\delta(x_i') = 2$, on all other vertices $\delta \equiv 0$. Clearly, $\sum_{v \in V(G)} \delta(v) = 2n$.
    
    Since $\delta$ does not take value $1$, deleting edges $\{u, v\}$ with $\delta(u) + \delta(v) \ge 2$ is equivalent to deleting vertices on which $\delta$ is 2. From each vertex gadget we deleted either $x_i$ or $x_i'$, so the remaining part is a star with leaves $y_{i, j}$ and center $x_i$ or $x_i'$. Since the assignment we started from is satisfying, from each clause cycle we deleted at least one vertex. So each cycle present in $G$ lost at least one vertex, and what remains is bipartite.

    Now assume there is a solution $\delta$ to the \probHIOCT instance. We claim that $\delta(x_i) + \delta(x_i') \ge 2$ for each variable $x_i$. Consider a 2-coloring of $G - S$: either $x_i$ and $x_i'$ have the same color or not. In the former case, $\delta(x_i) + \delta(x_i') \ge 2$ since the edge $\{x_i, x_i'\}$ must be removed.

    If $x_i$ and $x_i'$ have different colors, assume that $\delta(x_i) \le 1$ and $\delta(x_i') \le 1$. Then, each of the $2n + 1$ vertices $y_{i, j}$ takes one of the two colors, and so has an incident edge to $x_i$ or $x_i'$ which needs to be deleted. But then, $\delta(y_{i, j}) \ge 1$ for each $j$, and the total cost on these vertices is already $2n + 1$. Then either $\delta(x_i) = 2$ or $\delta(x_i') = 2$.

    So we have $n$ variables and $\delta$ is at least $2$ on each pair of variable vertices, and in total $\delta$ is at most $2n$. Then $\delta$ has to be exactly $2$ on each variable pair, and zero on all other vertices. Now we claim that on each clause cycle there is a variable vertex $v$ with $\delta(v) = 2$. If not, then none of the cycle edges gets deleted, as $\delta$ is equal to zero on clause vertices. But then the remaining graph could not be bipartite, since it contains an odd cycle.

    To get a satisfying assignment, set $x_i$ to true if $\delta(x_i) = 2$, or to false otherwise. In particular, if $\delta(x_i') = 2$, $x_i$ is set to false, since $\delta(x_1) + \delta(x_1') = 2$. Each clause is satisfied since each clause cycle contains a variable vertex on which $\delta$ is equal to $2$.
\end{proof}

Now we prove \classNP-hardness of \probClust with $p=\infty$ and $k=2$ by constructing a reduction from \probHIOCT.

\begin{theorem}
    \probClust with distance $\distinfty$ is \classNP--hard when $k=2$.
    \label{thm:linfoct}
\end{theorem}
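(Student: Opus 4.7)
My plan is to reduce from \probHIOCT, which was just shown to be \classNP-hard in Lemma~\ref{lemma:hioct}. Given an instance $(G, t)$ of \probHIOCT, I would construct in polynomial time an equivalent instance of \probClust with distance $\distinfty$ and $k = 2$.

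The construction associates one vector $x_v \in \mathbb{Z}^d$ with each vertex $v \in V(G)$. The dimension $d$ is polynomial in $|V(G)|$ and $|E(G)|$, and the coordinates are indexed by (some combination of) vertices and edges of $G$. The key design principles are: (i) in each edge-indexed coordinate $\{u, v\}$, the vectors $x_u$ and $x_v$ carry opposite nonzero entries while all other vectors take a neutral value, so that putting $u$ and $v$ in the same cluster forces a fixed extra cost at each of them; (ii) further coordinates provide a uniform per-vertex baseline cost that is controlled through half-integer centroid choices (Claim~\ref{claim:linfhalf}); and (iii) the entry magnitudes are chosen so that the max defining $\distinfty(x_v, c)$ is provably attained at a predictable coordinate that depends only on $v$ and its cluster assignment. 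The target value $D$ is an explicit affine combination of $t$, $|V(G)|$, and $|E(G)|$.

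For the equivalence, in one direction, from a HIOCT solution $(\delta, \chi)$ with $\sum_v \delta(v) \le t$ I would form the partition $(C_1, C_2)$ from the bipartition $\chi$ of $G - S$ and set the half-integer centroids so that the offset at the coordinate associated with $v$ reflects the value $\delta(v) \in \{0, 1, 2\}$, obtaining total cost at most $D$. In the converse direction, given a clustering of cost at most $D$ with half-integer centroids (which we may assume thanks to Claim~\ref{claim:linfhalf}), I would read off the bipartition $\chi$ from the cluster assignment and read off each $\delta(v) \in \{0,1,2\}$ from the discretized deviation between $x_v$ and its cluster centroid, then verify that $\sum_v \delta(v) \le t$ and that the edge set $S = \{uv : \delta(u) + \delta(v) \ge 2\}$ has the property that $G - S$ is bipartite via $\chi$.

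The main obstacle will be the fact that $\distinfty$ is a \emph{max} over coordinates rather than a sum, so a single spurious large deviation at an unrelated coordinate could swamp the per-vertex HIOCT accounting I am trying to recover. Calibrating the magnitudes of the vertex and edge entries so that the maximum in $\distinfty(x_v, c)$ is attained at a specific coordinate tied to $v$, and arguing via Claim~\ref{claim:linfhalf} that optimal centroids lie in a very restricted half-integer grid, will be the technical heart of the reduction. Once those calibrations are in place, the cost bookkeeping collapses to a clean vertex-by-vertex correspondence between clustering cost and $\sum_v \delta(v)$.
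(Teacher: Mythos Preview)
Your plan is essentially the paper's approach: reduce from \probHIOCT, one vector $x_v$ per vertex, edge-indexed coordinates where the two endpoints carry $\pm 2$ and everyone else is $0$, recover $\delta(v)$ as $\distinfty(x_v,c)-1$, and set $D=|V(G)|+t$. The paper does \emph{not} introduce separate vertex-indexed coordinates for the baseline; instead it removes isolated vertices (so every $x_v$ has an incident edge, giving the baseline $1$) and, crucially, adds $t+5$ fresh isolated edges to $G$ before building the vectors. Those extra edges are what kills the degenerate backward case where some centroid equals an input vector (distance $0$): they create many pairs whose combined cost is at least $3$ regardless of the clustering, pushing the total above $D$ in that scenario. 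Your alternative device of dedicated vertex coordinates can be made to work, but you will face the same degeneracy issue and will need an analogous counting argument; be prepared for it, since your sketch of the backward direction currently glosses over it.

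One imprecision worth flagging: an edge coordinate with both endpoints in the same cluster does \emph{not} force ``a fixed extra cost at each of them''; it forces the \emph{sum} $\distinfty(x_u,c)+\distinfty(x_v,c)\ge 4$, i.e.\ $\delta(u)+\delta(v)\ge 2$, and the surplus can be split unevenly. This is exactly why HIOCT (with its values $0,1,2$) rather than ordinary OCT is the right source problem, and your forward-direction centroid choice must set $c[\{u,v\}]$ according to the \emph{pair} $(\delta(u),\delta(v))$, not to $v$ alone.
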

\begin{proof}
    Consider an instance $(G, t)$ of \probHIOCT, if $t \ge |V(G)|$, we have a \yesinstance since $\delta \equiv 1$ deletes all edges from the graph, so we may assume $t < |V(G)|$. Remove all isolated vertices in $G$ and add $t + 5$ isolated edges to $G$, it clearly does not change the type of the instance.
    The number of clusters $k$ is $2$, set the dimension $d$ to $|E(G)|$, each coordinate corresponds to an edge. For each vertex $v \in V(G)$ add a vector $x_v$ to $X$ with all coordinates set to zero. Then, for each edge $\{u, v\} \in E(G)$ set $x_u[u, v]$ to $2$ and $x_v[u, v]$ to $-2$, the order on $u, v$ is chosen arbitrarily. Finally, set $D$ to $|V(G)| + t$. An example is given in Figure \ref{fig:linfoct}, additional isolated edges are dropped out for clarity.
    \begin{figure}[ht]
        \centering
        \begin{subfigure}{.45\textwidth}
            \centering
            \subcaption{A starting graph $G$, $t = 2$.}
        \begin{tikzpicture}[auto,  node distance=2cm,every loop/.style={},thick,main node/.style={circle,draw, inner sep=0cm, minimum width=.75cm}]
            \node[main node] (1) {1};
            \node[main node] (2) [right of=1]{2};
            \node[main node] (3) [below of=1]{3};
            \node[main node] (4) [right of=3]{4};

            \draw[line width=1pt]
            (1) -- (3)
            (1) -- (4)
            (1) -- (2)
            (2) -- (3)
            (2) -- (4);
        \end{tikzpicture}
        \end{subfigure}\hfill
        \begin{subfigure}{.45\textwidth}
            \centering
            \subcaption{The obtained instance: set of vectors $X = \{x_1, x_2, x_3, x_4\}$, $D = 6$.}
        \begin{tabular}{c c c c c c c}
            edges:&12&13&14&23&24&\\
        \hline
            $x_1=($&2&2&2&0&0&$)$\\
            $x_2=($&$-2$&0&0&2&2&$)$\\
            $x_3=($&0&$-2$&0&$-2$&0&$)$\\
            $x_4=($&0&0&$-2$&0&$-2$&$)$
        \end{tabular}
        \end{subfigure}
        \par\bigskip
        \begin{subfigure}{.42\textwidth}
            \centering
            \subcaption{A possible solution: $\delta(1) = \delta(3) =\delta(4) = 0$, $\delta(2) = 2$. Edges from $S$ are dashed, a 2-coloring of $G - S$ is in red and blue.}
        \begin{tikzpicture}[auto,  node distance=2cm,every loop/.style={},thick,main node/.style={circle,draw, inner sep=0cm, minimum width=.75cm}]
            \node[main node,color=red] (1) {1};
            \node[main node,color=red] (2) [right of=1]{2};
            \node[main node,color=blue] (3) [below of=1]{3};
            \node[main node,color=blue] (4) [right of=3]{4};

            \draw[line width=1pt]
            (1) -- (3)
            (1) -- (4);
            \draw[line width=1pt,dashed]
            (1) -- (2)
            (2) -- (3)
            (2) -- (4);
        \end{tikzpicture}
        \end{subfigure}\hfill
        \begin{subfigure}{.52\textwidth}
            \centering
            \subcaption{The corresponding clustering of cost 6, $C_1 = \{x_1, x_2\}$, $C_2 = \{x_3, x_4\}$, and optimal centroids $c_1$, $c_2$.}
        \begin{tabular}{c c}
            $\textcolor{red}{c_1}=(\phantom{-}1,\phantom{-}1,\phantom{-}1,\phantom{-}1,\phantom{-}1)$\phantom{,}&\\
            $\textcolor{red}{x_1}=(\phantom{-}2,\phantom{-}2,\phantom{-}2,\phantom{-}0,\phantom{-}0)$,&$\distinfty(x_1, c_1) = 1$\\
            $\textcolor{red}{x_2}=(-2,\phantom{-}0,\phantom{-}0,\phantom{-}2,\phantom{-}2)$,&$\distinfty(x_2, c_1) = 3$\\
            &\\
            $\textcolor{blue}{c_2}=(\phantom{-}0,-1,-1,-1,-1)\phantom{,}$&\\
            $\textcolor{blue}{x_3}=(\phantom{-}0,-2,\phantom{-}0,-2,\phantom{-}0),$&$\distinfty(x_3, c_1) = 1$\\
            $\textcolor{blue}{x_4}=(\phantom{-}0,\phantom{-}0,-2,\phantom{-}0,-2),$&$\distinfty(x_4, c_1) = 1$
        \end{tabular}
%        \begin{tabular}{c c c c c c c c}
%            $\textcolor{red}{c_1}=($&1&1&1&1&1&$)$\phantom{,}&\\
%            $\textcolor{red}{x_1}=($&2&2&2&0&0&$)$,&$\distinfty(x_1, c_1) = 1$\\
%            $\textcolor{red}{x_2}=($&-2&0&0&2&2&$),$&$\distinfty(x_2, c_1) = 3$\\
%            &&&&&&\\
%            $\textcolor{blue}{c_2}=($&0&-1&-1&-1&-1&$)\phantom{,}$&\\
%            $\textcolor{blue}{x_3}=($&0&-2&0&-2&0&$),$&$\distinfty(x_3, c_1) = 1$\\
%            $\textcolor{blue}{x_4}=($&0&0&-2&0&-2&$),$&$\distinfty(x_4, c_1) = 1$
%        \end{tabular}
        \end{subfigure}
        \caption{An illustration of the reduction from Theorem \ref{thm:linfoct}.}
        \label{fig:linfoct}
    \end{figure}
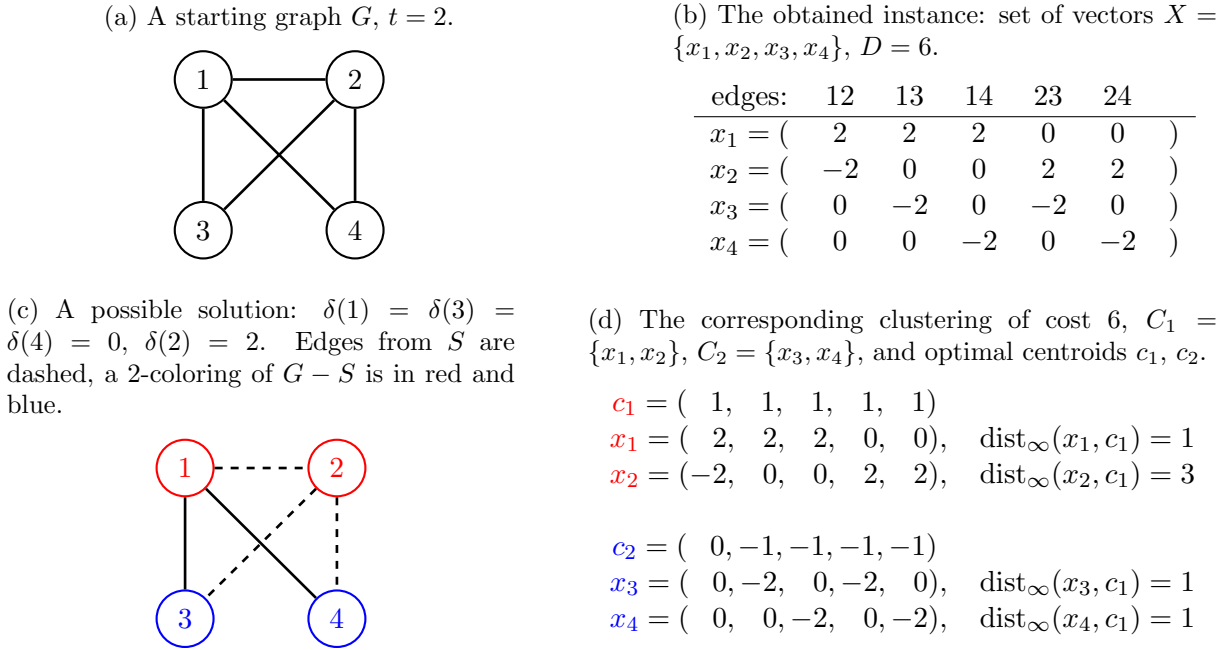

    If $(G, t)$ is a \yesinstance of \probHIOCT, consider the solution $\delta$. Split vectors into clusters according to any proper 2-coloring of $G - S$. Now we show the way to select cluster centroids so that each vertex $v$ has distance at most $1 + \delta(v)$ to the corresponding centroid. We consider separately each of two clusters and each coordinate, indexed by an edge $\{u, v\} \in E(G)$. For a cluster $C$, there are three cases on how $u$ and $v$ are present in the cluster, for each of them we assign a particular value to the cluster centroid $c$ in the coordinate $\{u, v\}$.
    \begin{itemize}
        \item If $u$ and $v$ are both not in $C$, for vectors in $C$ all entries in the coordinate $\{u, v\}$ are zero, and we set $c[u, v]$ also to zero. Each vector is at distance zero to the centroid in this coordinate.
        \item If only one of $u$ and $v$ are in $C$, for vectors in $C$ all entries in the corresponding coordinate are zero, except one entry corresponding to the edge's endpoint belonging to $C$, which is either $2$ or $-2$. Set $c[u, v]$ to $1$ or $-1$, correspondingly, then each vector is at distance $1$ in this coordinate.
        \item If both $u$ and $v$ are in $C$, w.l.o.g $x_u[u, v]$ is $2$ and $x_v[u, v]$ is $-2$, and all other points are zero. It must hold that $\delta(u) + \delta(v) \ge 2$, either $\delta(u) = \delta(v) = 1$ or w.l.o.g $\delta(u) = 2$ and $\delta(v) = 0$. 
            In the former case, set $c[u, v]$ to zero, then all vectors have distance zero, $x_u$ and $x_v$ have distance $2$ in this coordinate. In the latter case, set $c[u, v]$ to $-1$, then $u$ is at distance $3$, and all other vectors, including $v$, are at distance $1$.
    \end{itemize}

    For any $v \in V(G)$, since it holds for all coordinates that distance from $x_v$ to the corresponding cluster centroid is at most $1 + \delta(v)$, then the $L_\infty$ distance is also at most $1 + \delta(v)$, and the total cost of the clustering defined above is at most
\[\sum_{v \in V(G)} 1 + \delta(v) = |V(G)| + t.\]

    In the other direction, assume there is a clustering $C_1$, $C_2$ with centroids $c_1$, $c_2$ such that the total cost is at most $D$. By Claim \ref{claim:linfhalf} we may assume that centroids are integral, and for any vector the distance to the nearest centroid is also an integer. We also may assume that centroids are between $-2$ and $2$ in each coordinate since all the input vectors have entries in this range, and so we could move the centroids to the same range without increasing distances.

    So, each vector has distance in $\{0, 1, 2, 3, 4\}$ to the closest centroid. We claim that it could not be that a vector $x_v$ has distance zero: in this case w.l.o.g $x_v = c_1$, and so $c_1$ is equal to $2$ or $-2$ in some coordinate, since each vertex has at least one incident edge. But then each vector in $C_1$ has distance at least $2$ to $c_1$. And since at most two vectors could be equal to the centroids, each of the remaining $|V(G)| - 2$ vectors has distance at least 1. Consider $t + 5$ isolated edges, at least $t + 3$ of them do not have
    any endpoint equal to one of $c_1$ and $c_2$. For these edges, the total distance of their endpoints is at least $3$: either their endpoints are in different clusters, and so the endpoint in $C_1$ costs at least $2$, or both endpoints are in the same cluster, and in total they cost $4$ since there are simultaneously values $2$ and $-2$ in the coordinate corresponding to this edge. So each of the $t + 3$ edges increases the cost by additional $1$, and the total cost is at least $|V(G)| - 2 + t + 3 > |V(G)| + t$.

    Since each vector has distance at least $1$, we may assume that the centroids are in $\{-1, 0, 1\}^d$. If we have $2$ (or $-2$) we could change it to $1$ (or $-1$), all vectors which could become farther from the centroid have $2$ in this coordinate. But then the distance for these vectors is still at most $1$. We also may assume that distances are in $\{1, 2, 3\}$, since distance $4$ could be only from $2$ to $-2$.

    We claim that if we set $\delta(v) := \min_{i=1}^2 \distinfty(x_v, c_i)$, $\delta$ is a solution to \probHIOCT. Remove all edges $\{u, v\}$ with $\delta(u) + \delta(v) \ge 2$, and consider 2-coloring of $G$ induced by the partition $\{C_1, C_2\}$. Assume that we have an edge $\{u, v\}$ such that $\delta(u) + \delta(v) \le 1$ and $u$ and $v$ are in the same cluster (w.l.o.g $C_1$). Then we have a coordinate $\{u, v\}$ such that w.l.o.g $x_u[u, v] = 2$ and $x_v[u, v] = -2$, but $\distinfty(x_u, c_1) + \distinfty(x_v, c_1) \le 3$ due to $\delta(u) + \delta(v) \le 1$ and so
    $|x_u[u, v] - c_1[u, v]| + |x_v[u, v]- c_1[u, v]| \le 3$, which is a contradiction. So $(G, t)$ is also a \yesinstance.
\end{proof}

Note that the reduction from \ref{thm:linfoct} also implements \probColoring, if we set $k$ to the number of colors and $D$ to $|V(G)|$, since with such a small budget we can not allow any same-colored neighbors in the optimal clustering.

%!TEX root = Integer_clustering.tex
\section {The case $p \in (1, \infty)$}\label{sec:pinfty}

In this section we consider the case $p \in (1, \infty)$, with the particular emphasis on the most commonly used case $p = 2$. With the $L_2$ distance, the \probClust problem is widely studied under the name \textsc{$k$-Means}.

\subsection{\classFPT when parameterized by $d + D$ for $p = 2$}

When we consider both $d$ and $D$ as the parameters, \probClustSelect in the $L_2$ distance becomes \classFPT, and so \probClust is also \classFPT by Theorem \ref{thm:genfpt}.

Note that in any composite cluster, each vector except at most one is at distance at least $1/4$ from the centroid, so the $\alpha$-property holds with $\alpha=1/4$. Consider two different vectors, they have different values in some coordinate, and in this coordinate at least one of them is at distance at least $(1/2)^2 = 1/4$ from the centroid.

Now we prove Theorem \ref{thm:l2seldD1}, which we restate here.

\ltwoseldD*
\begin{proof}
We start with the proof that  \probClustSelect is \classFPT. Distance $\disttwo$ enjoys the $\alpha$-property. 
   Hence if $t > 4D + 1$ then any composite cluster costs more than $D$ and the instance is clearly a \noinstance. So we may assume that $t \le 4D + 1$.

    We claim that there are at most $4mtD$ possible total weights of the resulting composite cluster. First, in the resulting cluster there could be at most one vector with weight strictly larger than $4D$. Otherwise, let us consider two such vectors and the coordinate in which they differ. No matter which value the centroid has there, it is at distance of at least $1/2$ from at least one of the vectors, so the total cost is larger than $4D(1/2)^2 \ge D$. So there are at most $m$ possibilities for the largest weight, and all of the other $(t - 1)$ weights are at most $4D$.

    We fix the total resulting cluster weight $W$, the vector in the resulting cluster with the largest weight $x_{j^*} \in X_{j^*}$, and the coordinate $i$.
    Since the centroid $c$ is the mean of the vectors in the resulting cluster, $c[i]$ is of form $\frac{y}{W}$, where $y \in \mathbb{Z}$. We claim that the distance from $y$ to $W \cdot x_{j^*}[i]$ is bounded by a function of $D$, and so each possible $y$ could be enumerated in \classFPT time. Moreover, all possible centroids could also be enumerated in \classFPT time since $d$ is a parameter.

    Let $\{x_1, \dots, x_t\}$ be the resulting cluster, $x_j \in X_j$ for all $j \in \{1, \dots, t\}$. The difference between $c[i]$ and $x_{j^*}[i]$ could be written as
\[x_{j^*}[i] - c[i] = x_{j^*}[i] - \sum_{j=1}^t \frac{w(x_j) x_j[i]}{W} = \frac{\sum_{j = 1}^t w(x_j) (x_{j^*}[i] - x_{j}[i])}{W}.\]
    The absolute value of the numerator is $\Oh(D^3)$ since $t = \Oh(D)$, $w(x_{j^*})$ gets multiplied by zero, and all other weights are at most $4D$. Also, for any $j \in \{1, \dots, t\}$, $|x_{j^*}[i] - x_j[i]| \le 4D$, since
\[4D \ge 4 \left((x_{j^*}[i] - c[i])^2 + (x_{j}[i] - c[i])^2\right) \ge (x_{j^*}[i] - x_j[i])^2 \ge |x_{j^*}[i] - x_j[i]|.\]

    The total running time is at most
\[4mtd \cdot m \cdot \Oh(D^3)^d \cdot m,\]
    since we try all possible cluster weights, all possible $x_{j^*}$ out of the input vectors, then all possible centroids which differ from $x_{j^*}$ by $\Oh(D^3)$ in each coordinate. And then for each centroid we check whether the optimal cluster for it has cost at most $D$ by selecting the best $x_j \in X_j$ for each $j \in \{1, \dots, t\}$. This concludes the proof that \probClustSelect  is \classFPT when parameterized by $d+D$.

\medskip

%\end{proof}

%\begin{corollary}
%    \probClust in the $L_2$ distance is \classFPT parameterized by $d + D$.
%    \label{col:l2clustdD}
%\end{corollary}
%\begin{proof}
Now we proceed with the proof that   \probClust  is \classFPT parameterized by $d + D$.
  For that we employ Theorem \ref{thm:genfpt}. We already have the $\alpha$-property and \classFPT algorithm for \probClustSelect.
  % from Theorem \ref{thm:l2seldD1}. 
  Hence the only thing left is to enumerate the set $\mathcal{D}$ of all possible optimal cluster costs not exceeding $D$.

    Since there are $n$ vectors in total, each cluster contains from $1$ to $n$ vectors. For each possible cluster size $s$ the centroid is of the form $\frac{y}{s}$, where $y \in \mathbb{Z}$. Since input vectors have integer coordinates, the cost of any cluster of size $s$ is of form $\frac{z}{s^2}$, where $z \in \mathbb{Z}$. And since the cost is at most $D$, $z \in \{0, \dots, Ds^2\}$. We enumerate all possible cluster sizes in $\{1, \dots, n\}$, and for each cluster size $s$ all possible cluster costs in $\{0/s^2, \dots, Ds^2/s^2\}$. In this way we obtain $\mathcal{D}$, and $|\mathcal{D}| = \Oh(Dn^3)$.
\end{proof}

\subsection{\classW1-hardness when parameterized by $t + D$}

In our setting, \probClust for $p = 2$ seems to be harder than for $p = 1$, since we do not have the nice property that if many vectors have the same value in some coordinate then the centroid must also have this value. On the contrary, even if only one vector diverges from the rest, the optimal centroid also diverges. So the approach with enumerating nontrivial coordinate sets, which we successfully used in the $p \in (0, 1]$ case, is not likely to work.

We are able to prove that \probClustSelect for $p \in (1, \infty)$ is W[1]-hard parameterized by $t + D$. It remains open whether \probClust for $p \in (1, \infty)$ or specifically for $p = 2$ is W[1]-hard or not, but our result shows that at least the approach we used to obtain an \classFPT algorithm in the $p \in (0, 1]$ case would not yield an \classFPT algorithm for $p \in (1, \infty)$.

%\begin{proposition}
%    If we have a cluster, and there are $a$ zeroes, $b$ ones and no other values in some coordinate, then this coordinate contributes $\frac{ab}{a + b}$ to the total cost.
%\end{proposition}
%\begin{proof}
%    The mean for this coordinate is $\frac{b}{a + b}$, and the cost
%    $$a\left(\frac{b}{a + b}\right)^2 + b \left(1 - \frac{b}{a + b}\right)^2 = \frac{ab^2 + ba^2}{(a + b)^2} = \frac{ab}{a + b}.\]
%\end{proof}
%
%\begin{proposition}
%    Assume we have a cluster of $\{0, 1\}$-valued vectors where in one coordinate there are $a_1$ zeroes and $b_1$ ones, and in the other $a_2$ zeroes and $b_2$ ones, and $b_1 \ge b_2$. Then if we consider a cluster where one of the ones is moved from the second coordinate to the first, its cost will be strictly lower.
%    \label{prop:moreones}
%\end{proposition}
%\begin{proof}
%    Denote $c = a_1 + b_1 = a_2 + b_2$. In the new cluster we have $a_1 - 1$ zeroes and $b_1 + 1$ ones in the first coordinate, $a_2 + 1$ zeroes and $b_2 - 1$ ones in the second coordinate. All other coordinates are unchanged, so the cost difference of the old cluster and the new one is
%    $$\frac{a_1b_1}{c} + \frac{a_2b_2}{c} - \frac{(a_1 - 1) (b_1 + 1)}{c} - \frac{(a_2 + 1) (b_2 - 1)}{c} = \frac{(b_1 - b_2) + (a_2 - a_1) + 2}{c} > 0,\]
%    as $b_1 \ge b_2$ and consequently $a_2 \ge a_1$.
%\end{proof}

First we state and prove two technical claims about the geometrical properties of clustering zero-one valued vectors in the $p \in (1, \infty)$ case.

\begin{claim}
    If we have a cluster of size $a + b$ where $a$ vectors have zero and $b$ vectors have one in the coordinate $i$, then the optimal centroid value in this coordinate is equal to
\[\frac{b^\frac{1}{p - 1}}{a^\frac{1}{p - 1} + b^\frac{1}{p - 1}},\]
    and the coordinate $i$ contributes 
\[\frac{ab}{\left(a^\frac{1}{p - 1} + b^\frac{1}{p - 1}\right)^{p - 1}},\]
    to the total cost.
    \label{claim:lpcost}
\end{claim}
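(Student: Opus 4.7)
My plan is to treat this as a one-variable optimization problem. Since in the $\distp$ cost the coordinates separate additively, the value of an optimal cluster centroid in coordinate $i$ depends only on the values of the cluster vectors in that coordinate, independently of all the others. So the task reduces to minimizing $f(z) = a\,z^p + b(1-z)^p$ over $z \in \mathbb{R}$, where $z$ represents the centroid's value in coordinate $i$.

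For $p > 1$, I would verify that $f$ is strictly convex on $(0,1)$, since $f''(z) = p(p-1)\bigl(a\,z^{p-2} + b(1-z)^{p-2}\bigr) > 0$ whenever $a,b > 0$ and $z \in (0,1)$. Hence the unique minimizer is found by setting $f'(z) = ap\,z^{p-1} - bp(1-z)^{p-1} = 0$, which gives $a\,z^{p-1} = b(1-z)^{p-1}$, i.e., $z/(1-z) = (b/a)^{1/(p-1)}$. Rearranging yields exactly
\[
z^* = \frac{b^{1/(p-1)}}{a^{1/(p-1)} + b^{1/(p-1)}},
\]
as claimed. The degenerate cases $a=0$ or $b=0$ are handled separately and trivially give optimal value $1$ or $0$ with zero cost, matching the formula under the convention $0^{1/(p-1)} = 0$.

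For the cost formula, I would substitute $\alpha = a^{1/(p-1)}$ and $\beta = b^{1/(p-1)}$, so that $z^* = \beta/(\alpha+\beta)$ and $1 - z^* = \alpha/(\alpha+\beta)$. The key algebraic observation is that $a = \alpha^{p-1}$ and $b = \beta^{p-1}$, which lets me factor $(\alpha + \beta)$ out of the numerator:
\[
f(z^*) = \frac{a\beta^p + b\alpha^p}{(\alpha+\beta)^p} = \frac{\alpha^{p-1}\beta^{p-1}(\alpha+\beta)}{(\alpha+\beta)^p} = \frac{\alpha^{p-1}\beta^{p-1}}{(\alpha+\beta)^{p-1}} = \frac{ab}{\bigl(a^{1/(p-1)} + b^{1/(p-1)}\bigr)^{p-1}},
\]
using $(\alpha\beta)^{p-1} = ab$. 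There is no real obstacle here; the argument is a routine calculus exercise, and the only nontrivial piece is recognising the factorisation that produces the compact closed form for the cost.
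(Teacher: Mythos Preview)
Your proposal is correct and follows essentially the same route as the paper: reduce to the one-variable problem $a|z|^p+b|1-z|^p$, solve the first-order condition $az^{p-1}=b(1-z)^{p-1}$, and substitute back. The only cosmetic differences are that the paper first argues the minimizer must lie in $[0,1]$ (which you should also note, since you write ``over $\mathbb{R}$'' but $z^p$ needs absolute values outside $[0,1]$) and then uses monotonicity of $f'$ rather than positivity of $f''$ to certify uniqueness; your $\alpha,\beta$ substitution for the cost computation is a bit cleaner than the paper's direct algebra but arrives at the same identity.
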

\begin{proof}
    Assume that the centroid value in the coordinate $i$ is equal to $c$, then the cost is
\[a c^p + b (1 - c)^p.\]
    It is easy to see that $c < 0$ is worse than $c = 0$, and similarly $c > 1$ is worse than $c = 1$, so we could restrict $c$ to $[0, 1]$. The derivative with respect to $c$ is
\[p (ac^{p - 1} - b (1 - c)^{p - 1}),\]
    as $p > 1$, the derivative is zero if and only if
    \begin{gather*}
        ac^{p - 1} = b(1 - c)^{p - 1}\\
        \left(\frac{c}{1 - c}\right)^{p - 1} = \frac{b}{a}\\
        \frac{c}{1 - c} = \left(\frac{b}{a}\right)^\frac{1}{p - 1}\\
        c = \frac{1}{1 + \left(\frac{a}{b}\right)^\frac{1}{p - 1}} = \frac{b^\frac{1}{p - 1}}{a^\frac{1}{p - 1} + b^\frac{1}{p - 1}}.
    \end{gather*}
    The derivative increases monotonically: when we increase $c$, $c^{p - 1}$ increases and $(1 - c)^{p - 1}$ decreases as $p - 1 > 0$. So the optimal value must be at its unique root defined by the expression above. Thus, the optimal cost is equal to
\[a \frac{b^\frac{p}{p - 1}}{\left(a^\frac{1}{p -1 } + b^\frac{1}{p - 1}\right)^p} + b \frac{a^\frac{p}{p - 1}}{\left(a^\frac{1}{p -1 } + b^\frac{1}{p - 1}\right)^p} = \frac{ab}{\left(a^\frac{1}{p -1 } + b^\frac{1}{p - 1}\right)^{p-1}}.\]
\end{proof}

Now we prove that it is optimal to have as many ones in the same coordinate as possible. For that, we calculate how much each one adds to the total cost depending on how many ones are there in a coordinate.

\begin{claim}
    Consider a cluster of $s$ zero-one valued vectors, denote as $f(b)$ the contribution of a coordinate in which there are $b$ ones and $s - b$ zeroes. The function $f(b) / b$ is strictly decreasing for $0 < b < s$.
    \label{claim:moreones}
\end{claim}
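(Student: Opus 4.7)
The plan is to reduce the claim to a transparent monotonicity statement after a single clean substitution. From Claim~\ref{claim:lpcost} applied with $a = s - b$ zeros and $b$ ones, the per-coordinate contribution is
\[
f(b) \;=\; \frac{(s-b)\,b}{\bigl((s-b)^{1/(p-1)} + b^{1/(p-1)}\bigr)^{p-1}},
\]
so dividing by $b$ gives
\[
\frac{f(b)}{b} \;=\; \frac{s-b}{\bigl((s-b)^{1/(p-1)} + b^{1/(p-1)}\bigr)^{p-1}}.
\]
This is the only algebraic step; everything after that is bookkeeping.

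Next I would set $q := 1/(p-1)$, which is strictly positive since $p > 1$. Raising $f(b)/b$ to the power $q$ collapses the nested exponent and yields
\[
\left(\frac{f(b)}{b}\right)^{q} \;=\; \frac{(s-b)^{q}}{(s-b)^{q} + b^{q}} \;=\; \frac{1}{1 + \bigl(b/(s-b)\bigr)^{q}}.
\]
Now the result is immediate: for $0 < b < s$ the ratio $b/(s-b)$ is strictly increasing in $b$, so $(b/(s-b))^q$ is strictly increasing (because $q > 0$), hence $1 + (b/(s-b))^q$ is strictly increasing, so its reciprocal $\bigl(f(b)/b\bigr)^{q}$ is strictly decreasing. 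Finally, since $x \mapsto x^{1/q}$ is strictly increasing on $(0,\infty)$, $f(b)/b$ itself is strictly decreasing on $0 < b < s$.

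There is essentially no obstacle: the only thing to notice is that pulling out the factor of $b$ from the numerator and then taking the $q$-th power converts the complicated expression into the manifestly monotone function $1/(1 + r^{q})$ of $r = b/(s-b)$. A direct calculus approach via $g'(b)$ works equally well and would give the derivative
\[
\frac{d}{db}\!\left(\frac{f(b)}{b}\right) \;=\; -\,s\,b^{\,q-1}\bigl((s-b)^{q} + b^{q}\bigr)^{-1/q - 1},
\]
which is visibly negative on $(0,s)$; I would mention this as an alternative verification but prefer the algebraic route because it avoids a somewhat tedious differentiation and makes the monotonicity structurally clear.
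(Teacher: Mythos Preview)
Your proof is correct and takes a genuinely different route from the paper. The paper normalizes by $s$, setting $x = a/s = (s-b)/s$, and then differentiates the resulting expression $\dfrac{x}{\bigl(x^{1/(p-1)} + (1-x)^{1/(p-1)}\bigr)^{p-1}}$ with respect to $x$, checking that the derivative is strictly positive on $(0,1)$; since $x$ decreases as $b$ increases, this gives the claim. Your argument instead raises $f(b)/b$ to the power $q = 1/(p-1)$, which collapses the expression to $1/\bigl(1 + (b/(s-b))^{q}\bigr)$ and makes the monotonicity immediate from the monotonicity of $b \mapsto b/(s-b)$, with no differentiation required. Your route is more elementary and structurally cleaner (and your alternative derivative formula $g'(b) = -s\,b^{q-1}\bigl((s-b)^{q} + b^{q}\bigr)^{-1/q - 1}$ is also correct and arguably tidier than the paper's), while the paper's approach is the standard calculus verification; both are short, but yours avoids the somewhat delicate differentiation of a nested power.
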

\begin{proof}
        Denote the number of zeroes in the coordinate as $a := s - b$. By Claim \ref{claim:lpcost}, the contribution of the coordinate per each one is
\[\frac{f(b)}{b} = \frac{ab}{\left(a^\frac{1}{p -1 } + b^\frac{1}{p - 1}\right)^{p - 1}} \cdot \frac{1}{b} = \frac{a/s}{\left((a/s)^\frac{1}{p -1 } + (1 - a/s)^\frac{1}{p - 1}\right)^{p - 1}}.\]
        Let us denote $x = a/s$, $0 < x < 1$, the derivative of the above with respect to $x$ is equal to
\[\frac{d}{dx}\left(\frac{x}{\left(x^\frac{1}{p - 1} + (1 - x)^\frac{1}{p - 1}\right)^{p - 1}}\right) = \left(x^\frac{1}{p-1} + (1 - x)^\frac{1}{p - 1}\right)^{-(p-2)} \cdot \left( (1 - x)^\frac{1}{p - 1} + x (1 -x)^{\frac{1}{p - 1} - 1}\right),\]
        which is strictly positive for $0 < x < 1$, hence proving the claim.
\end{proof}

Now we are ready to prove the hardness result, which was stated in the introduction as Theorem \ref{thm:lpselecthard1}. We recall the statement here.

\lpselecthard*
\begin{proof}
    We construct a reduction from \probMultiClique. Given a graph $G$ and a clique size $k$, we construct the following instance of \probClustSelect.

    We set $t$ to $\binom{k}{2}$, each input set of vectors represents a choice of an edge of the clique between two particular colors, so we number them by unordered pairs of indices from 1 to $k$. We set the dimension $d$ to $|V(G)|$, coordinates are numbered by vertices.

    The set $X_{i, j}$ consists of the following vectors: for each edge $\{u, v\} \in E(G)$ between a vertex $u$ of color $i$ and vertex $v$ of color $j$, we add a vector with $1$ in the coordinate $u$ and $1$ in the coordinate $v$, all other coordinates are set to zero. All vectors have weight one. Finally, we set
\[D = k \cdot \frac{(k - 1) \binom{k - 1}{2}}{\left((k - 1)^\frac{1}{p - 1} + \binom{k - 1}{2}^\frac{1}{p - 1}\right)^{p - 1}}.\]
    In Figure \ref{fig:lpselecthard1}, we show the intuition behind the reduction by considering a simple example.

        \begin{figure}[ht]
            \centering
            \begin{subfigure}{0.33\textwidth}
                \centering
                \begin{tikzpicture}[auto,  node distance=2cm,every loop/.style={},thick,main node/.style={circle,draw}, baseline={(0,1)}]
                    \node[main node, color=red] at (0,0) (1) {1};
                    \node[main node, color=blue] (2) [above right of=1] {2};
                    \node[main node,color=blue] (3) [above =0.2cm of 2] {3};
                    \node[main node,color=green] (4) [below right of=2] {4};
        \draw[line width=1pt]
        (1) -- (2);
    \draw[line width=1pt]
    (1) -- (3);
        \draw[line width=1pt]
        (1) -- (4);
        \draw[line width=1pt]
        (2) -- (4);
    \end{tikzpicture}
\end{subfigure}\hfill
            \begin{subfigure}{0.33\textwidth}
                \centering
    \begin{tabular}{c|c}
        &\begin{tabular}{c c c c}
            \textcolor{red}{1}&\textcolor{blue}{2}&\textcolor{blue}{3}&\textcolor{green}{4}
        \end{tabular}\\
        \hline
        $X_{1, 2}$& \begin{tabular}{c c c c}
            \textcolor{red}{1}&\textcolor{blue}{1}&0&0
            \\\textcolor{red}{1}&0&\textcolor{blue}{1}&0
        \end{tabular}\\
        \hline
        $X_{2, 3}$& \begin{tabular}{c c c c}
            0&\textcolor{blue}{1}&0&\textcolor{green}{1}\\
        \end{tabular}\\
        \hline
        $X_{1, 3}$& \begin{tabular}{c c c c}
            \textcolor{red}{1}&0&0&\textcolor{green}{1}\\
        \end{tabular}
    \end{tabular}
\end{subfigure}\hfill
            \begin{subfigure}{0.33\textwidth}
                \centering
    \begin{tabular}{c|c}
        $\{1, 2\}$& \begin{tabular}{c c c c}
            \textcolor{red}{1}&\textcolor{blue}{1}&0&0
        \end{tabular}\\
        \hline
        $\{2, 4\}$& \begin{tabular}{c c c c}
            0&\textcolor{blue}{1}&0&\textcolor{green}{1}\\
        \end{tabular}\\
        \hline
        $\{1, 4\}$& \begin{tabular}{c c c c}
            \textcolor{red}{1}&0&0&\textcolor{green}{1}\\
        \end{tabular}
        \\\midrule
            $c$&\begin{tabular}{c c c c}
                \textcolor{red}{$\frac{2}{3}$}&
                \textcolor{blue}{$\frac{2}{3}$}
                &0&\textcolor{green}{$\frac{2}{3}$}
            \end{tabular}
    \end{tabular}
\end{subfigure}
\caption{An example illustrating the reduction in Theorem \ref{thm:lpselecthard1}: an input graph $G$ colored in three colors, the vector sets produced by the reduction, and the resulting optimal cluster of cost 2, corresponding to the clique on $\{1, 2, 4\}$. Note that in the resulting cluster, each non-zero coordinate has the maximal number of ones, $(k - 1)$.}
            \label{fig:lpselecthard1}
        \end{figure}

            If there is a colorful $k$-clique in $G$ then we construct a solution to our instance of \probClustSelect. Assume the clique is formed by vertices $v_1$, $v_2$, \dots, $v_k$, for each $i \in \{1, \cdots, l\}$ vertex $v_i$ is of color $i$. From each $X_{i, j}$ choose the vector corresponding to the edge $\{v_i, v_j\} \in E(G)$. Among the chosen vectors, in every coordinate of the form $v_i$ there are $(k - 1)$ ones from edges to $v_i$ and $\binom{k}{2} - (k - 1) = \binom{k - 1}{2}$ zeroes. All other coordinates are zeroes in the chosen vectors, so they do not contribute anything to the total distance. By Claim \ref{claim:lpcost}, the total distance is
\[k \cdot \frac{(k - 1) \binom{k - 1}{2}}{\left((k - 1)^\frac{1}{p - 1} + \binom{k - 1}{2}^\frac{1}{p - 1}\right)^{p - 1}} = D.\]

        In the other direction, we prove that only the solution described above could have the cost $D$, all others have strictly larger cost. First notice that in any resulting cluster there are at most $(k - 1)$ ones in each coordinate, since for any vertex $v \in V(G)$, if we denote its color by $i$, only vectors from $(k - 1)$ sets of the form $X_{i, j}$ ($j \in \{1, \dots, k\} \setminus\{i\}$) have ones in the coordinate $v$, and we take one vector from each set by the definition of \probClustSelect.

        Each vector has exactly two ones, so in any resulting cluster there are $2\cdot \binom{k}{2}$ ones in total. By Claim \ref{claim:moreones}, any resulting cluster which does not have $(k - 1)$ ones in $k$ coordinates has strictly larger cost, since only coordinates with exactly $(k - 1)$ ones have the optimal cost per each one.

%    Now we show that any solution which does not have $(k - 1)$ ones in $k$ coordinates has larger cost --- take any such solution, choose two coordinates with the lowest positive number of ones in them (it could not be that one of them has $(k - 1)$ ones, as the total number of ones is divisible by $(k - 1)$) and move one of the ones from the coordinate with the smallest number of ones to the other. By Proposition \ref{prop:moreones} the cost became strictly smaller, and we still have at most $(k - 1)$ ones in each coordinate. Any solution cluster could be transformed to the one where all coordinates have $(k - 1)$ ones by repeating the operation above, each time the cost gets smaller, so having $(k - 1)$ ones in $k$ coordinates is strictly optimal. Note that in this argument it is fine that we do not necessarily get the proper solution to the \probClustSelect on each transformation --- we are only proving that the cost becomes smaller, and it depends only on how many ones are there in each coordinate.

        So, if the resulting cluster has the cost $D$, then there are $k$ coordinates such that in each of them exactly $(k - 1)$ of the chosen vectors have one. We show that in this case the original instance of \probClique has a $k$-clique. For any color $i \in \{1, \dots, k\}$ there are at most $(k - 1)$ ones in all coordinates indexed by vertices of color $i$ in the resulting cluster. So all of these ones are in the same coordinate $v_i$ for some $v_i$. We claim that the vertices $v_1$, \dots, $v_k$ form a clique. Consider vertices $v_i$ and $v_j$, we have taken some vector from $X_{i, j}$, and this vector must have added a one to the coordinates $v_i$ and $v_j$, then by construction the edge $\{v_i, v_j\}$ is in $E(G)$. 

%        Finally, we prove that the ETH bound hold. An $n^{o(t^{1/2} + D^{1/2})}$ algorithm for \probClustSelect together with the reduction above would give a $|V(G)|^{o(k)}$ algorithm for \probClique since $n = |E(G)|$, $t = \binom{k}{2}$ and $D = k \cdot \frac{(k - 1) \binom{k - 1}{2}}{\left((k - 1)^\frac{1}{p - 1} + \binom{k - 1}{2}^\frac{1}{p - 1}\right)^{p - 1}} \le k^2$.
\end{proof}

\fi

%!TEX root = Integer_clustering.tex
\section{Conclusion and open problems}\label{sec:OPEN}

In this paper, we presented an \classFPT algorithm for \probClust with $p \in (0, 1]$ parameterized by $D$. However, for the case $p \in (1, \infty)$ we were able only to show the \classW1-hardness of \probClustSelect. While  intractability of  \probClustSelect does not exclude that \probClust could be \classFPT with $p \in (1, \infty)$, it indicates that the
% approach for proving it  hardness for \probClustSelect shows that to establish parameterized tractability of \probClust  the
  proof of this (if it is true at all) would require an  approach  completely different from ours. 
  %than we used for 
 %case $p \in (0, 1]$. %, but does not in general rule out an \classFPT algorithm for \probClust with $p \in (1, \infty)$. 
  Thus an interesting and  very concrete  open question   concerns the parameterized  complexity of \probClust with $p \in (1, \infty)$ and parameter $D$. 
 
  Another open question is about the fine-grained complexity of \probClust when parameterized by $k + d$. For several distances, we know XP-algorithms: an $\Oh(n^{dk + 1})$ algorithm by Inaba et.~al.~\cite{inaba1994applications} for $p = 2$, as well as  trivial algorithms for $p \in [0, 1]$. For the case when the possible cluster centroids are given in the input, the matching lower bound is shown in~\cite{Cohen-AddadMRR18}. However, we are not aware of a lower bound complementing the algorithmic results in the case 
  when    any point in Euclidean space can serve as a   centroid.

\iffull
Finally, let us note that our \classW1-hardness reductions could  be easily adapted to obtain ETH-hardness results. Our reductions are from \probClique and, assuming ETH, there is no $n^{o(k)}$ algorithm for \probClique. In most of our results, the ETH lower bounds derived from our reductions, can be complemented by   matching upper bounds through a trivial algorithm for \probClustSelect in time $n^{\Oh(d)}$ or $n^{\Oh(t)}$ and, consequently, an algorithm for \probClust obtained by Theorem \ref{thm:genfpt}. However, the reduction in Theorem \ref{thm:lpselecthard1} excludes only a $(nd)^{o(t^{1/2} + D^{1/2})}$ algorithm for \probClustSelect with $p \in (1, \infty)$ under ETH. Both the trivial algorithm in time $n^{\Oh(t)}$ and the algorithm from Theorem \ref{thm:l2seldD1} in time $D^{\Oh(d)}$ (which could also be turned into a $d^{\Oh(D)}$-time algorithm) fail to match this lower bound. So, another open question is, whether there exists a better reduction or a subexponential algorithm could be obtained in this case. 
%, which would be interesting since seemingly none of our and previously known results achieve this kind of running time.
\fi

\bibliography{k-clustering}

\end{document}